\renewcommand\footnotetextcopyrightpermission[1]{} 
\algnewcommand{\LineComment}[1]{\State \(\triangleright\) #1}
\newcommand{\atlex}{ALEX\xspace}
\newcommand{\bptree}{B+Tree\xspace}
\newcommand{\bptrees}{B+Trees\xspace}
\newcommand{\modelbtree}{Model B+Tree\xspace}
\newcommand{\kraskali}{Learned Index\xspace}
\newcommand{\art}{ART\xspace}
\newcommand{\expfactor}{{c}}
\newcommand{\firstrev}[1]{\textcolor{black}{#1}}
\newcommand{\secondrev}[1]{\textcolor{black}{#1}}
\newcommand{\thirdrev}[1]{\textcolor{black}{#1}}
	\newcommand{\sigmod}[1]{{#1}}
	\newcommand{\sigmod}[1]{}
\begin{document}
\fancyhead{}
\title{ALEX: An Updatable Adaptive Learned Index}

\author{Jialin Ding{$^\dagger$}  \; Umar Farooq Minhas{$\ddagger$}  \; Jia Yu{$^\S$} \; Chi Wang{$\ddagger$}  \; Jaeyoung Do{$\ddagger$}  \; Yinan Li{$\ddagger$}  \; Hantian Zhang{$^\mp$}   \; Badrish Chandramouli{$\ddagger$}  \; Johannes Gehrke{$\ddagger$}  \; Donald Kossmann{$\ddagger$}  \; David Lomet{$\ddagger$}  \; Tim Kraska{$^\dagger$}
}
\affiliation{
	\institution{
		{$^\dag$}Massachusetts Institute of Technology \; {$^\ddag$}Microsoft Research  \; {$^\S$}Arizona State University\\ {$^\mp$}Georgia Institute of Technology \; 
	}
}
\thanks{* Work performed while at Microsoft Research.}

\renewcommand{\shortauthors}{J. Ding et al.}
\renewcommand{\authors}{Jialin Ding, Umar Farooq Minhas, Jia Yu, Chi Wang, Jaeyoung Do, Yinan Li, Hantian Zhang, Badrish Chandramouli, Johannes Gehrke, Donald Kossmann, David Lomet, Tim Kraska}

\begin{abstract}
Recent work on ``learned indexes'' has changed the way we look at the
decades-old field of DBMS indexing. The key idea is that indexes can
be thought of as ``models'' that predict the position of a key in a
dataset. Indexes can, thus, be learned.  The original work by Kraska
et al.\ shows that a learned index beats a \bptree by a factor of up
to three in search time and by an order of magnitude in memory
footprint. However, it is limited to static, read-only workloads.

\begin{sloppypar}
In this paper, we present a new learned index called \atlex which addresses
practical issues that arise when implementing
learned indexes \secondrev{for workloads that contain a mix of point lookups, short range queries, inserts, updates, and deletes}. \atlex effectively combines the core insights from
learned indexes with proven storage and indexing techniques to achieve high
performance and low memory footprint. On read-only workloads, \atlex
beats the learned index from Kraska
et al.\ by up to 2.2$\times$ on performance with up to
15$\times$ smaller index size. Across the spectrum of read-write
workloads, \atlex beats \bptrees by up to 4.1$\times$ while never
performing worse, with up to 2000$\times$ smaller index size.  We
believe \atlex presents a key step towards making learned indexes
practical for a broader class of database workloads with dynamic
updates.
\end{sloppypar}

\end{abstract}

%
%



\maketitle

\section{Introduction}

Recent work by Kraska et al.~\cite{kraska2018case}, which we will
refer to as the \kraskali, proposes to replace a standard database
index with a hierarchy of machine learning (ML) models. Given a key,
an intermediate node in the hierarchy is a model to predict the child
model to use, and a leaf node in this hierarchy is a model to predict
the location of the key in a densely packed array
(\cref{fig:li_architecture}).  The models for this \kraskali are
trained from the data. Their key insight is that using (even
simple) models that 
adapt to the data distribution to
make a ``good enough'' guess of a key's actual location significantly
improves performance.
However, their solution can only handle lookups on
read-only data, with no support for update operations. This critical
drawback makes the \kraskali unusable for dynamic, read-write
workloads, common in practice.

In this work, we start by asking ourselves the following research
question: \emph{Can we design a new high performance index for dynamic workloads that effectively combines the core insights from the
  \kraskali with proven storage \& indexing techniques \secondrev{to deliver great performance in both time and space}?}  Our answer
is
a new in-memory index structure called \atlex, a 
fully dynamic data structure that \secondrev{simultaneously provides efficient support for point lookups, short range
queries, inserts, updates, deletes, and bulk loading.
This mix of operations is commonplace in online transaction processing (OLTP) workloads~\cite{cooper2010benchmarking,tpcc,rocksdb} and is also supported by \bptrees~\cite{dbms}.}

Implementing writes with
high performance requires a careful design of the underlying data
structure that stores records. \cite{kraska2018case}
uses a sorted, densely packed array which works well for static
datasets but can result in high costs for shifting records if new
records are inserted. Furthermore, the prediction accuracy of
the models can deteriorate as the data distribution changes over time,
requiring
repeated retraining.  To address these challenges, we make the following technical contributions in this paper:\

\begin{itemize}
\item \textbf{Storage layout optimized for models:} Similar to a
  \bptree, \atlex 
  builds a tree, but allows
  different nodes to grow and shrink at different rates. To store
  records in a data node, \atlex uses an array with gaps, a
  \emph{Gapped Array}, which (1) amortizes the cost of shifting
  the keys for each insertion because gaps can absorb inserts, and (2)
  allows more accurate placement of data using \emph{model-based
    inserts} to ensure that records are located closely to the
  predicted position when possible.
  For efficient search, gaps are actually filled with adjacent keys.
  
\item \textbf{Search strategy optimized for models:} \atlex exploits
  model-based inserts combined with \emph{exponential search} starting
  from the predicted position. This always
  beats binary search when models are accurate.
  
\item \textbf{Keeping models accurate with dynamic data distributions
  and workloads:} \atlex provides robust performance
  even when the data distribution is skewed or
  dynamically changes after index initialization. \atlex
  achieves this by exploiting adaptive expansion, and node splitting
  mechanisms, paired with selective model retraining, which is
  triggered by intelligent policies based on simple cost models. Our
  cost models take the actual workload into account and thus can
  effectively respond to dynamic changes in the workload. \atlex
  achieves all the above benefits without needing to hand-tune parameters for each dataset or workload.
  
\item \textbf{Detailed evaluation:} We present the results of an
  extensive experimental analysis with real-life datasets and varying
  read-write workloads and compare against state of the art indexes
  that support range queries.
\end{itemize}


On read-only workloads, \atlex beats the \kraskali by up to 2.2$\times$ on performance with up to 15$\times$ smaller index size. Across the spectrum of read-write workloads, \atlex beats \bptree by up to 4.1$\times$ while never performing worse, with up to 2000$\times$ smaller index size. \atlex also beats an ML-enhanced \bptree and the memory-optimized Adaptive Radix Tree, scales to large data sizes, and is robust to data distribution shift.

In the remainder of this paper, we give background (\cref{sec:background}), present the architecture of \atlex (\cref{sec:atlex}), describe the operations on \atlex (\cref{sec:algorithms}), present an analysis of \atlex performance (\cref{sec:analysis}), present experimental results (\cref{sec:exp}), review related work (\cref{sec:related}), and conclude (\cref{sec:conclusion}).
\sigmod{\thirdrev{Supplemental details in appendices are available in an extended technical report~\cite{techreport}.}}


\section{Background}
\label{sec:background}

\subsection{Traditional \bptree Indexes}
\label{subsec:bptree_indexes}

\emph{\bptree} is a classic range index structure.
It is a height-balanced tree which
stores either the data (primary index) or pointers to the data
(secondary index) at the leaf level, in a sorted order to facilitate
range queries.

A \bptree lookup operation can be broken down into two
steps: (1) traverse to leaf, and (2) search within the leaf. Starting
at the root, traverse to leaf performs comparisons with the
keys stored in each node, and branches via stored pointers to the next
level.
When the tree is deep, the number of 
comparisons and branches can
be large, leading to many cache misses. Once traverse to leaf identifies the correct leaf page, typically a binary search is performed
to find the position of the key within the node, which might
incur additional cache misses.

The \bptree is a dynamic data structure that supports inserts,
updates, and deletes; is robust to data sizes and distributions; and is applicable in
many different scenarios, including in-memory and on-disk.
However, the generality of \bptree comes at a cost. In some cases knowledge of the data helps improve performance. As an extreme example, if the keys are consecutive integers, we can store the data in an array and perform lookup in ${\rm O}(1)$ time. A \bptree does not exploit such knowledge. Here, ``learning'' from the input data has an edge. 

\setlength{\textfloatsep}{1em}
\begin{figure}
	\includegraphics[width=0.8\columnwidth]{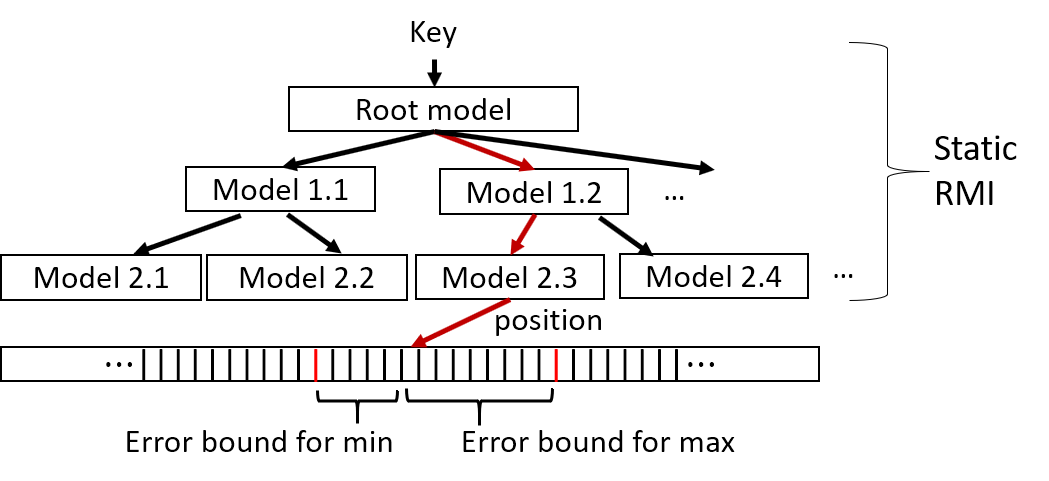}
	\vspace{-1.5em}
	\caption{Learned Index by Kraska et al.}
	\vspace{-0.5em}
	\label{fig:li_architecture}
\end{figure}

\subsection{The Case for Learned Indexes}
\label{subsec:learned_indexes}
Kraska et al.~\cite{kraska2018case} observed that \bptree indexes can
be thought of as models. Given a key, they predict the location of the
key within a sorted array (logically) at the leaf level. If indexes
are models, they can be learned using traditional ML
techniques
by learning the cumulative distribution function (CDF) of the input
data. The resulting \emph{\kraskali} is optimized for the
specific data distribution.

Another insight from Kraska et al.~is that a single ML model learned over
the entire data is not accurate enough because of the complexity of
the CDF. To overcome this, they introduce the \emph{recursive model
  index} (\emph{RMI})~\cite{kraska2018case}. RMI is a hierarchy
of models, with a \emph{static} depth of two or three, where a higher-level model picks the model at the next level, and so on, with the
leaf-level model making the final prediction for the position of the
key in the data structure (\cref{fig:li_architecture}). Logically, the RMI replaces the
internal \bptree nodes with models. The effect is that comparisons
and branches in internal \bptree nodes during traverse to leaf are
replaced by model inferences in a
\kraskali.

In~\cite{kraska2018case}, the keys are stored in an in-memory sorted
array. Given a key, the leaf-level model predicts the position
(array index) of the key. Since the model is not perfect, it could
make a wrong prediction. The insight is that if the leaf model
is accurate, a local search surrounding the predicted location
is faster than a binary search on the entire array. To
support local search, ~\cite{kraska2018case} keeps
\emph{min} and \emph{max} error bounds for each model in RMI and performs binary search within these bounds.

Last, each model in RMI can be a different type of model. Both
linear regression and neural network based models are considered
in~\cite{kraska2018case}. There is a trade-off between model accuracy
and model complexity.
The root of the RMI is tuned to be either a neural network or a linear regression, depending on which provides better performance,
while the simplicity and the speed of computation
for linear regression model is beneficial at the non-root
levels. A linear regression model can be represented as $y = \lfloor a*x + b \rfloor$,
where $x$ is the key and $y$ is the predicted position. A linear
regression model needs to store just two parameters $a$ and $b$, so storage overhead is low. The inference with a single
linear regression model requires only one 
multiplication, one addition and one rounding, which are fast to
execute on modern processors.

Unlike \bptree, which could have many internal levels, RMI uses two or
three levels. Also, the storage space required for models (two or
four 8-byte double values per model)
is much smaller than the storage space for internal nodes
in \bptree (which store keys and pointers). A \kraskali can
be an order of magnitude smaller in main memory storage (vs. internal \bptree nodes), while outperforming a \bptree in
lookup performance by a factor of up to three~\cite{kraska2018case}.

The main drawback of the \kraskali
is that it does not support any modifications, including inserts,
updates, or deletes. Let us demonstrate a na\"ive insertion strategy for such an index.
Given a key $k$ to insert, we first use the model to find the insertion position for $k$. Then we create a new array whose length is one plus the length of the
old array. Next, we copy the data from the old array to the new array, where the
elements on the right of the insertion position are
shifted to the right by one position. We insert $k$ at the insertion position of the new array. Finally, we update the models to reflect the change in the data distribution.
Such a strategy has a linear time complexity with respect to the data
size, which is unacceptable in
practice. Kraska et al. suggest building delta-indexes to handle inserts~\cite{kraska2018case}, which is complementary to our strategy.
In this paper, we describe an alternative data structure to make modifications in a learned index more efficient.

\section{\atlex Overview}
\label{sec:atlex}

The \atlex design (\cref{fig:atlex_design}) takes advantage of two key insights.  First, we
propose a careful space-time trade-off that not only leads to an
updatable data structure, but is also faster for lookups. To explore
this trade-off, \atlex supports a \emph{Gapped Array (GA)} layout for
the leaf nodes, which we present in
\cref{subsec:node_layout}. Second, the \kraskali supports
static RMI (SRMI) only, where the number of levels and the number of
models in each level is fixed at initialization. SRMI performs poorly on inserts if the data distribution is difficult to model. \atlex can be
updated dynamically and efficiently at runtime and uses linear cost models that
predict the latency of lookup and insert operations based on simple
statistics measured from an RMI. \atlex uses these cost models to
initialize the RMI structure and to
dynamically adapt the RMI structure based on the workload.

\atlex aims to
achieve the following goals w.r.t. the \bptree and
\kraskali. (1) Insert time should be competitive with \bptree, (2)
lookup time should be faster than \bptree and \kraskali, (3) index storage space should be smaller than
\bptree and \kraskali (4) data storage space (leaf level) should be comparable
to dynamic \bptree.  In general, data storage space will overshadow index
storage space, but the space benefit from smaller index storage
space is still important because it allows more
indexes to fit into the same memory budget.
The rest of this
section describes how our \atlex design achieves these goals.

\setlength{\textfloatsep}{1em}
\begin{figure}
    \includegraphics[width=0.8\columnwidth]{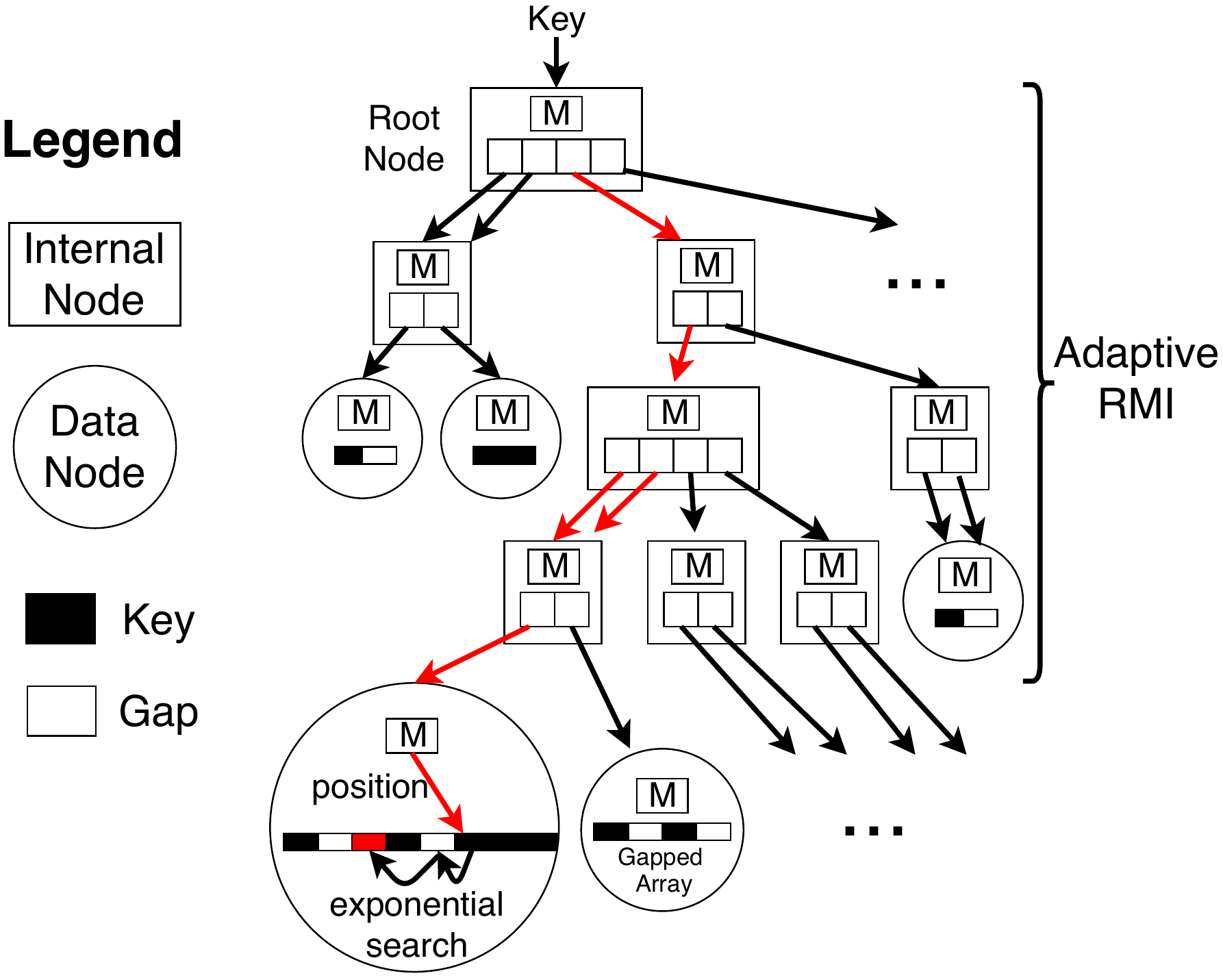}
    \vspace{-1.5em}
    \caption{\atlex Design}
    \vspace{-0.5em}
    \label{fig:atlex_design}
\end{figure}

\subsection{Design Overview}
\label{subsec:design_overview}
\atlex is an in-memory, updatable learned index.
\atlex has a number of differences from the \kraskali~\cite{kraska2018case}.

The first difference lies in the data structure
used to store the data at the leaf level. Like
\bptree, \atlex uses a \emph{node per leaf}. This allows the
individual nodes to expand and split more flexibly and also limits the number of shifts required during an
insert. In a typical \bptree, every leaf node stores
an array of keys and payloads and has ``free
space'' at the end of the array to absorb inserts. \atlex uses a
similar design but more carefully chooses how to use the free space. The insight is that by introducing gaps that are
strategically placed between elements of the array, we can achieve
faster insert and lookup times. As shown in
\cref{fig:atlex_design}, \atlex uses a Gapped Array (GA) layout
for each data node, which we describe in \cref{subsec:node_layout}.

The second difference is that \atlex uses exponential search to
find keys at the leaf level to correct mispredictions of the
RMI, as shown in \cref{fig:atlex_design}. In contrast,
\cite{kraska2018case} uses binary search within the error
bounds provided by the models. We experimentally verified that
exponential search without bounds is faster than binary search with
bounds (\cref{subsec:search_comparison}). This is because if the models are good, their prediction is
close enough to the correct position. Exponential search also removes the need to store error
bounds in the models of the RMI.

The third difference is that \atlex inserts keys into data nodes at the position where the
models predict that the key should be.  We call this {\em
  model-based insertion}.  In
contrast, the \kraskali produces an RMI on
an array of records without changing the position of records in the
array. Model-based insertion has better
search performance because it reduces model misprediction errors.

The fourth difference is that \atlex dynamically adjusts the
shape and height of the RMI depending on the workload. We
describe the design of initializing and dynamically growing the RMI structure in
\cref{sec:algorithms}.

The final difference is that \atlex has no parameters that
need to be re-tuned for each dataset or workload, unlike the \kraskali, in which the number of models must be tuned. \atlex automatically bulk loads and
adjusts the structure of RMI to achieve high performance by using a cost model.


\subsection{Node Layout}
\label{subsec:node_layout}

\subsubsection{Data Nodes}
Like a \bptree, the leaf nodes of \atlex store the data records and
thus are referred to as \emph{data nodes}, shown as circles in
\cref{fig:atlex_design}. A data node stores a linear regression
model (two double values for slope and intercept), which maps a key to
a position, and two Gapped Arrays (described below), one for
\emph{keys} and one for \emph{payloads}. We show only the
keys array in \cref{fig:atlex_design}. By default, both keys and
payloads are fixed-size. (Note that payloads could be records or pointers
to \emph{variable-sized} records, stored in separately allocated
spaces in memory). We also impose a \emph{max node size} for
practical reasons (see details in \cref{sec:algorithms}).

\atlex uses a \emph{Gapped Array} layout which uses model-based inserts to distribute extra space between the elements of the
array, thereby achieving faster inserts and lookups.
In contrast, \bptree places all the gaps at the end of the array.
Gapped Arrays fill
the gaps with the closest key to the right
of the gap, which helps maintain exponential search
performance. 
In order to efficiently skip gaps when scanning, each data node maintains a bitmap which tracks whether each location in the node is occupied by a key or is a gap.
The bitmap is fast to query and has low space overhead compared to the Gapped Array.
\firstrev{We compare Gapped Array to an existing gapped data structure called Packed Memory Array~\cite{bender2006adaptive} in \iftoggle{sigmod}{Appendix E}{\cref{sec:pma}}\sigmod{ in~\cite{techreport}}.}

\subsubsection{Internal Nodes}
We refer to all the nodes which are part of
the RMI structure as \emph{internal nodes}, shown as rectangles in \cref{fig:atlex_design}. Internal
nodes store a linear regression model and an array containing pointers
to children nodes. Like a \bptree, internal nodes direct traversals down
the tree, but unlike \bptree, internal nodes in \atlex use models to ``compute'' the location, in the pointers array, of
the next child pointer to follow.  Similar to data nodes, we impose a
\emph{max node size}.

\setlength{\textfloatsep}{1em}
\begin{figure}
    \subfloat{
        \includegraphics[width=0.6\columnwidth]{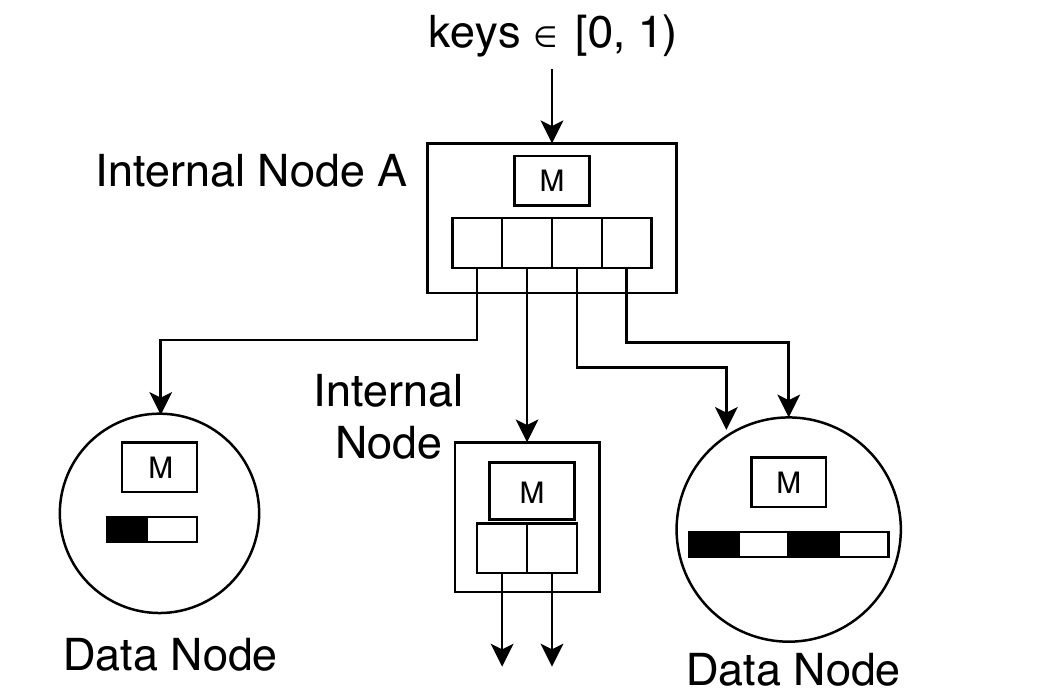}
    }
    ~
    \subfloat{
        \includegraphics[width=0.35\columnwidth,trim={10 5 5 5},clip]{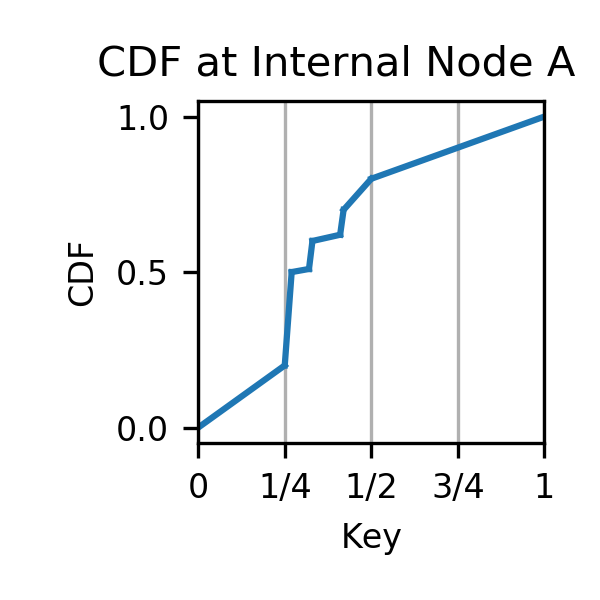}
        }
    \vspace{-1em}
    \caption{Internal nodes allow different resolutions in different parts of the key space $[0, 1)$.}
    \vspace{-0.5em}
    \label{fig:internal_nodes}
\end{figure}
The internal
nodes of \atlex serve a conceptually different purpose than those of
the \kraskali. \kraskali's internal nodes have models that
are fit to the data; an internal node with a
perfect model partitions keys equally to
its children, and an RMI with perfect internal nodes
results in an equal number of keys in each data node.  However, the goal
of the RMI structure is not to produce equally sized data nodes, but rather data nodes whose key distributions are
roughly linear, so that a linear model can be accurately
fit to its keys.

Therefore, the role of the internal nodes in \atlex 
is to provide a flexible way to partition the key space.
Suppose internal node A in \cref{fig:internal_nodes} covers the key space $[0, 1)$ and has four child pointers. A \kraskali would assign a node to each of these pointers, either all internal nodes or all data nodes.
However, \atlex more flexibly partitions the space.
Internal node A assigns the key spaces $[0, 1/4)$ and $[1/2,1)$ to data nodes (because the CDF in those spaces are linear), and assigns $[1/4,1/2)$ to another internal node (because the CDF is non-linear and the RMI requires more resolution into this key space).
As shown in the figure, multiple pointers can point to the same child node; this is useful for handling inserts (\cref{subsubsec:node_splits}). We restrict the number of pointers in every internal node to always be a power of 2. This allows nodes to split without retraining its subtree (\cref{subsubsec:node_splits}).

\section{\atlex Algorithms}
\label{sec:algorithms}
In this section, we describe the algorithms for lookups, inserts
(including how to dynamically grow the RMI and the data nodes),
deletes, out of bounds inserts, and bulk load.

\subsection{\thirdrev{Lookups and Range Queries}}
To look up a key,
starting at the root node of the RMI, we iteratively use the model to ``compute''
a location in the pointers array, and we follow the pointer to a child node at the next level, until we reach a data node. By construction, the internal node models have perfect accuracy, so there is no search
involved in the internal nodes. We use the model in the data node to predict the position of the
search key in the \emph{keys} array, doing exponential search if
needed to find the actual position of the key. If a key is found, we
read the corresponding value at the same position from the
\emph{payloads} array and return the record. Else, we return a null
record.
We visually show (using red arrows)
a lookup in \cref{fig:atlex_design}.
\thirdrev{A range query first performs a lookup to find the position and data node of the first key whose value is not less than the range's start value, then scans forward until reaching the range's end value, using the node's bitmap to skip over gaps and if necessary using pointers stored in the node to jump to the next data node.}

\subsection{Insert in non-full Data Node}
For the insert algorithm, the logic to reach the correct data node
(i.e., TraverseToLeaf) is the same as in the lookup algorithm
described above. In a non-full data node, to find the insertion
position for a new element, we use the model in the data node to
predict the insertion position. If the
predicted position is not correct (if inserting there would not maintain sorted order), we do exponential search to find
the correct insertion position.
If the insertion position is
a gap, then we insert the element into the gap and are done.
Else, we make a gap at the insertion position by
shifting the elements by one position in the direction of the closest
gap. We then insert the element into the newly created gap. The Gapped
Array achieves $O(\log{n})$ insertion time with high probability
\cite{bender2006insertion}.

\setlength{\textfloatsep}{1em}
\begin{figure}
    \includegraphics[width=0.45\columnwidth]{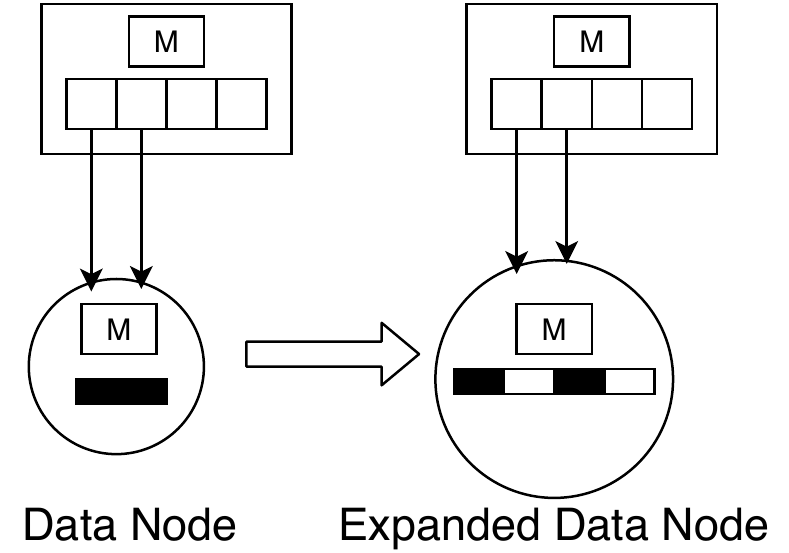}
    \vspace{-1em}
    \caption{Node Expansion}
    \vspace{-0.5em}
    \label{fig:expansion}
\end{figure}

\begin{figure*}[h!]
    \subfloat{
        \includegraphics[width=0.22\textwidth]{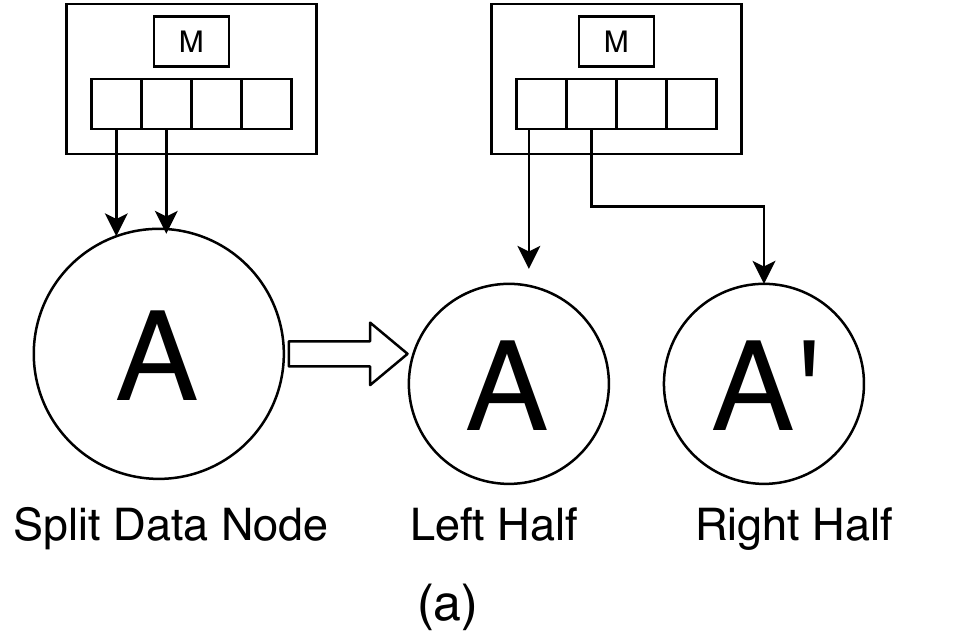}
        \label{fig:split_sideways_case1}
        }
    ~
    \subfloat{
        \includegraphics[width=0.45\textwidth]{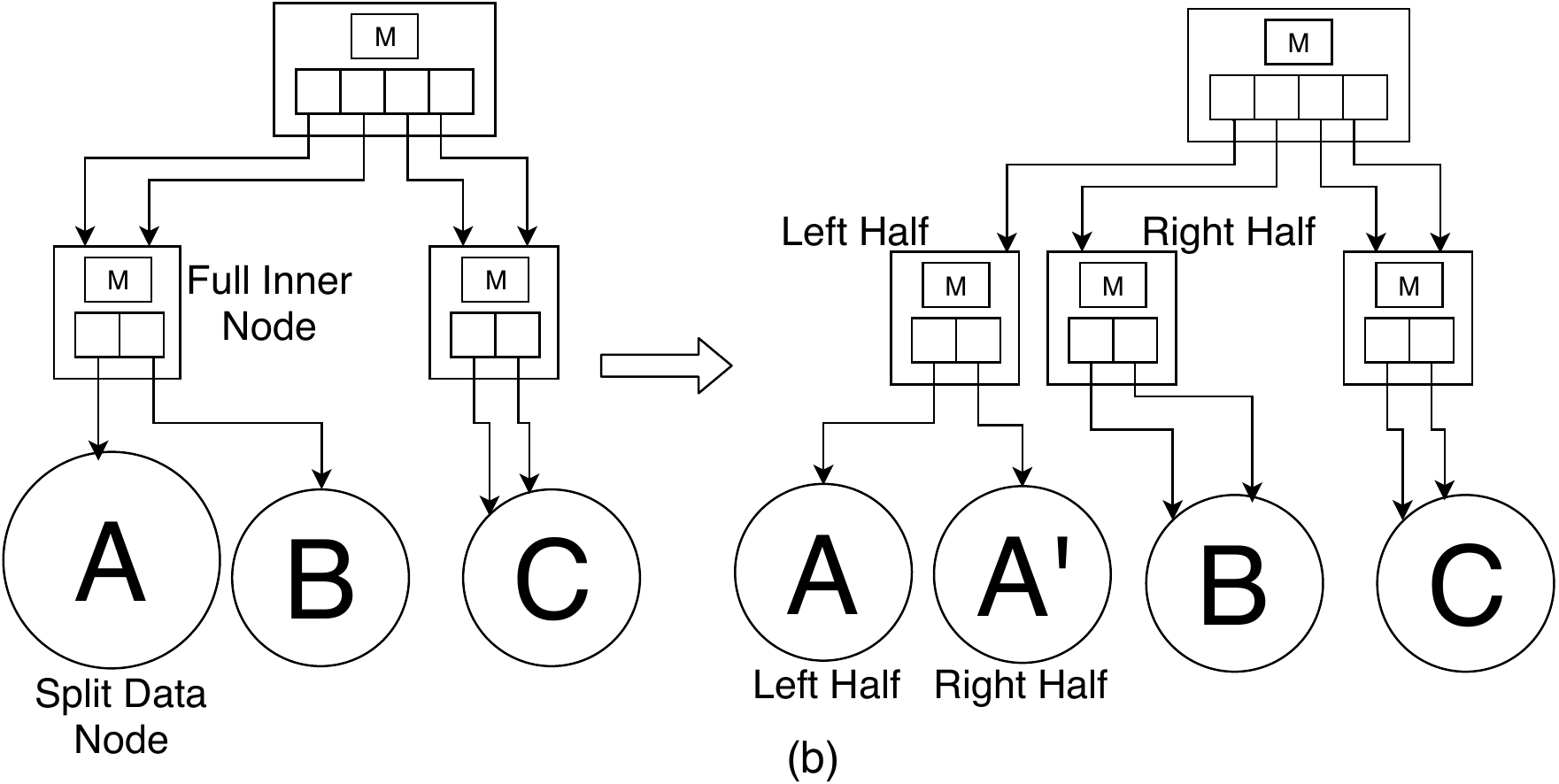}
        \label{fig:split_sideways_case3}
    }
    ~
    \subfloat{
        \includegraphics[width=0.22\textwidth]{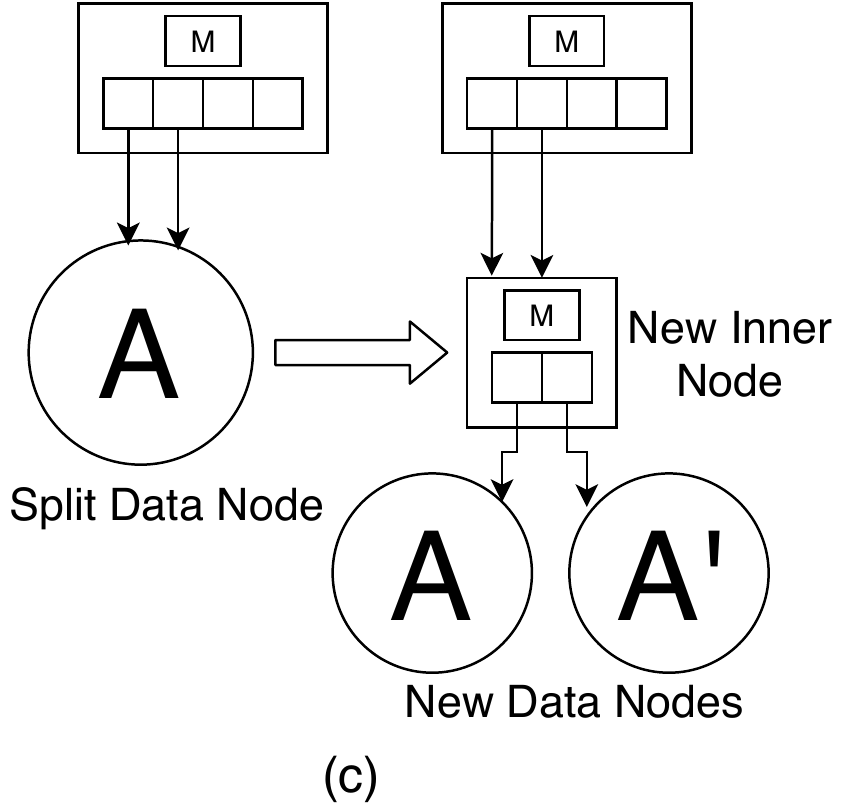}
        \label{fig:split_down}
    }
    \vspace{-1em}
    \caption{Node Splits}
    \vspace{-1em}
\end{figure*}


\subsection{Insert in full Data Node}
\label{subsec:insert_full}
When a data node becomes full, \atlex uses two mechanisms to create
more space: expansions and splits. \atlex relies on simple cost models to pick between different
mechanisms. Below, we first define the notion of ``fullness,'' then describe the expansion and split
mechanisms, and the cost models. We then present the insertion
algorithm that combines the mechanisms with the cost models.
\cref{alg:ga}
summarizes the procedure for inserting into a data node.

\subsubsection{Criteria for Node Fullness}
\label{subsubsec:node_fullness}
\thirdrev{\atlex does not wait for a data node to become 100\% full, because insert performance on a Gapped Array will deteriorate as the number of gaps decreases.}
We introduce lower and upper density limits on the Gapped Array: $d_l, d_u \in (0, 1]$, with the constraint that $d_l < d_u$.
Density is defined as the
fraction of positions that are filled by elements.  A node is full if the next insert results in
exceeding $d_u$.
By default we set $d_l=0.6$ and $d_u=0.8$ to achieve average data storage utilization of $0.7$, similar to \bptree~\cite{graefe2011modern}, which in our experience always
produces good results and did not need to be tuned.
\thirdrev{In contrast, \bptree nodes typically have $d_l=0.5$ and $d_u=1$.}
\cref{sec:analysis} presents a
theoretical analysis of how the density of the Gapped Array provides a way to trade off between the space and the lookup
performance for \atlex.




\subsubsection{Node Expansion Mechanism}
To expand a data node that contains $n$ keys, we allocate
a new larger Gapped Array with $n/d_l$ slots. We then either
scale or retrain the linear regression model, and then do model-based
inserts of all the elements in this new larger node using the scaled
or retrained model. After creation, the new data node is at the lower density limit $d_l$. \cref{fig:expansion} shows an example data node expansion where the Gapped Array inside the data node is expanded from two slots on the left to four slots on the right.

\subsubsection{Node Split Mechanism}
\label{subsubsec:node_splits}
To split a data node in two, we allocate the keys to two new data nodes,
such that each new node is responsible for half of the key space
of the original node. \atlex supports two ways to split a node:

(1) \emph{Splitting sideways} is conceptually similar to how a \bptree
uses splits. There are two cases: (a) If the parent internal node of the
split data node is not yet at the \emph{max node size}, we replace the parent node's
pointers to the split data node with pointers to
the two new data nodes. The parent internal node's pointers array might have redundant pointers to the
split data node (\cref{fig:internal_nodes}). If so, we give half of the redundant pointers to each
of the two new nodes. Else, we create a second pointer to the split data node by
doubling the size of the parent node's pointers array and making a redundant copy for every pointer, and then give one of the redundant pointers to each of
the two new nodes. \cref{fig:split_sideways_case1} shows an
example of a sideways split that does not require an expansion of the
parent internal node. (b) If the parent internal node has reached \emph{max
  node size}, then we can choose to split the parent internal node, as we
show in \cref{fig:split_sideways_case3}. Note that by
restricting all the internal node sizes to be powers of 2, we can always
split a node in a ``boundary preserving'' way, and thus require no
retraining of any models below the split internal node. Note
that the split can propagate all the way to the root node, just like
in a \bptree.

(2) \emph{Splitting down} converts a data node into an internal node
with two child data nodes, as we show in
\cref{fig:split_down}. The models in the two child
data nodes are trained on their respective keys. \bptree does not have an analogous splitting down mechanism.

\subsubsection{Cost Models}
\label{subsubsec:cost_models}
To make decisions about which mechanism to apply (expansion or various types of splits), \atlex relies
on simple linear cost models that predict average lookup time and insert
time based on two simple statistics tracked at each data node: (a)
average number of exponential search iterations, and (b) average
number of shifts for inserts. Lookup performance is directly
correlated with (a) while insert performance is directly correlated
with (a) and (b) (since an insert first needs to do a lookup to find
the correct insertion position). These \emph{intra-node} cost models predict the time to perform operations within a data node.

These two statistics are not known when creating a data node.  To
find the \emph{expected cost} of a new data node, we compute the expected value of these statistics under the
assumption that lookups are done uniformly on the existing keys, and
inserts are done according to the existing key
distribution. Specifically, (a) is computed as the average base-2 logarithm of model prediction error for all
keys; (b) is computed as the average distance
to the closest gap in the Gapped Array for all existing keys.  These
expected values can be computed without creating the data node.  If the data
node is created using a subset of keys from an existing data node, we
can use the empirical ratio of lookups
vs. inserts to weight the relative importance of the two statistics
for computing the expected cost.

In addition to the intra-node cost model, \atlex uses a \emph{TraverseToLeaf} cost model to
predict the time for traversing from the root node to a data node. The TraverseToLeaf cost model uses two statistics: (1) the depth of the data node being traversed
to, and (2) the total
size (in bytes) of all inner nodes and data node metadata (i.e.,
everything except for the keys and payloads). These statistics
capture the cost of traversal: deeper data
nodes require more pointer chases to find, and larger size will
decrease CPU cache locality, which slows down the traversal to a
data node.
\thirdrev{We provide more details about the cost models and show their low usage overhead in \iftoggle{sigmod}{Appendix D}{\cref{sec:cost_drilldown}}\sigmod{ in~\cite{techreport}}.
}
  
\subsubsection{Insertion Algorithm}
\label{subsubsec:insertion}


As lookups and inserts are done on the data node, we count the
number of exponential search iterations and shifts per insert. From
these statistics, we compute the \emph{empirical cost} of the data node using the intra-node cost model. Once the data node is full, we compare
the expected cost (computed at node creation time) to the empirical
cost. If they do not deviate significantly, then we conclude that the model is still accurate, and we perform node expansion (if the size after expansion is less
than the \emph{max node size}), scaling the model instead of retraining.  The
models in the internal nodes of the RMI are not retrained or rescaled.
\thirdrev{We define significant \emph{cost deviation} as occurring when the empirical cost is more than 50\% higher than the expected cost. In our experience, this cost deviation threshold of 50\% always produces good results and did not need to be tuned.}

Otherwise, if the empirical cost has deviated from the expected cost, we must either (i) expand the data
node and retrain the model, (ii) split the data
node sideways, or (iii) split the data node downwards.
We select the action that results in lowest expected cost, according to our intra-node cost model.
For simplicity, \atlex always splits a data node in
two. The data node could conceptually split into any power of 2, but deciding
the optimal fanout can be time-consuming, and we experimentally verified that a fanout of 2 is best according to the cost model in most cases.


\subsubsection{Why would empirical cost deviate from expected cost?} This often happens when the distribution of keys that are
inserted does not follow the distribution of existing keys, which
results in the model becoming inaccurate. An inaccurate model
may lead to long contiguous regions without any gaps. Inserting into these \emph{fully-packed regions} requires shifting up to half of the elements within it to
create a gap, which in the worst case takes $O(n)$ time. Performance
may also degrade simply due to random noise as the node grows larger
or due to changing access patterns for lookups.



\setlength{\textfloatsep}{1em}
\begin{algorithm}[t]
\caption{{\em Gapped Array} Insertion}
\small
\begin{algorithmic}[1]
\State {\textbf{struct} Node $\{$ keys[] (Gapped Array); num\_keys; $d_u, d_l$;
  \hskip\algorithmicindent model: key$\rightarrow [0,$ keys.size); $\}$}

\Procedure{Insert}{$key$}
   \If {num\_keys / keys.size >= $d_u$}
   \If {expected cost $\approx$ empirical cost}
	\State {Expand(retrain=False)}
  \Else
  \State {Action with lowest cost /* described in Sec. \ref{subsubsec:insertion} */}
  \EndIf
   \EndIf
   \State {predicted\_pos = model.predict(key)}
   \State {/* check for sorted order */}
   \State {insert\_pos = CorrectInsertPosition(predicted\_pos)}
   \If { keys[insert\_pos] is occupied }
	\State {MakeGap(insert\_pos) /* described in text */}
   \EndIf
   \State {keys[insert\_pos] = key}
   \State {num\_keys++}
\EndProcedure
\Procedure{Expand($retrain$)}{}
   \State {expanded\_size = num\_keys * 1/$d_l$}
   \State {/* allocate a new expanded array */}   
   \State {expanded\_keys = array(size=expanded\_size)}
   \If {retrain == True} 
   \State {model = /* train linear model on keys */}
   \Else
   \State {/* scale existing model to expanded array */}   
   \State {model *= expanded\_size / keys.size}
   \EndIf
   \For{key : keys} \State{ModelBasedInsert(key)} \EndFor
   \State {keys = expanded\_keys}
\EndProcedure

\Procedure{ModelBasedInsert}{$key$}
    \State {insert\_pos = model.predict(key)}
    \If {keys[insert\_pos] is occupied}
       \State {insert\_pos = first gap to right of predicted\_pos}
    \EndIf
    \State {keys[insert\_pos] = key}
\EndProcedure

\end{algorithmic}
\label{alg:ga}
\end{algorithm}




\subsection{Delete, update, and other operations}
To delete a key, we do a lookup to find the location of the key, and
then remove it and its payload. Deletes do not shift any existing keys, so deletion is a strictly simpler operation than inserts and does not cause model accuracy
to degrade.
If a data node hits the lower density limit $d_l$
due to deletions, then we contract the
data node (i.e., the opposite of expanding the data node) in order to avoid low space utilization. Additionally, we can use intra-node cost models to determine that two data
nodes should merge together and potentially grow upwards, locally decreasing the RMI depth by 1. However,
for simplicity we do not implement these merging operations.

Updates that modify the key are implemented by
combining an insert and a delete. Updates that
only modify the payload will look up the key
and write the new value into the payload.
\thirdrev{Like \bptrees, we can merge two \atlex indexes or find the difference between two \atlex indexes by iterating over their sorted keys in tandem and bulk loading a new \atlex index.} 

\subsection{Handling out of bounds inserts}
A key that is lower or higher than the existing key
space would be inserted into the the left-most or
right-most data node, respectively. A series of out-of-bounds inserts, such
as an append-only insert workload, would result in poor
performance because that data node has no mechanism to split the
out-of-bounds key space. Therefore, \atlex has two ways to
smoothly handle out-of-bounds inserts. Assume that the
out-of-bounds inserts are to the right (e.g., inserted keys are
increasing); we apply analogous strategies when inserts are to the
left.

\setlength{\textfloatsep}{1em}
\begin{figure}
    \includegraphics[width=0.82\columnwidth]{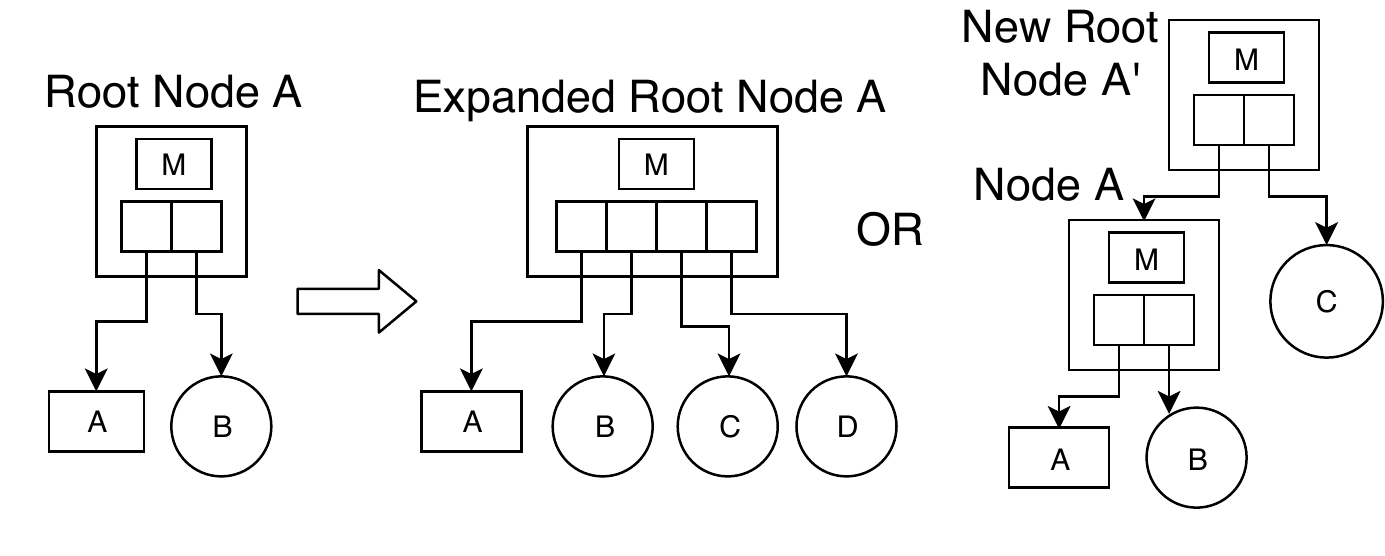}
    \vspace{-1.5em}
    \caption{Splitting the root}
    \label{fig:split_root}
\end{figure}

First, when an insert that is outside the existing key space is
detected, \atlex will \emph{expand the root node}, thereby expanding the key space, shown in \cref{fig:split_root}. We
expand the size of the child pointers array to the right. Existing
pointers to existing children are not modified. A new data node is
created for every new slot in the expanded pointers array. In case
this expansion would result in the root node exceeding the max node
size, \atlex will create a new root node. The first child pointer of
the new root node will point to the old root node, and a new data node
is created for every other pointer slot of the new root node. At
the end of this process, the out-of-bounds key will fall into one of
the newly created data nodes.

Second, the right-most data node of \atlex detects append-only
insertion behavior by maintaining the value of the maximum key in the
node and keeping a counter for how many times an insert exceeds that
maximum value. If most inserts exceed the maximum value, that implies append-only behavior, so the data node expands to the right without doing model-based
re-insertion; the expanded space is kept initially empty in anticipation of more
append-like inserts.

\subsection{Bulk Load}
\label{subsec:bulk_load}
\atlex supports a bulk load operation,
which is used in practice to index large amounts of data at
initialization or for rebuilding an index. 
Our goal is to find an RMI structure with minimum cost, defined as the expected average time to do an operation (i.e., lookup or insert) on this RMI.
Any \atlex operation is composed of TraverseToLeaf to the data node followed by an intra-node operation, so RMI cost is modeled by combining the TraverseToLeaf and intra-node cost models.



\subsubsection{Bulk Load Algorithm}
Using the cost models, we grow an RMI downwards greedily, starting
from the root node. At each node, we
independently make a decision about whether the node should be a data
node or an internal node, and in the latter case, what the fanout should
be. The fanout must be a power of 2, and child nodes will equally
divide the key space of the current node. Note that we can make
this decision locally for each node because we
use linear cost models, so decisions will
have a purely additive effect on the overall cost of the
RMI. If
we decide the node should be an internal node, we recurse on each of its
child nodes. This continues until all the data is loaded in \atlex.

\setlength{\textfloatsep}{1em}
\begin{figure}
    \includegraphics[width=2.5in]{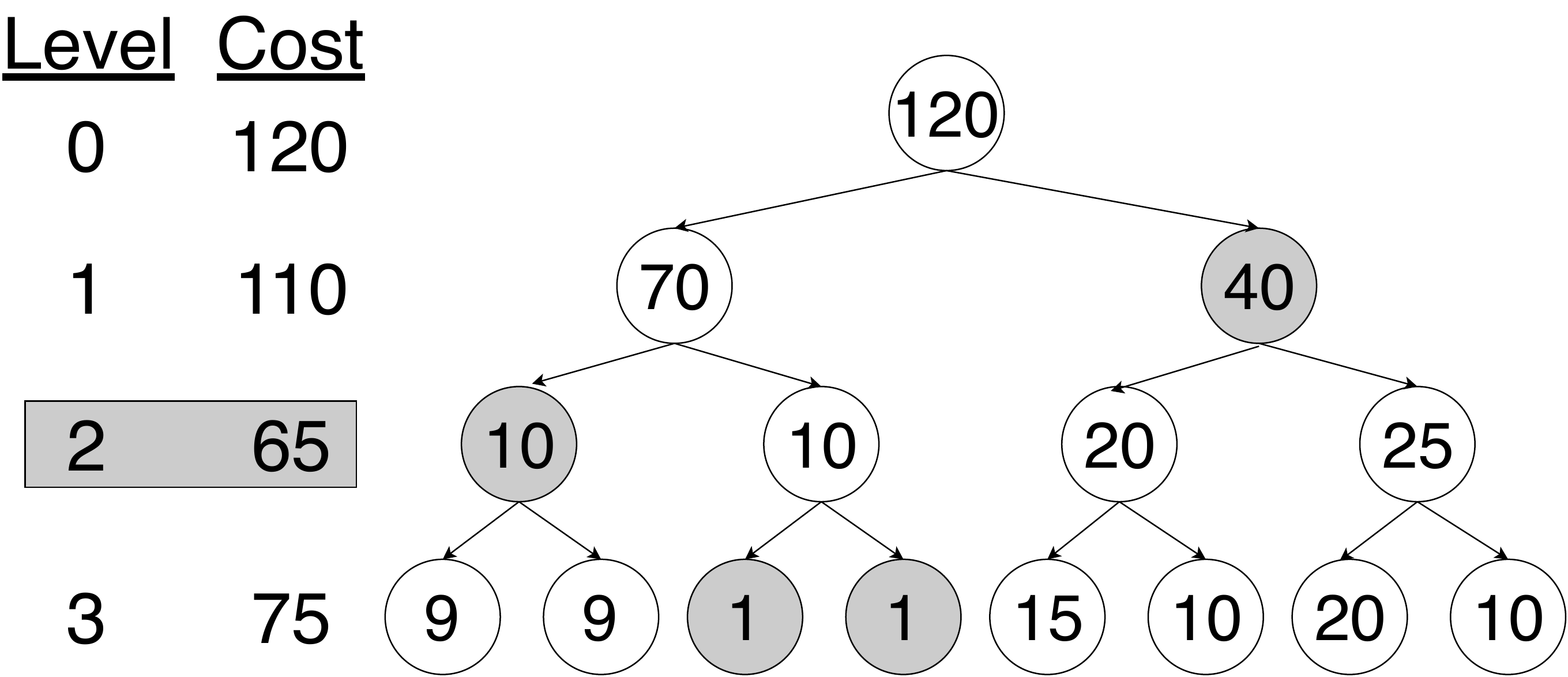}
    \vspace{-1em}
    \caption{Fanout Tree}
    \label{fig:fanout_tree}
\end{figure}

\subsubsection{The Fanout Tree}
As we grow the RMI, the main challenge is to determine the best
fanout at each node. We introduce the concept of
a \emph{fanout tree} (FT), which is a complete binary tree.
An FT will help decide the fanout for a single RMI node;
in our bulk loading algorithm, we
construct an FT each time we want to
decide the best fanout for an RMI node. A fanout of 1 means that the RMI node should be a data node.

\cref{fig:fanout_tree} shows an example FT. Each FT node represents a possible child of the RMI node. If the key space of the RMI node is $[0, 1)$, then the $i$-th FT node on a level with $n$ children represents a child RMI node with key space $[i/n, (i+1)/n)$. Each FT node is associated with the expected cost of constructing a data node over its key space, as predicted by the intra-node cost models. Our goal is to find a set of FT nodes that cover the entire key space of the RMI node with minimum overall cost. The overall cost of a covering set is the sum of the costs of its FT nodes, as well as the TraverseToLeaf cost due to model size (e.g., going a level deeper in the FT means the RMI node must have twice as many pointers). This covering set determines the optimal fanout of the RMI node (i.e., the number of child pointers) as well as the optimal way to allocate child pointers.

We use the following method to find a low-cost covering set: (1) Starting from the FT root, grow
entire levels of the FT at a time, and compute the cost of
using each level as the covering set. Continue doing so until the
costs of each successive level start to increase. In \cref{fig:fanout_tree}, we find that level 2 has the lowest combined cost, and we do not keep growing after level 3. In concept, a deeper level might have lower cost, but computing the cost for each FT node is expensive. (2) Starting from
the level of the FT with lowest combined cost, we start merging or
splitting FT nodes locally. If the cost of two adjacent
FT nodes is higher than the cost of its parent, then we merge
(e.g., the nodes with cost 20 and 25 are merged to one with cost 40); this might
happen when the two nodes have very few keys, or when their
distributions are similar. In the other direction, if the cost of a
FT node is higher than the cost of its two children, we split
the FT node (e.g., the node with cost 10 is split into two nodes each with cost 1); this might happen when the two halves of the key
space have different distributions. We continue with this
process of merging and splitting adjacent nodes locally until it is no
longer possible. We return the resulting covering set
of FT nodes.


\section{Analysis of \atlex}
\label{sec:analysis}
In this section, we provide bounds on the RMI depth and complexity analysis. Bounds on the performance of model-based search are found in \iftoggle{sigmod}{Appendix F}{\cref{sec:model_based_search_analysis}}\sigmod{ in~\cite{techreport}}.

\subsection{Bound on RMI depth}
In this section we present a worst-case bound on maximum RMI depth and describe how to achieve it.
Note that the goal of \atlex is to maximize performance, not to minimize tree depth; though the two are correlated, the latter is simply a proxy for the former (e.g., depth is one input to our cost models).
Therefore, this analysis is useful for gaining intuition about RMI depth, but does not reflect worst-case guarantees in practice.

Let $m$ be the maximum node size, defined in number of slots (in the pointers array for internal nodes, in the Gapped Array for data nodes). We constrain node size to be a power of 2: $m=2^k$. Internal nodes can have up to $m$ child pointers, and data nodes must contain no more than $md_u$ keys.
Let all keys to be indexed fall within the key space $s$.
Let $p$ be the minimum number of partitions such that when the key space $s$ is divided into $p$ partitions of equal width, every partition contains no more than $md_u$ keys.
Define the root node depth as 0.
\begin{theorem}\label{th:depth}
    We can construct an RMI that satisfies the max node size and upper density limit constraints whose depth is no larger than $\lceil\log_m{p}\rceil$---we call this the maximal depth. Furthermore, we can maintain maximal depth under inserts. (Note that $p$ might change under inserts.)
\end{theorem}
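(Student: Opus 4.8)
The plan is to prove the two claims separately: first the existence of a bulk-loaded RMI of depth $\lceil\log_m p\rceil$, then its preservation under inserts. For existence I would start from the partition guaranteed by the definition of $p$: dividing $s$ into $p$ equal-width cells leaves every cell with at most $md_u$ keys. Setting $d=\lceil\log_m p\rceil$ so that $m^{d}\ge p$, I would build the complete $m$-ary RMI of depth $d$ in which every internal node has fanout exactly $m$ and splits its key space into $m$ equal sub-intervals, as in the bulk-load construction. The $m^{d}$ leaves are then the cells of the uniform grid of width $|s|/m^{d}$. Internal nodes use $m$ pointers (within the max node size), so it remains only to check that each leaf cell holds at most $md_u$ keys, which makes every leaf a legal data node; the depth is $d$ by construction, and we call $\lceil\log_m p\rceil$ the maximal depth.

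The main obstacle is exactly this last check --- the refinement step from the $p$-grid to the $m^{d}$-grid. The clean argument (refining a light partition keeps it light, since a finer cell sits inside a coarser one) applies only when the finer grid refines the coarser one, i.e. when $p \mid m^{d}$. Because $m=2^{k}$, this holds only when $p$ is itself a power of two; for general $p$ the grid of $m^{d}\ge p$ equal cells need not align with the $p$-grid, so a single $m^{d}$-cell can straddle a $p$-boundary and accumulate more than $md_u$ keys. I would resolve this by reading $p$ through a stronger, sliding-window lens: take $p$ as the least integer for which \emph{every} window of width $|s|/p$ (not merely the $p$ aligned windows) is light. Then each leaf cell, of width $|s|/m^{d}\le |s|/p$, is automatically light and the construction goes through. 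This strengthening is consistent with the authors' caveat that the bound is meant as intuition rather than a tight worst-case guarantee; without it the stated bound is not achievable in the worst case (e.g. two nearby keys can be forced into a common dyadic cell at every depth below the one actually required).

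For preservation under inserts I would maintain the invariant that the tree is a boundary-aligned $m$-ary trie over $s$ whose data nodes are light grid cells and whose depth equals $\lceil\log_m p\rceil$ for the current $p$. An insert that overflows a data node is absorbed by \emph{splitting down}: the node becomes an internal node of fanout $m$ over its equal sub-cells, raising that subtree's depth by one. Because node sizes are powers of two, this split is boundary-preserving and needs no retraining below it, so the trie invariant is maintained locally. I would then argue inductively that a split-down fires at a given leaf exactly when the locally required resolution crosses the next power-of-$m$ threshold --- precisely when $\lceil\log_m p\rceil$ increments --- so the global depth tracks $\lceil\log_m p\rceil$ as $p$ grows. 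Out-of-bounds inserts, which enlarge $s$ and may change $p$, are handled by root expansion or creating a new root, the analogue of a top-level split that again adds at most one level. The secondary difficulty here is the bookkeeping tying a purely local split decision to the global quantity $\lceil\log_m p\rceil$: one must show that a single insert raises $p$ by a bounded amount and triggers at most the depth increase the bound permits, rather than forcing a deeper cascade.
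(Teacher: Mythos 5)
Your overall route is the paper's: build the dense regions' data nodes at the end of a chain of fanout-$m$ internal nodes, so that $\lceil\log_m p\rceil$ equal subdivisions shrink the key space from $|s|$ to at most $|s|/p$, and thereafter split downward only when a cell at max node size and max density overflows. The genuine difference is that you surface an alignment gap that the paper's proof steps over silently: it assumes the width-$(\le|s|/p)$ cell reached after $\lceil\log_m p\rceil$ levels holds at most $md_u$ keys, but that cell belongs to the power-of-two-aligned $m^{\lceil\log_m p\rceil}$-grid, which for general $p$ does not refine the aligned $p$-grid defining $p$; a leaf cell can straddle a $p$-boundary and collect up to $2md_u$ keys. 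Your instinct that the stated bound then fails is correct --- with $m=2$, $md_u=1$ and keys at $0.3$ and $0.4$ one gets $p=3$ yet no dyadic boundary separates the keys until depth $3>\lceil\log_2 3\rceil$ --- and the same defect infects the maintenance direction, since an overfull aligned $m^D$-cell shows only $p\neq m^D$, not $p>m^D$. Your sliding-window redefinition of $p$ (every window of width $|s|/p$ is light) is the right repair and fixes both halves at once: every trie cell of width at most $|s|/p$ is light, and an overfull cell of width $|s|/m^{d'}$ directly witnesses $p>m^{d'}$; note that this last observation also resolves the cascade worry you flag, since each level of a split-down cascade carries its own witness and the cascade therefore stops by depth $\lceil\log_m p\rceil$ for the new $p$. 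Two small caveats. First, you split down immediately on overflow, whereas the paper expands, then exhausts sideways splits, and splits down only as a last resort; your shortcut is sound only because your bulk-loaded trie already saturates every internal node at fanout $m$, leaving no sideways capacity --- that should be said explicitly, since for a general maximal-depth RMI the sideways phase is what keeps premature overflows from deepening the tree. Second, the claim that a split-down fires ``precisely when $\lceil\log_m p\rceil$ increments'' is stronger than needed and not literally true for leaves sitting above the current maximum depth; the per-level inequality $d'+1\le\lceil\log_m p\rceil$ is the statement to prove, and it follows from the witness above.
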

In other words, the depth of the RMI is bounded by the density of the densest subregion of $s$. In contrast, \bptrees bound depth as a function of the number of keys. \cref{th:depth} can also be applied to a subspace within $s$, which would correspond to some subtree within the RMI.

\begin{proof}
Constructing an RMI with maximal depth is straightforward. The densest subregion, which spans a key space of size $|s|/p$, is allocated to a data node. The traversal path from the root to this densest region is composed of internal nodes, each with $m$ child pointers.
It takes $\lceil\log_m{p}\rceil$ internal nodes to narrow the key space size from $|s|$ to $|s|/p$. To minimize depth in other subtrees of the RMI, we apply this construction mechanism recursively to the remaining parts of the space $s$.

Starting from an RMI that satisfies maximal depth, we maintain maximal depth using the mechanisms in \cref{subsec:insert_full} under the following policy:
(1) Data nodes expand until they reach max node size.
(2) When a data node must split due to max node size, it splits sideways to maintain current depth (potentially propagating the split up to some ancestor internal node).
(3) When splitting sideways is no longer possible (all ancestor nodes are at max node size), split downwards.
By following this policy, RMI only splits downward when $p$ grows by a factor of $m$, thereby maintaining maximal depth.
\end{proof}

\subsection{Complexity analysis}
Here we provide complexity of lookups and inserts, as well as the mechanisms from \cref{subsec:insert_full}.
Both lookups and inserts do TraverseToLeaf in $\lceil\log_m{p}\rceil$ time. Within the data node, exponential search for lookups is bounded in the worst case by $O(\log{m})$. In the best case, the data node model predicts the key's position perfectly, and lookup takes $O(1)$ time. We show in the next sub-section that we can reduce exponential search time according to a space-time trade-off.

Inserts into a non-full node are composed of a lookup, potentially followed by shifts to introduce a gap for the new key. This is bounded in the worst case by $O(m)$, but since Gapped Array achieves $O(\log{m})$ shifts per insert with high probability~\cite{bender2006insertion}, we expect $O(\log{m})$ complexity in most cases. In the best case, the predicted insertion position is correct and is a gap, and we place the key exactly where the model predicts for insert complexity of $O(1)$; furthermore, a later model-based lookup
will result in a direct hit in $O(1)$.

There are three important mechanisms in \cref{subsec:insert_full}, whose costs are defined by how many elements must be copied: (1) Expansion of a data node, whose cost is bounded by $O(m)$. (2) Splitting downwards into two nodes, whose cost is bounded by $O(m)$. (3) Splitting sideways into two nodes and propagating upwards in the path to some ancestor node, whose cost is bounded by $O(m\lceil\log_m{p}\rceil)$ because every internal node on this path must also split.
As a result, the worst-case performance for insert into a full node is $O(m\lceil\log_m{p}\rceil)$.

\section{Evaluation}\label{sec:exp}


We compare \atlex with the \kraskali, \bptree, a model-enhanced \bptree, and Adaptive Radix Tree (\art), using a variety of datasets and workloads.
This evaluation demonstrates that:

\begin{itemize}
\item On read-only workloads, \atlex achieves up to 4.1$\times$, 2.2$\times$, 2.9$\times$, 3.0$\times$ higher throughput and 800$\times$, 15$\times$, 160$\times$, 8000$\times$ smaller index size than the \bptree, \kraskali, \modelbtree, and \art, respectively.

\item On read-write workloads, \atlex achieves up to 4.0$\times$, 2.7$\times$, 2.7$\times$ higher throughput and 2000$\times$, 475$\times$, 36000$\times$ smaller index size than the \bptree, \modelbtree, and \art, respectively.



\item \atlex has competitive bulk load times and maintains an advantage over other indexes when scaling to larger datasets and under distribution shift due to data skew.

\item Gapped Array and the adaptive RMI structure allow \atlex to adapt to different datasets and workloads.
\end{itemize}





\subsection{Experimental Setup}
\label{sec:exp_setup}
We implement \atlex in C++\footnote{\url{https://github.com/microsoft/ALEX}}.  We perform our evaluation via
single-threaded experiments on an Ubuntu Linux machine with \firstrev{Intel Core i9-9900K}
3.6GHz CPU and 64GB RAM.  We compare \atlex against four baselines.
(1) A standard \bptree, as implemented in the STX
\bptree~\cite{stx}. (2) Our best-effort
reimplementation of the \kraskali~\cite{kraska2018case}, using a two-level RMI with linear models at each node and binary search for lookups.\footnote{In private communication with the authors of~\cite{kraska2018case}, we learned that the added complexity of using a neural net for the root model usually is not justified by the resulting minor performance gains, which we also independently verified.}
(3) \modelbtree, which maintains a linear model in every node of the \bptree, stores each node as a Gapped Array, and uses model-based exponential search instead of binary search, implemented on top of~\cite{stx}; this shows the benefit of using models while keeping the fundamental \bptree structure.
(4) Adaptive Radix Tree (\art)~\cite{leis2013adaptive}, a trie that adapts to the data which is optimized for main memory indexing, implemented in C~\cite{artimpl}.
\secondrev{Since \atlex supports all operations common in OLTP workloads, we do not compare to hash tables and dynamic hashing techniques, which cannot efficiently support range queries.}

For each dataset and workload, we use grid search to tune the page size for \bptree and \modelbtree and the number of models for \kraskali
to achieve the best throughput.
In contrast, no tuning is necessary for \atlex, unless users place additional constraints.
For example, users might want to bound the latency of a single operation. We set a max node size of 16MB to achieve tail latency (99.9th percentile) of around 2$\mu$s per operation, but max node size can be adjusted according to user's desired limits (\cref{fig:latency}).

Index size of \atlex and \kraskali is the sum of the sizes of all models used in the index and metadata; index size for \atlex also includes internal node pointers. For \atlex, each linear model consists of two 64-bit doubles which represent the slope and intercept. \kraskali keeps two additional integers per model that represent the error bounds. The index size of \bptree and \modelbtree is the sum of the sizes of all inner nodes, which for \modelbtree includes the models in each node.
The index size of \art is the sum of inner node sizes minus the total size of keys, since keys are encoded into the inner nodes.
The data size of \atlex is the sum of the sizes of the arrays containing the keys and payloads, including gaps, as well as the bitmap in each data node. The data size of \bptree is the sum of the sizes of all leaf nodes.
At initialization, the Gapped Arrays in data nodes are set to have 70\% space utilization, comparable to \bptree leaf node space utilization~\cite{graefe2011modern}.

\subsubsection{Datasets}
\label{subsubsec:datasets}
We run all experiments using 8-byte keys from some dataset and randomly generated fixed-size payloads.
We evaluate \atlex on 4 datasets, whose characteristics and CDFs are
shown in \cref{tab:data_params} and \cref{fig:cdfs}.  The \textit{longitudes} dataset
consists of the longitudes of locations around the world from Open
Street Maps~\cite{openstreetmap}.  The \textit{longlat} dataset
consists of compound keys that combine longitudes and
latitudes from Open Street Maps by applying the transformation $k = 180\cdot\text{floor}(\text{longitude})+\text{latitude}$ to every pair of longitude and latitude.
The resulting distribution of keys $k$ is highly non-linear.
The \textit{lognormal}
dataset has values generated according to a lognormal distribution with $\mu=0$ and $\sigma=2$, multiplied by $10^9$ and rounded down to the nearest integer.
The \textit{YCSB} dataset has values representing user IDs generated according to the YCSB Benchmark~\cite{cooper2010benchmarking}, which \firstrev{are uniformly distributed across the full 64-bit domain}, and uses an 80-byte payload.  These datasets do not
contain duplicate values.  Unless otherwise stated, these datasets are
randomly shuffled to simulate a uniform dataset distribution over
time.

\begin{figure}[]
	\vspace{-0.5em}
    \subfloat{
        \includegraphics[width=\columnwidth,trim={5 5 5 5},clip]{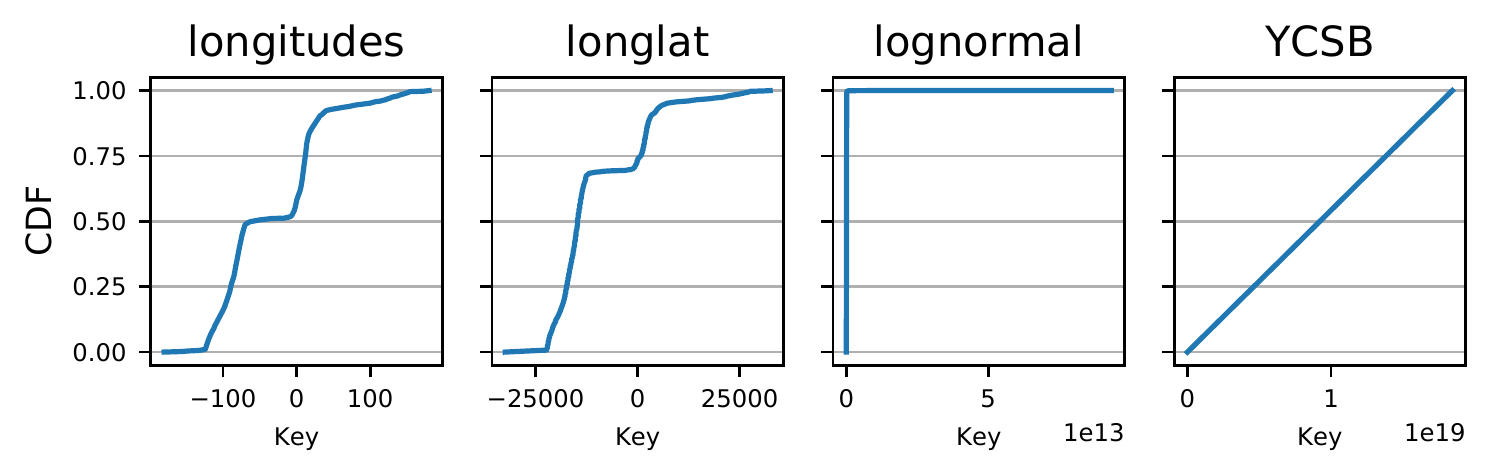}
        \label{fig:cdf_all}
        }
    \\
    \vspace{-0.5em}
    \subfloat{
        \includegraphics[width=0.48\columnwidth,trim={5 5 5 5},clip]{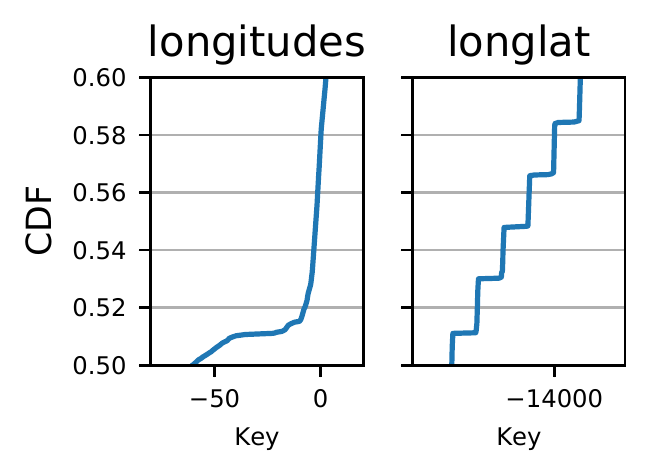}
        \label{fig:cdf_zoom1}
        }
    ~
    \subfloat{
        \includegraphics[width=0.48\columnwidth,trim={5 5 5 5},clip]{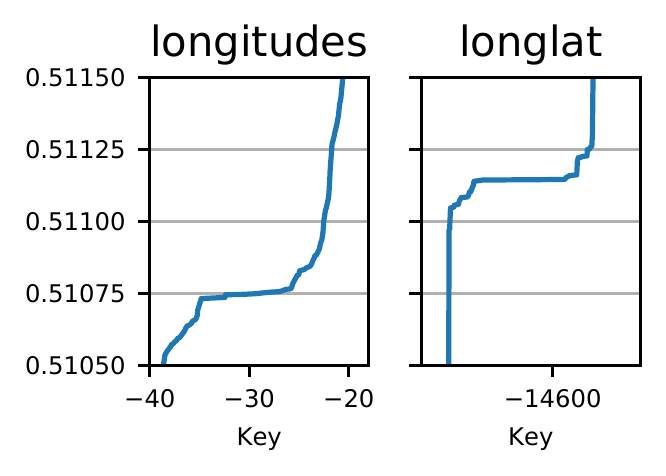}
        \label{fig:cdf_zoom2}
        }
        \vspace{-0.5em}
    \caption{
        Dataset CDFs, and zoomed-in CDFs.
    }
    \label{fig:cdfs}
    \vspace{-1em}
\end{figure}

\begin{table}[]
\centering
\caption{Dataset Characteristics}
\vspace{-1em}
\small
\label{tab:data_params}
\begin{tabular}{@{}lllll@{}}
\toprule
       & \textbf{longitudes}     & \textbf{longlat}  & \textbf{lognormal} & \textbf{YCSB} \\
\midrule
\textbf{Num keys}  & 1B & 200M & 190M & 200M \\
\textbf{Key type}  & double & double & 64-bit int & 64-bit int \\
\textbf{Payload size}  & 8B & 8B & 8B   & 80B    \\
\textbf{Total size}  & 16GB & 3.2GB & 3.04GB & 17.6GB \\
\bottomrule
\end{tabular}
\end{table}

\begin{figure*}[th!]
    \centering
    \includegraphics[width=0.6\textwidth]
            {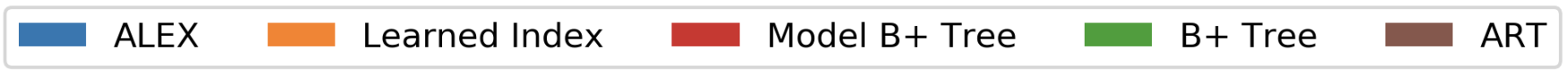}
    \\[-1ex] 
    \subfloat{
        \includegraphics[width=0.19\textwidth,trim={10 10 10 8},clip]{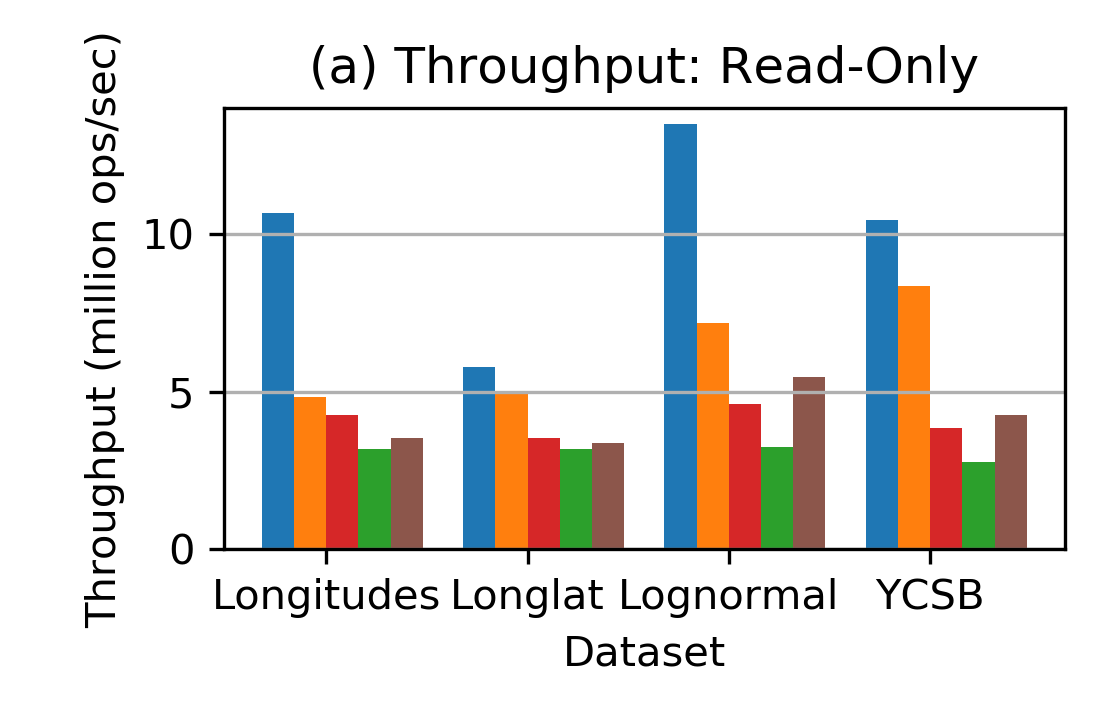}
        \label{fig:punchline_read}
        }
    ~
    \subfloat{
        \includegraphics[width=0.19\textwidth,trim={9 10 10 8},clip]{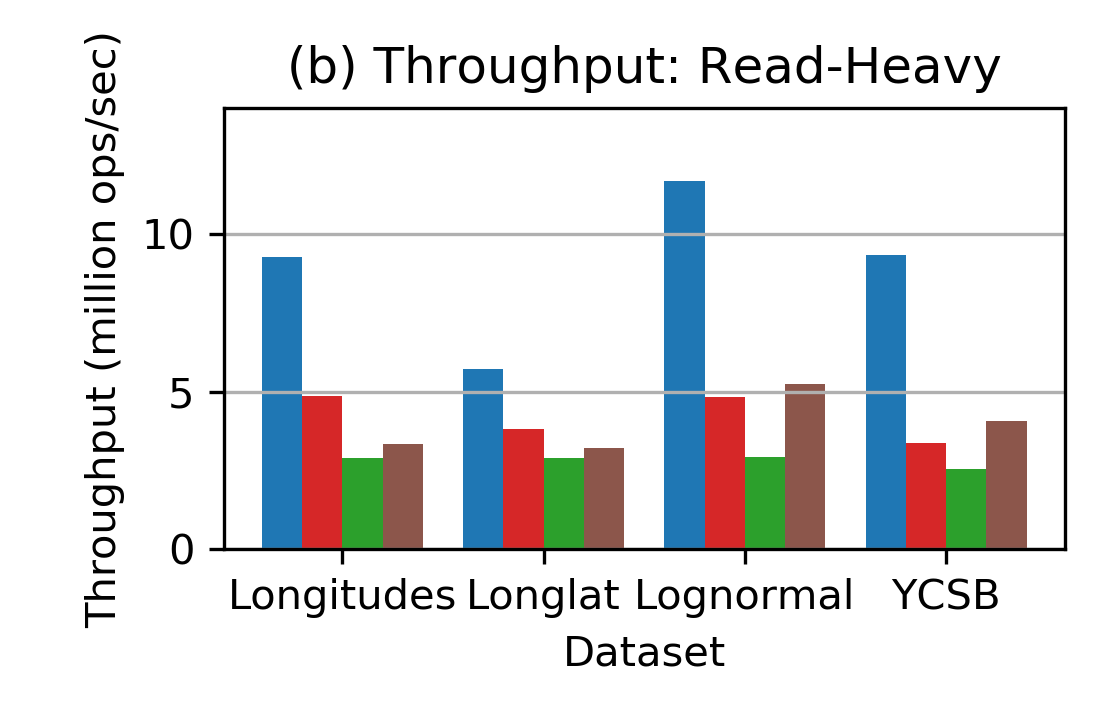}
        \label{fig:punchline_mixed}
        }
    ~
    \subfloat{
        \includegraphics[width=0.19\textwidth,trim={9 10 10 8},clip]{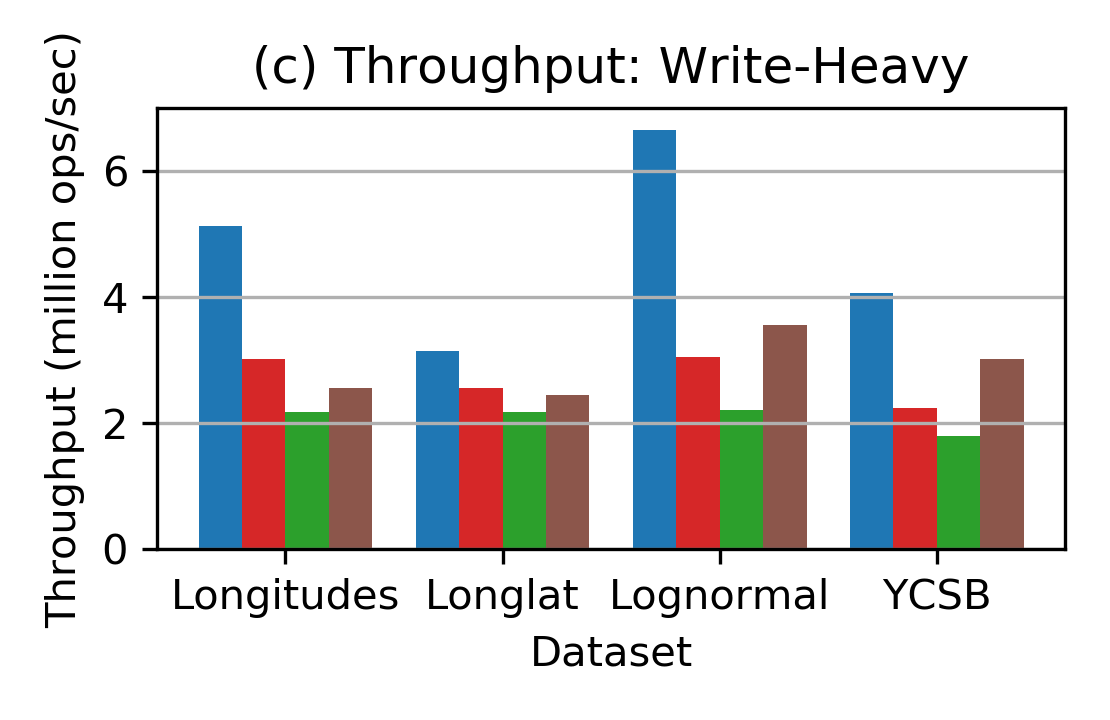}
        \label{fig:punchline_write}
        }
    ~
    \subfloat{
        \includegraphics[width=0.19\textwidth,trim={9 10 10 8},clip]{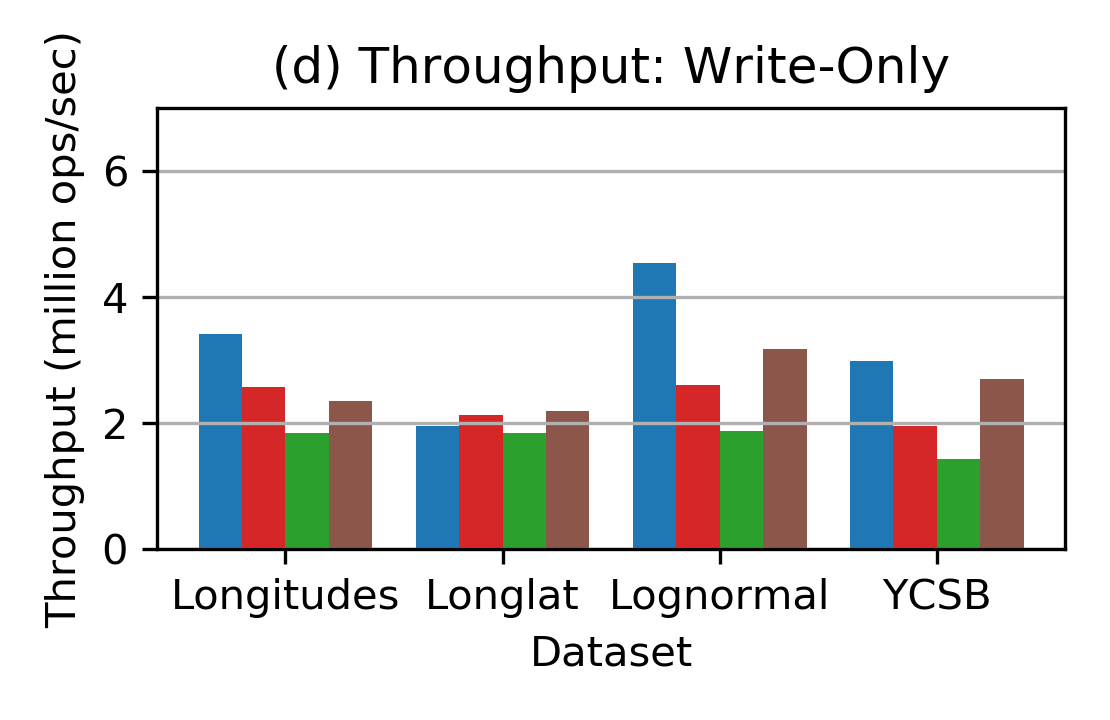}
        \label{fig:punchline_write_only}
        }
    ~
    \subfloat{
        \includegraphics[width=0.19\textwidth,trim={9 10 10 8},clip]{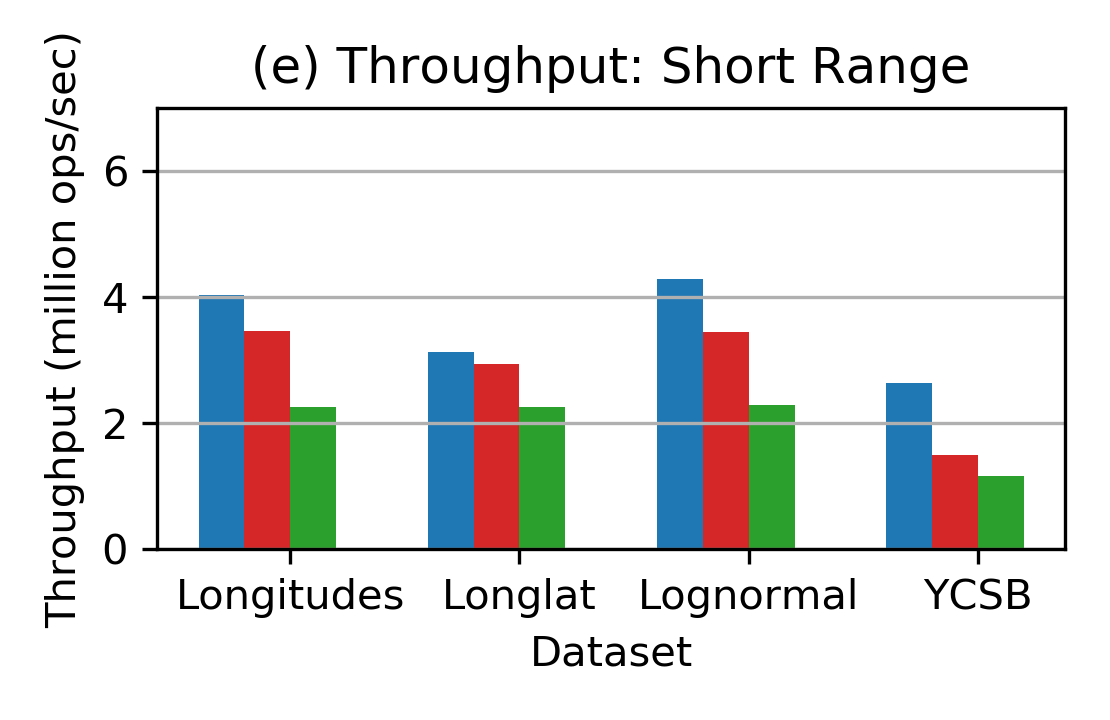}
        \label{fig:punchline_scan}
        }
    \\
    \vspace{-0.5em}
    \subfloat{
        \includegraphics[width=0.19\textwidth,trim={9 10 10 10},clip]{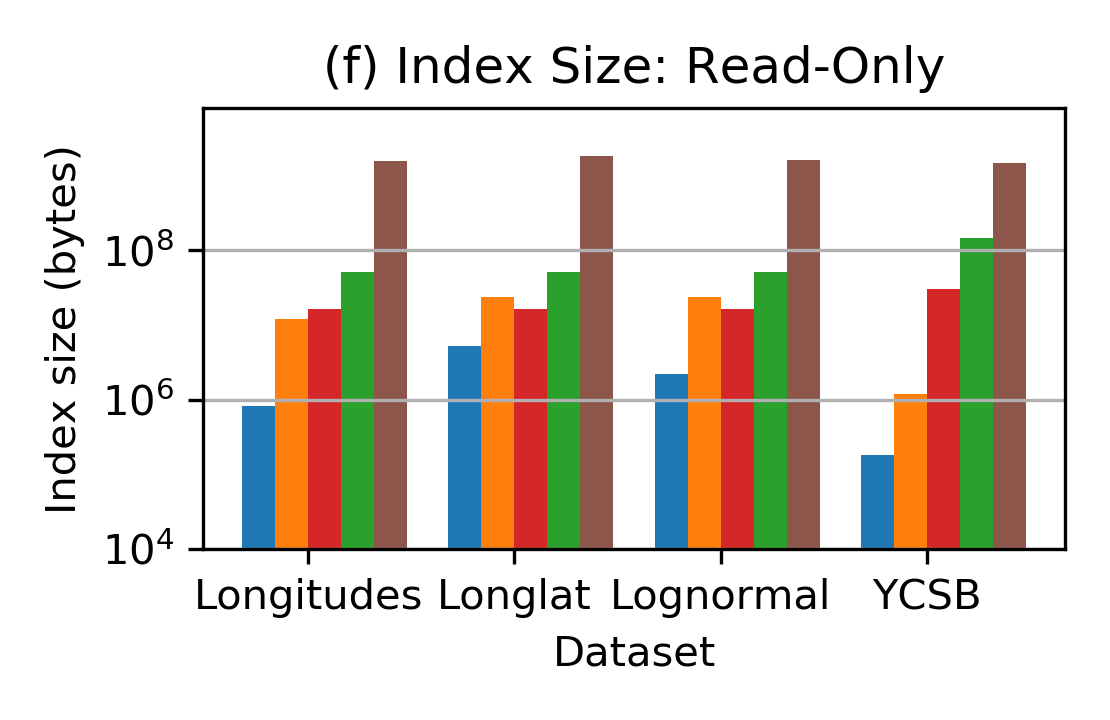}
        \label{fig:punchline_read_index_size}
        }
    ~
    \subfloat{
        \includegraphics[width=0.19\textwidth,trim={9 10 10 10},clip]{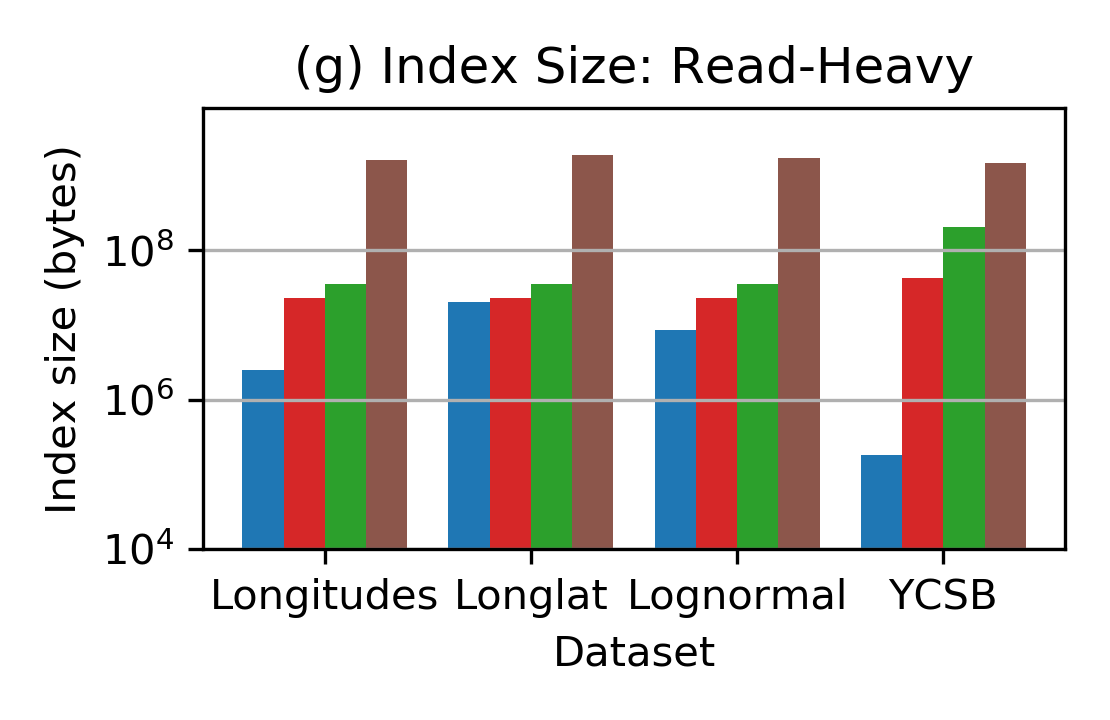}
        \label{fig:punchline_mixed_index_size}
        }
    ~
    \subfloat{
        \includegraphics[width=0.19\textwidth,trim={9 10 10 10},clip]{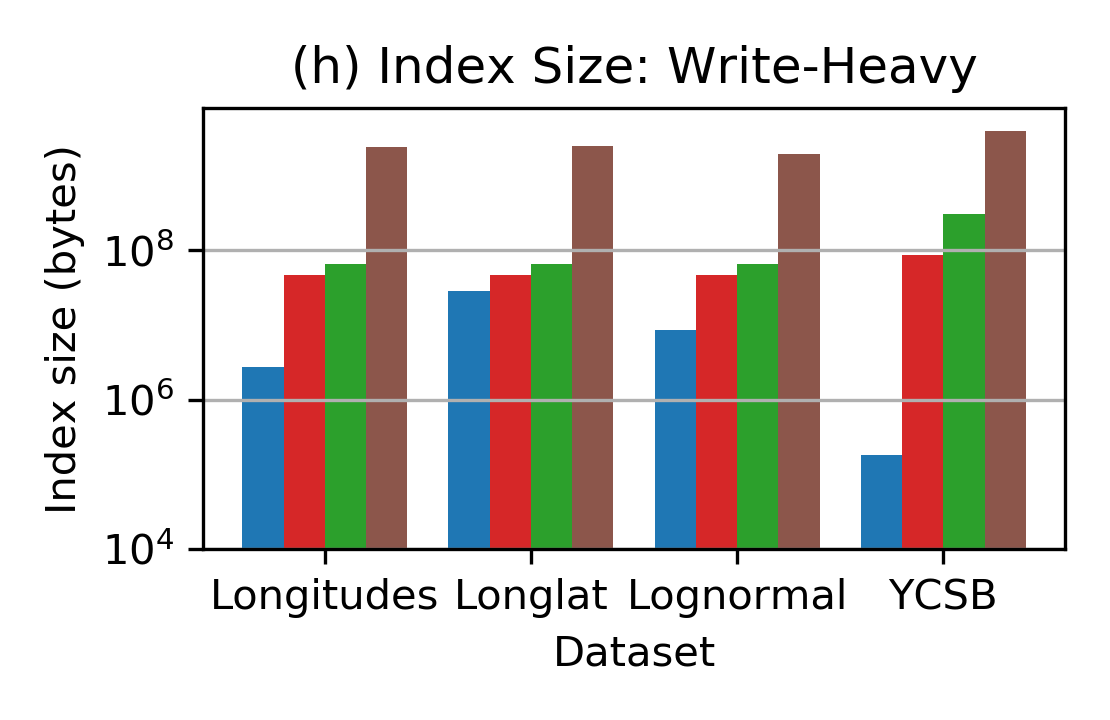}
        \label{fig:punchline_write_index_size}
        }
    ~
    \subfloat{
        \includegraphics[width=0.19\textwidth,trim={9 10 10 10},clip]{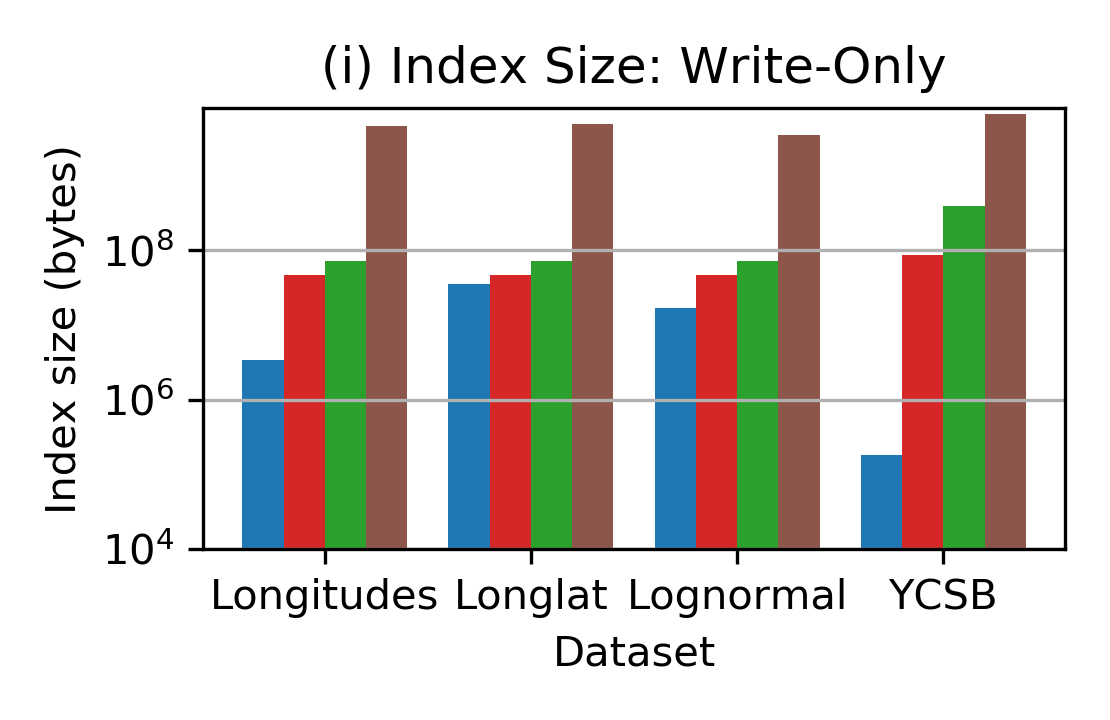}
        \label{fig:punchline_write_only_index_size}
        }
    ~
    \subfloat{
        \includegraphics[width=0.19\textwidth,trim={9 10 10 10},clip]{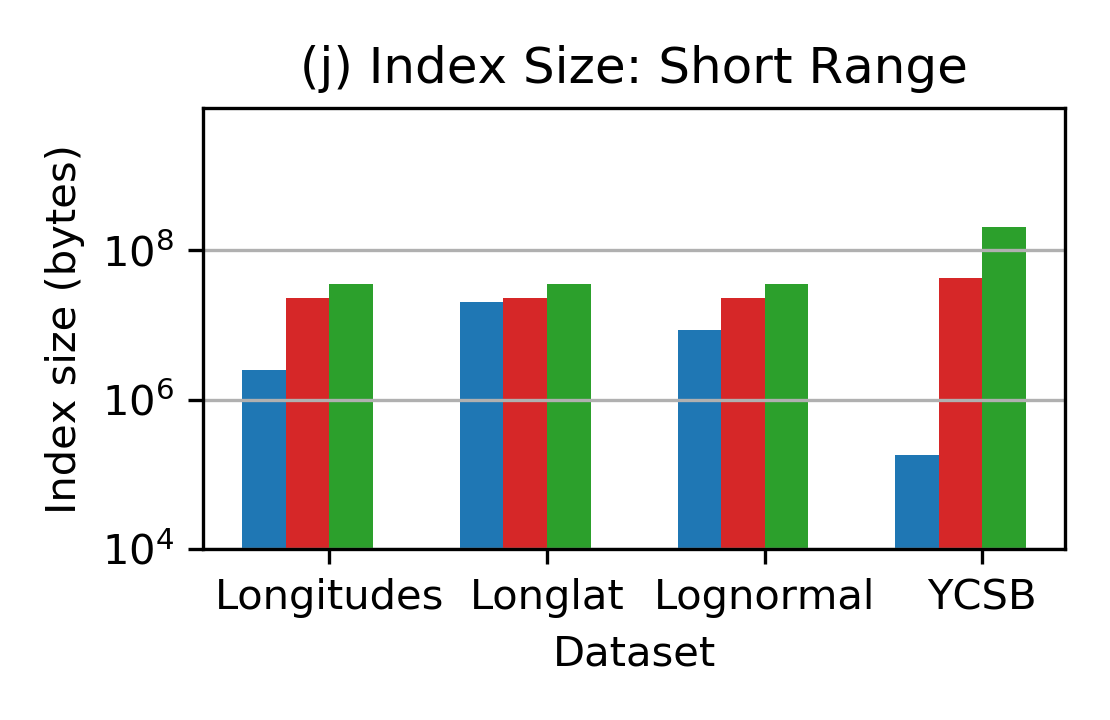}
        \label{fig:punchline_scan_index_size}
        }
    \caption{
        \atlex vs. Baselines: Throughput \& Index Size. Throughput includes model retraining time.
    }
    \label{fig:punchline}
    \vspace{-1em}
\end{figure*}

\subsubsection{Workloads}
Our primary metric for evaluating \atlex is average throughput.  We evaluate
throughput for five workloads:
(1) a read-only workload, (2) a read-heavy workload with 95\% reads and 5\% inserts, (3) a write-heavy workload with 50\% reads and 50\% inserts, (4) a short range query workload with 95\% reads and 5\% inserts, and (5) a write-only workload, to complete the read-write spectrum.
For the first three workloads, reads consist of a lookup of a single key.
For the short range workload, a read consists of a key lookup followed by a scan of the subsequent keys.
The number of keys to scan is selected randomly from a uniform distribution with a maximum scan length of 100.
For all workloads, keys to look up are selected randomly from the set of existing keys in the index according to a Zipfian distribution.
The first four workloads roughly correspond to Workloads C, B, A, and E from the YCSB benchmark~\cite{cooper2010benchmarking}, respectively.
For a given dataset, we
initialize an index with 100 million keys.  We then run the workload for
60 seconds, inserting the remaining keys.  We report the throughput of operations completed in
that time, where operations are either inserts or reads.  For the
read-write workloads, we interleave the operations: for the read-heavy workload and short range workload, we perform 19 reads/scans, then 1 insert, then repeat the cycle; for the write-heavy workload, we perform 1 read, then 1 insert, then repeat the cycle.

\begin{table}[]
\centering
\caption{\atlex Statistics after Bulk Load}
\vspace{-1em}
\small
\label{tab:node_size}
\begin{tabular}{@{}lllll@{}}
\toprule
       & \textbf{longitudes}     & \textbf{longlat}  & \textbf{lognormal} & \textbf{YCSB} \\
\midrule
\textbf{Avg depth} & 1.01 & 1.56 & 1.80 & 1 \\
\textbf{Max depth} & 2 & 4 & 3 & 1 \\
\textbf{Num inner nodes} & 55 & 1718 & 24 & 1 \\
\textbf{Num data nodes} & 4450 & 23257 & 757 & 1024 \\
\textbf{Min DN size} & 672B & 16B & 224B & 12.3MB \\
\textbf{Median DN size} & 161KB & 39.6KB & 2.99MB & 12.3MB \\
\textbf{Max DN size} & 5.78MB & 8.22MB & 14.1MB   & 12.3MB    \\
\bottomrule
\end{tabular}
\end{table}

\subsection{Overall Results}
\label{subsec:overall_results}
\subsubsection{Read-only Workloads}
\label{subsubsec:read-only_workloads}


\begin{sloppypar}
For read-only workloads, \cref{fig:punchline_read,fig:punchline_read_index_size} show that \atlex achieves up to 4.1$\times$, 2.2$\times$, 2.9$\times$, 3.0$\times$ higher throughput and 800$\times$, 15$\times$, 160$\times$, 8000$\times$ smaller index size than the \bptree, \kraskali, \modelbtree, and \art, respectively.
\end{sloppypar}

On the longlat and YCSB datasets, \atlex performance is similar to \kraskali.
The longlat dataset is highly non-uniform, so \atlex is unable to achieve high performance, even with adaptive RMI.
The YCSB dataset is nearly uniform, so the optimal allocation of models is uniform; \atlex adaptively finds this optimal allocation, and \kraskali allocates this way by nature, so the resulting RMI structures are similar.
On the other two datasets, \atlex has more performance advantage over \kraskali, which we explain in \cref{subsec:drilldown}.

In general, \modelbtree outperforms \bptree while also having smaller index size, because the tuned page size of \modelbtree is always larger than those of \bptree. The benefit of models in \modelbtree is greatest when the key distribution within each node is more uniform, which is why \modelbtree has least benefit on non-uniform datasets like longlat.

The index size of \atlex is dependent on how well \atlex can model the data distribution.
On the YCSB dataset, \atlex does not require a large RMI to accurately model the distribution, so \atlex achieves small index size.
However, on datasets that are more challenging to model such as longlat, \atlex has a larger RMI with more nodes.
\atlex has smaller index size than the \kraskali, even when throughput is similar, for two reasons.
First, \atlex uses model-based inserts to obtain better predictive accuracy for each model, which we show in \cref{subsec:drilldown}, and therefore achieves high throughput while using relatively fewer models.
Second, \atlex adaptively allocates data nodes to different parts of the key space and does not use any more models than necessary (\cref{fig:internal_nodes}), whereas \kraskali fixes the number of models and ends up with many redundant models.
The index size of \art is higher than all other indexes. \cite{leis2013adaptive} claims that \art uses between 8 and 52 bytes to store each key, which is in agreement with the observed index sizes.

\cref{tab:node_size} shows \atlex statistics after bulk loading, including data node (DN) sizes. The root has depth 0. Average depth is averaged over keys. The max depth of the tuned \bptree is 4 on the YCSB dataset and 5 on the other datasets. Datasets that are easier to model result in fewer nodes. For uniform datasets like YCSB, the data node sizes are also uniform.



\subsubsection{Read-Write Workloads}
For read-write workloads, \cref{fig:punchline_mixed,fig:punchline_write,fig:punchline_write_only,fig:punchline_mixed_index_size,fig:punchline_write_index_size,fig:punchline_write_only_index_size}
show that \atlex achieves up to 4.0$\times$, 2.7$\times$, 2.7$\times$ higher throughput and 2000$\times$, 475$\times$, 36000$\times$ smaller index size than the \bptree, \modelbtree, and \art, respectively. The
\kraskali has insert time orders of magnitude slower than
\atlex and \bptree, so we do not include it in these benchmarks.

The relative performance advantage of \atlex over baselines decreases as the workload skews more towards writes, because all indexes must pay the cost of copying when splitting/expanding nodes. Copying has an especially big impact for YCSB, for which payloads are 80 bytes.
\art achieves comparable throughput to \atlex on the write-only workload for YCSB because \art does not keep payloads clustered, so it avoids the high cost of copying 80-byte payloads.
Note that \atlex could similarly avoid copying large payloads by storing unclustered payloads separately and keeping a pointer with every key; however, this would impact scan performance.
On datasets that are challenging to model such as longlat, \atlex only achieves comparable write-only throughput to \modelbtree and \art, but is still faster than \bptree.



\subsubsection{Range Query Workloads}
\label{subsubsec:scan}
\cref{fig:punchline_scan,fig:punchline_scan_index_size} show that \atlex
maintains its advantage over \bptree on the short range workload, \secondrev{achieving up to 2.27$\times$, 1.77$\times$ higher throughput and 1000$\times$, 230$\times$ smaller index size than \bptree and \modelbtree, respectively}. However, the relative throughput benefit decreases, compared to \cref{fig:punchline_mixed}. This is because as scan time begins to dominate overall query time, the speedups that \atlex achieves on lookups become less apparent.
The \art implementation from~\cite{artimpl} does not support range queries; we suspect range queries on \art would be slower than for the other indexes because \art does not cluster payloads, leading to poor scan locality.
\secondrev{\iftoggle{sigmod}{Appendix C.2}{\cref{subsec:mixed_workload_eval}}\sigmod{ in~\cite{techreport}} shows that \atlex continues to outperform other indexes on a workload that mixes inserts, point lookups, and short range queries.}

\begin{figure}
	\centering
	\includegraphics[width=\columnwidth]
	{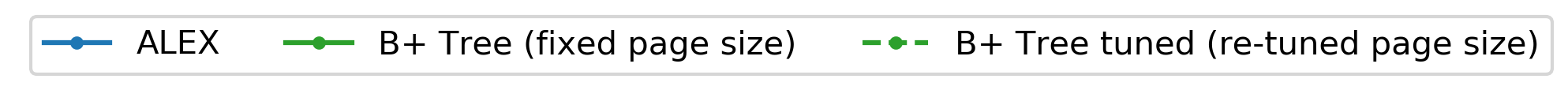}
	\vspace{-2em}
	\\
	\subfloat{
		\includegraphics[width=0.48\columnwidth,trim={8 8 7 8},clip]{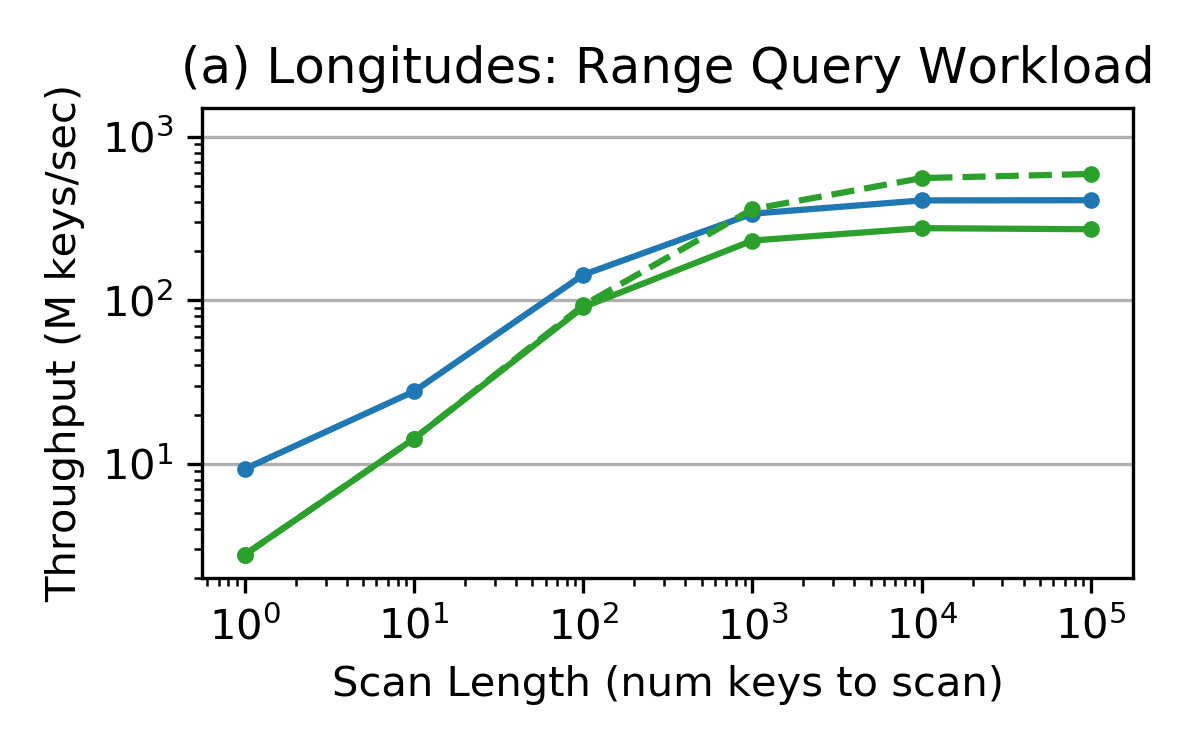}
		\label{fig:selectivity_throughput}
	}
	~
	\subfloat{
		\includegraphics[width=0.48\columnwidth,trim={8 8 7 8},clip]{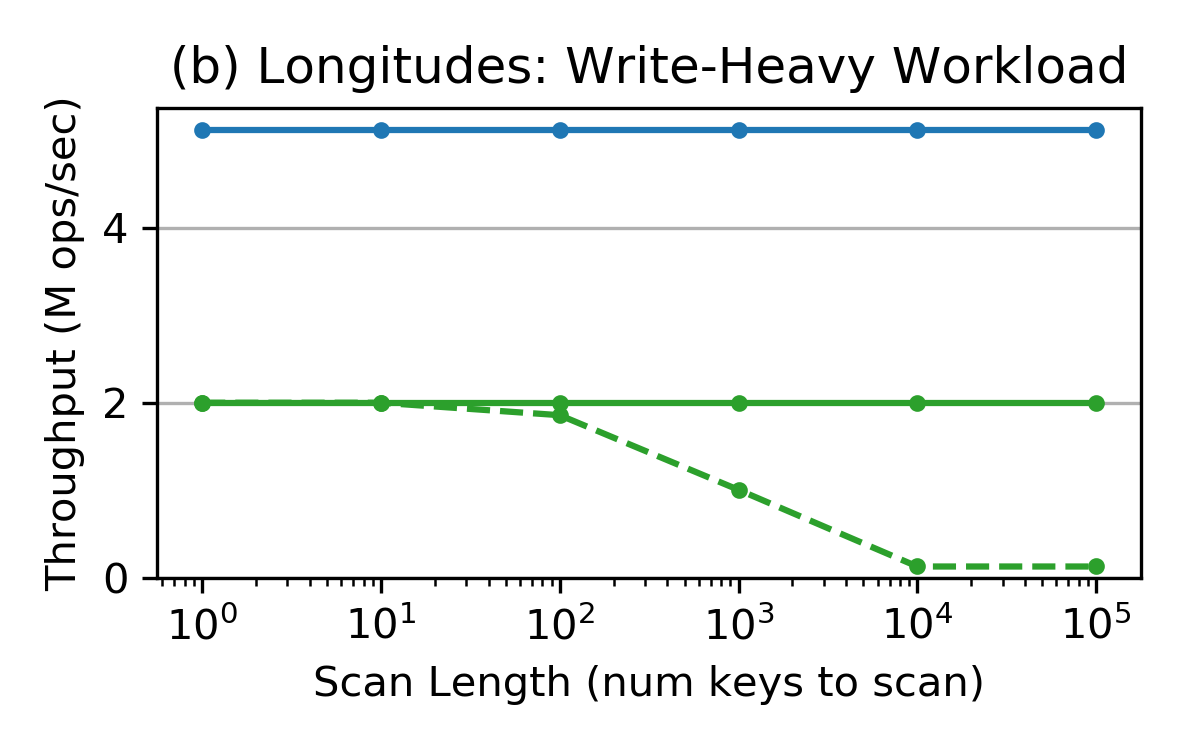}
		\label{fig:selectivity_counterpoint}
	}
	\caption{
		\secondrev{(a) When scan length exceeds 1000 keys, \atlex is slower on range queries than a \bptree whose page size is re-tuned for different scan lengths. (b) However, throughput of the re-tuned \bptree suffers for other operations, such as point lookups and inserts in the write-heavy workload.}
	}
	\label{fig:selectivity}
\end{figure}

\secondrev{To show how performance varies with range query selectivity, we compare \atlex against two \bptree configurations with increasingly larger range scan length over the longitudes dataset (\cref{fig:selectivity_throughput}).
In the first \bptree configuration, we use the optimal \bptree page size on the write-heavy workload (\cref{fig:punchline_write}), which is 1KB (solid green line).
In the second \bptree configuration, we tune the \bptree page size for each different scan length (dashed green line).}
	
\secondrev{Unsurprisingly, \cref{fig:selectivity_throughput} shows as scan length increases, the throughput in terms of keys scanned per second increases for all indexes due to better locality and a smaller fraction of time spent on the initial point lookup.
Furthermore, \atlex outperforms the 1KB-page \bptree for all scan lengths due to \atlex's larger nodes; median \atlex data node size is 161KB on the longitudes dataset (\cref{tab:node_size}), which benefits scan locality---scanning larger contiguous chunks of memory leads to better prefetching and fewer pointer chases.
This makes up for the Gapped Array's overhead.}
	
\secondrev{However, if we re-tune the \bptree page size for each scan length (dashed green line), the \bptree outperforms \atlex when scan length exceeds 1000 keys because past this point, the overhead of Gapped Array outpaces \atlex's scan locality advantage from having larger node sizes.
However, this comes at the cost of performance on other operations:
\cref{fig:selectivity_counterpoint} shows that if we run the re-tuned \bptree on the write-heavy workload, which includes both point lookups and inserts, its performance would begin to decline when scan length exceeds 100 keys.
In particular, larger \bptree pages lead to a higher number of search iterations for lookups and shifts for inserts; \atlex avoids both of these problems for large data nodes by using Gapped Arrays with model-based inserts.
We show in \iftoggle{sigmod}{Appendix C.1}{\cref{subsec:range_scan_throughput}}\sigmod{ in~\cite{techreport}} that this behavior also occurs on the other three datasets.}

\subsubsection{Bulk Loading}
\label{subsubsec:main_bulk_loading_eval}
\begin{figure}
	\centering
	\includegraphics[width=0.8\columnwidth]
	{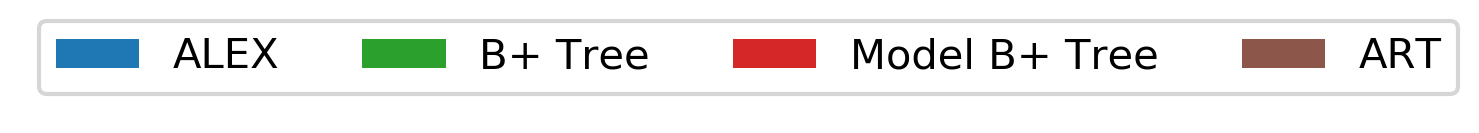}
	\vspace{-1em}
	\\
	\subfloat{
		\includegraphics[width=0.51\columnwidth,trim={8 8 7 8},clip]{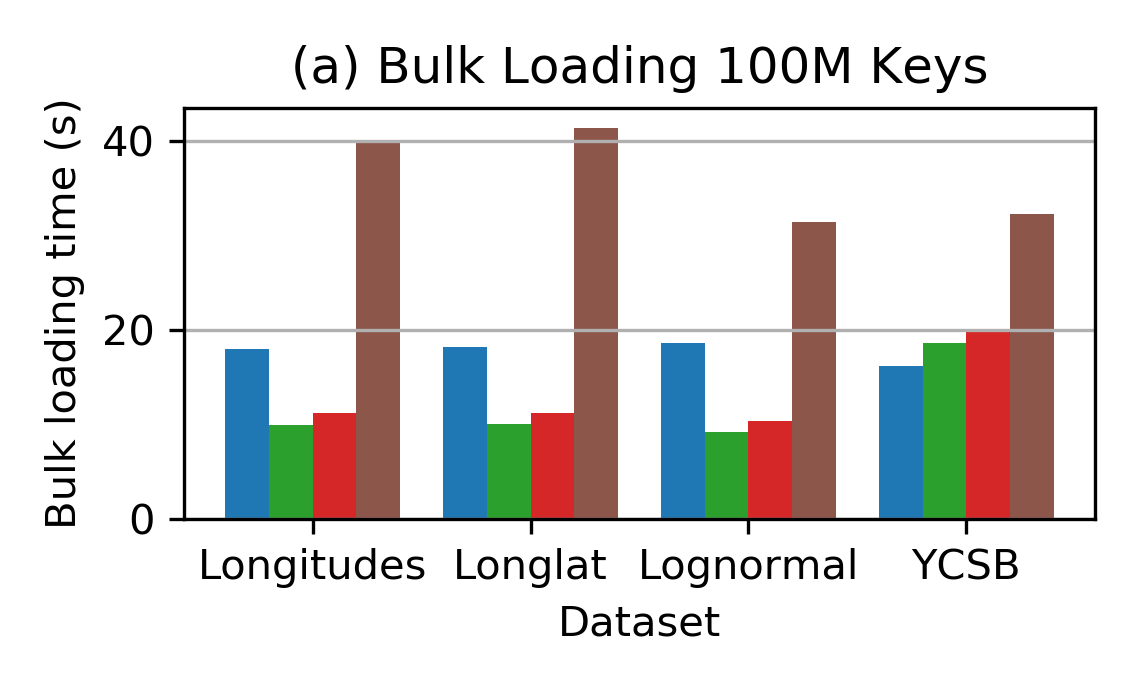}
		\label{fig:simple_bulk_loading_time}
	}
	~
	\subfloat{
		\includegraphics[width=0.47\columnwidth,trim={8 8 7 8},clip]{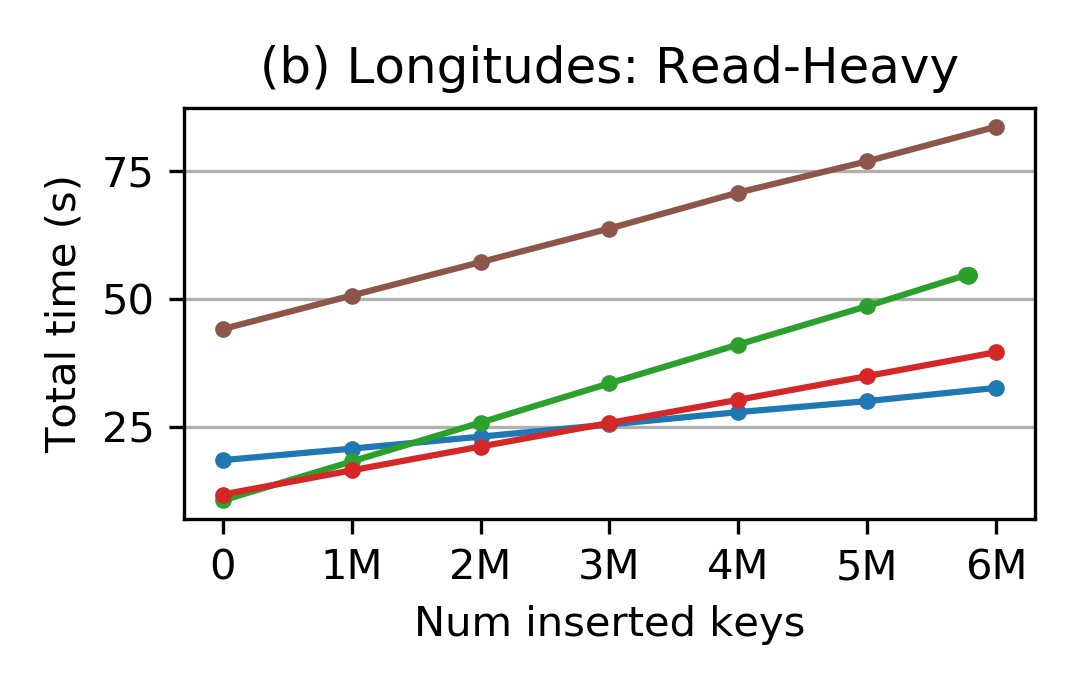}
		\label{fig:crossover}
	}
	\vspace{-0.5em}
	\caption{
		\thirdrev{\atlex takes 50\% more than time than \bptree to bulk load on average, but quickly makes up for this by having higher throughput.}
	}
	\label{fig:bulk_loading}
\end{figure}
\thirdrev{We compare the time to initialize each index with bulk loading, which includes the time to sort keys. \cref{fig:simple_bulk_loading_time} shows that on average, \atlex only takes 50\% more time to bulk load than \bptree, and in the worst case is only 2$\times$ slower than \bptree.
On the YCSB dataset, \bptree and \modelbtree take longer to bulk load due to the larger payload size, but bulk loading \atlex remains efficient due to its simple structure (\cref{tab:node_size}).
\modelbtree is slightly slower to bulk load than \bptree due to the overhead of training models for each node.
\art is slower to bulk load than \bptree, \modelbtree, and \atlex.}

\thirdrev{
\atlex can quickly make up for its slower bulk loading time than \bptree by having higher throughput performance.
\cref{fig:crossover} shows that when running the read-heavy workload on the longitudes dataset, \atlex's total time usage (bulk loading plus workload) drops below all other indexes after only 3 million inserts.
We provide a more detailed bulk loading evaluation in \iftoggle{sigmod}{Appendix A}{\cref{sec:bulk_loading_eval}}\sigmod{ in~\cite{techreport}}.}

\subsubsection{Scalability}
\atlex performance scales well to larger datasets.  We again run the read-heavy workload on the longitudes
dataset, but instead of initializing the index with
100 million keys, we vary the number of initialization keys.
\cref{fig:scalability_mixed} shows that as the number of indexed
keys increases, \atlex maintains higher throughput than \bptree and \modelbtree.  In
fact, as dataset size increases, \atlex throughput decreases at a
surprisingly slow rate.  This occurs because \atlex adapts its RMI structure in response to the incoming data.


\begin{figure}
    \centering
    \includegraphics[width=0.6\columnwidth]
    {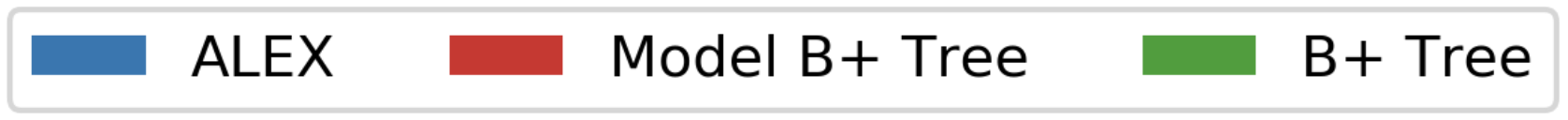}
    \vspace{-1em}
    \\
    \subfloat{
        \includegraphics[width=0.47\columnwidth,trim={0 10 10 5},clip]{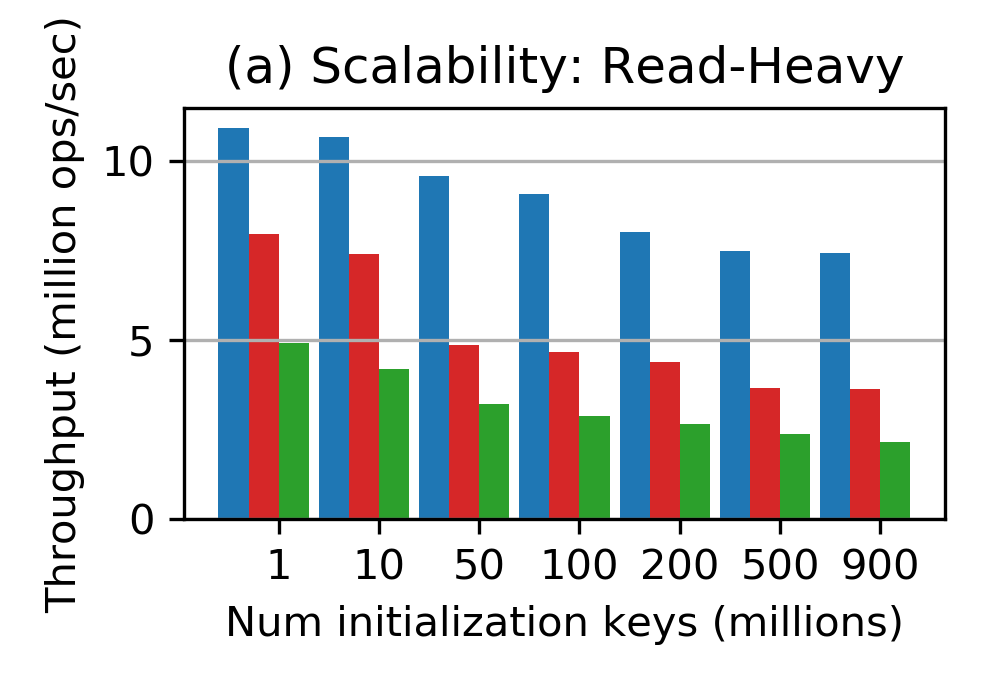}
        \label{fig:scalability_mixed}
        }
    \subfloat{
            \includegraphics[width=0.26\columnwidth,trim={10 10 9 10},clip]{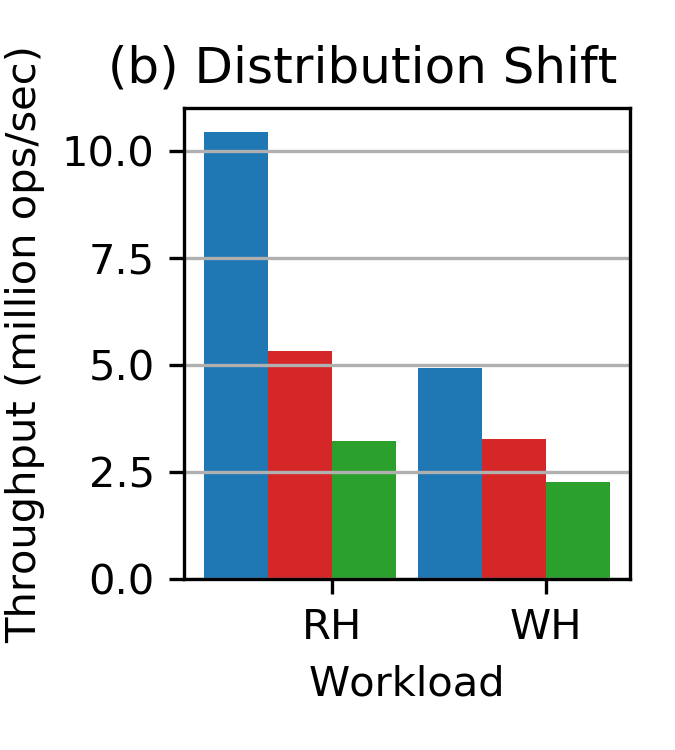}
            \label{fig:distshift}
            }
    \subfloat{
            \includegraphics[width=0.24\columnwidth,trim={10 10 10 10},clip]{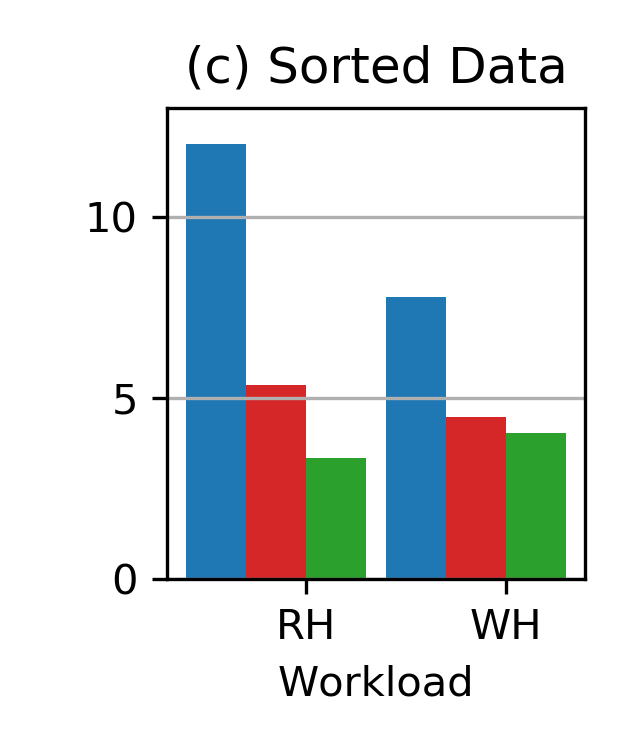}
            \label{fig:sorted}
            }
    \vspace{-0.5em}
    \caption{
        \atlex maintains high throughput when scaling to large datasets and under data distribution shifts. (RH = Read-Heavy, WH = Write-Heavy)
    }
\end{figure}

\subsubsection{Dataset Distribution Shift}
\label{ssub:dataset_distribution_shift}
\atlex is robust to dataset distribution shift.
We initialize the index with the 50 million smallest keys and run read-write workloads by inserting the remaining keys in random order. This simulates distribution
shift because the keys we initialize with come from a completely disjoint domain than the keys we subsequently insert with.
\cref{fig:distshift} shows that \atlex maintains up to 3.2$\times$ higher throughput than \bptree in this scenario.
\atlex is also robust to adversarial patterns such as
sequential inserts in sorted order, in which new keys are always larger than the maximum key currently indexed.  \cref{fig:sorted} shows that when we initialize with the 50 million smallest keys and insert the remaining keys in ascending sorted order, \atlex has
up to $3.6\times$ higher throughput than \bptree.
\firstrev{\iftoggle{sigmod}{Appendix B}{\cref{sec:extreme_distshift_eval}}\sigmod{ in~\cite{techreport}} further shows that \atlex is robust to radically changing key distributions.}

\begin{figure*}[h!]
	\begin{minipage}{0.29\linewidth} 
		\centering
		\includegraphics[width=\columnwidth]{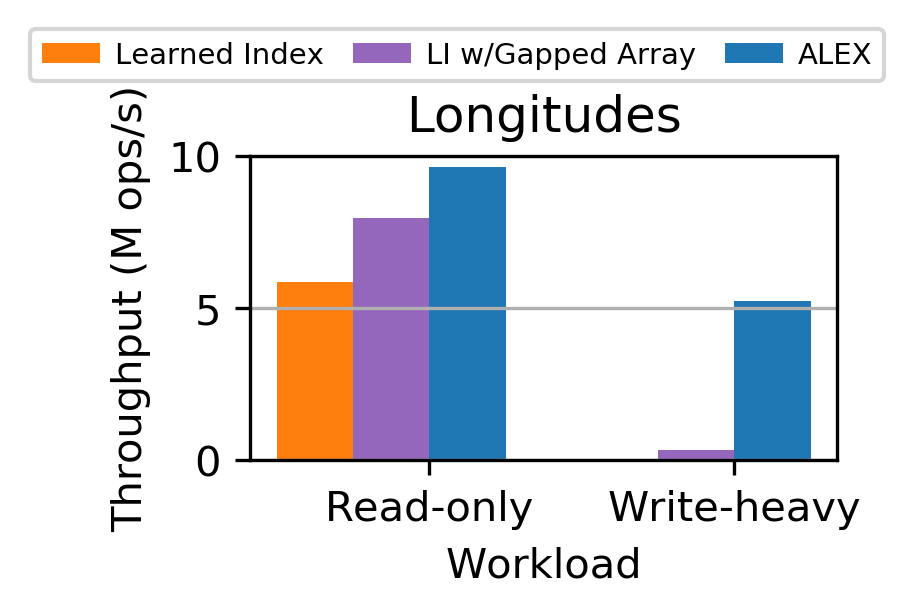}
		\vspace{-2em}
		\caption{
			Impact of Gapped Array and adaptive RMI.
		}
	\label{fig:ablation}
	\end{minipage}
    \begin{minipage}{0.66\linewidth} 
    \subfloat{
        \includegraphics[width=0.32\columnwidth]{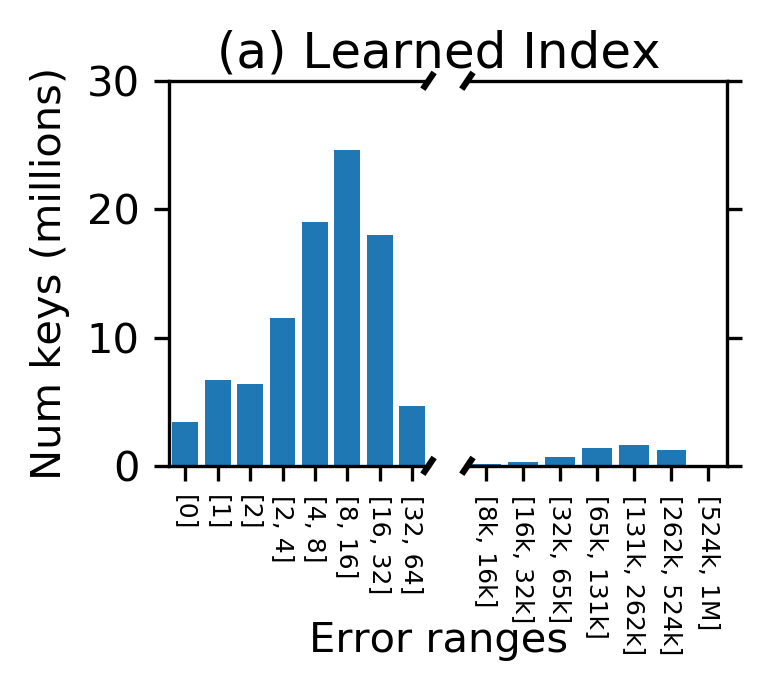}
        \label{fig:simple_errors}
        }
    ~
    \subfloat{
        \includegraphics[width=0.32\columnwidth]{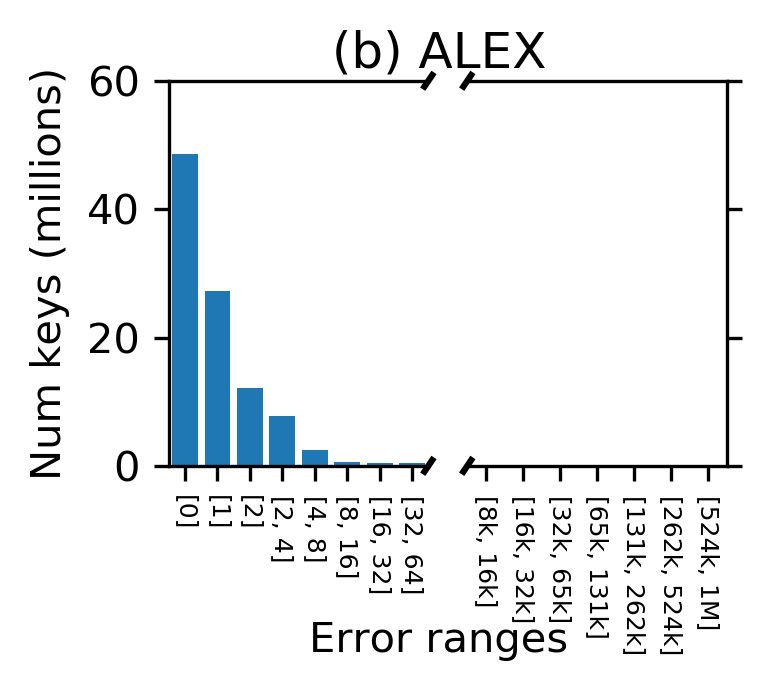}
        \label{fig:alex_errors}
        }
    ~
    \subfloat{
        \includegraphics[width=0.32\columnwidth]{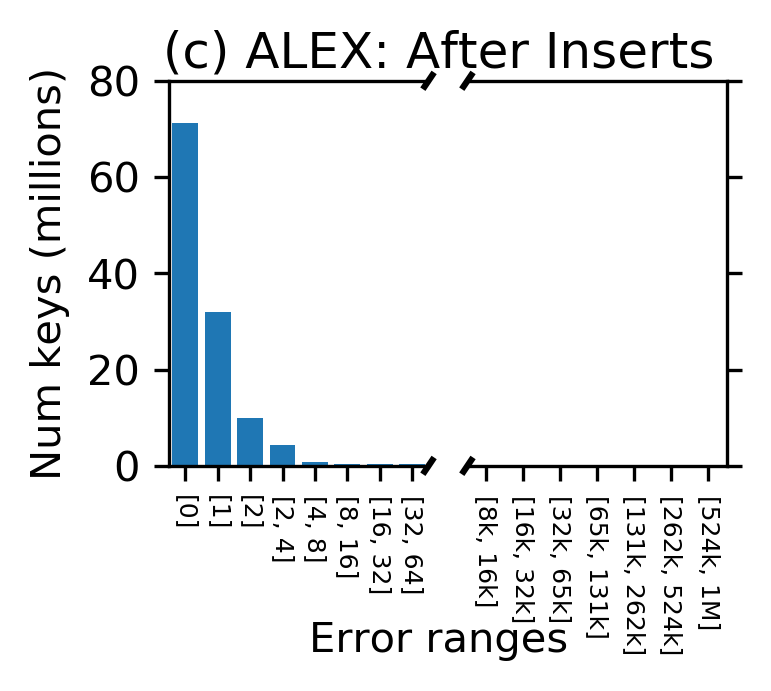}
        \label{fig:alex_errors_updates}
        }
    \caption{
        \atlex achieves smaller prediction error than the \kraskali.
    }
    \end{minipage}
    \vspace{-1em}
\end{figure*}

\subsection{Drilldown into \atlex Design Trade-offs}
\label{subsec:drilldown}
In this section, we delve deeper into how node layout and adaptive RMI help \atlex achieve its design goals.

Part of \atlex's advantage over \kraskali comes from using model-based insertion with Gapped Arrays in the data nodes, but most of \atlex's advantage for dynamic workloads comes from the adaptive RMI.
To demonstrate the effects of each contribution, \cref{fig:ablation} shows that taking a 2-layer \kraskali and replacing the single dense array of values with a Gapped Array per leaf (LI w/Gapped Array) already achieves significant speedup over \kraskali for the read-only workload.
However, a \kraskali with Gapped Arrays achieves poor performance on read-write workloads due to the presence of fully-packed regions which require shifting many keys for each insert.
\atlex's ability to adapt the RMI structure to the data is necessary for good insert performance.

During lookups, the majority of the time is spent doing local search
around the predicted position.
Smaller prediction errors directly contribute to decreased lookup time.
To analyze the prediction errors of the \kraskali and \atlex, we initialize an index with 100 million keys from the longitudes dataset, use the index to predict the position of each of the 100 million keys, and track the distance between the predicted position and the actual position. \cref{fig:simple_errors} shows that the \kraskali has
prediction error with mode around 8-32 positions, with a long tail
to the right.
On the other hand, \atlex achieves much lower prediction error by
using model-based inserts.  \cref{fig:alex_errors} shows that
after initializing, \atlex often has no prediction error, the
errors that do occur are often small, and the long tail of errors has disappeared. \cref{fig:alex_errors_updates} shows that even after 20 million inserts, \atlex
maintains low prediction errors.

\begin{table}[]
\centering
\caption{\thirdrev{Data Node Actions When Full (Write-Heavy)}}
\vspace{-1em}
\small
\label{tab:node_splits}
\begin{tabular}{@{}lllll@{}}
\toprule
       & \textbf{longitudes}     & \textbf{longlat}  & \textbf{lognormal} & \textbf{YCSB} \\
\midrule
\textbf{Expand + scale} & 26157 & 113801 & 2383 & 1022 \\
\textbf{Expand + retrain} & 219 & 2520 & 2 & 1026 \\
\textbf{Split sideways} & 79 & 2153 & 7 & 0 \\
\textbf{Split downwards} & 0 & 230 & 0 & 0 \\
\midrule
\textbf{Total times full} & 26455 & 118704 & 2392 & 2048 \\
\bottomrule
\end{tabular}
\end{table}

\thirdrev{Once a data node becomes full, one of four actions happens: if there is no cost deviation, then (1) the node is expanded and the model is scaled. Otherwise, the node is either (2) expanded and its model retrained, (3) split sideways, or (4) split downwards.
\cref{tab:node_splits} shows that in the vast majority of cases, the data node is simply expanded and the model scaled, which implies that models usually remain accurate even after inserts, assuming no radical distribution shift.
The number of occurrences of a data node becoming full is correlated with the number of data nodes (\cref{tab:node_size}).
On YCSB, expansion with model retraining is more common because the data nodes are large, so cost deviation often results simply from randomness.}




\begin{figure}
    \begin{minipage}{0.48\linewidth} 
        \includegraphics[width=\columnwidth,trim={7 7 7 6},clip]{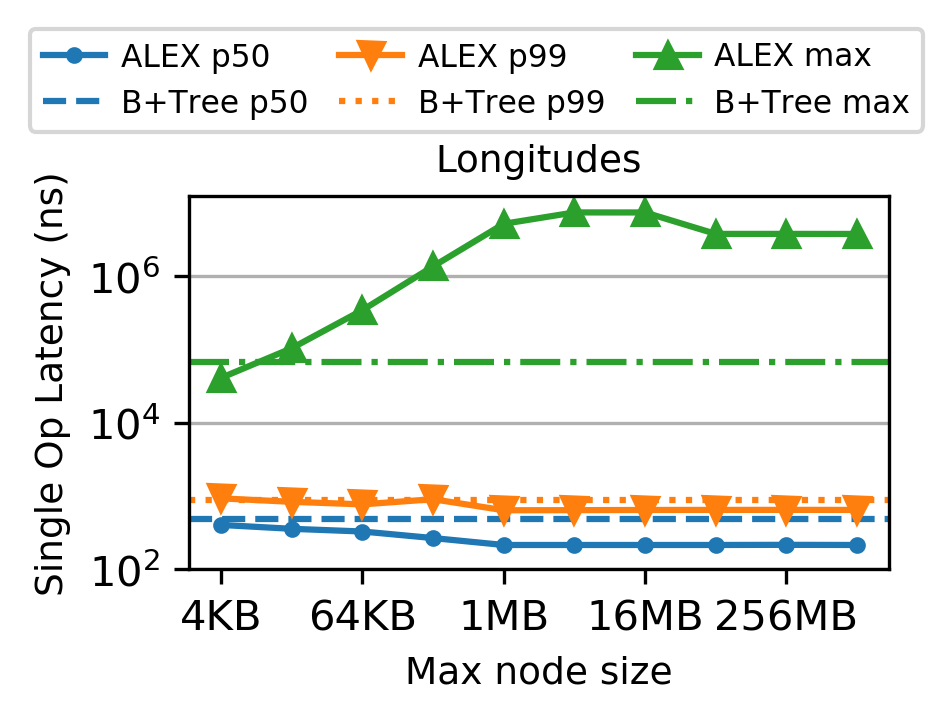}
        \vspace{-1em}
        \caption{Latency of a single operation.}
        \label{fig:latency}
    \end{minipage}
    \begin{minipage}{0.48\linewidth} 
        \centering
        \vspace{-1em}
        \includegraphics[width=\columnwidth,trim={7 10 7 7},clip]{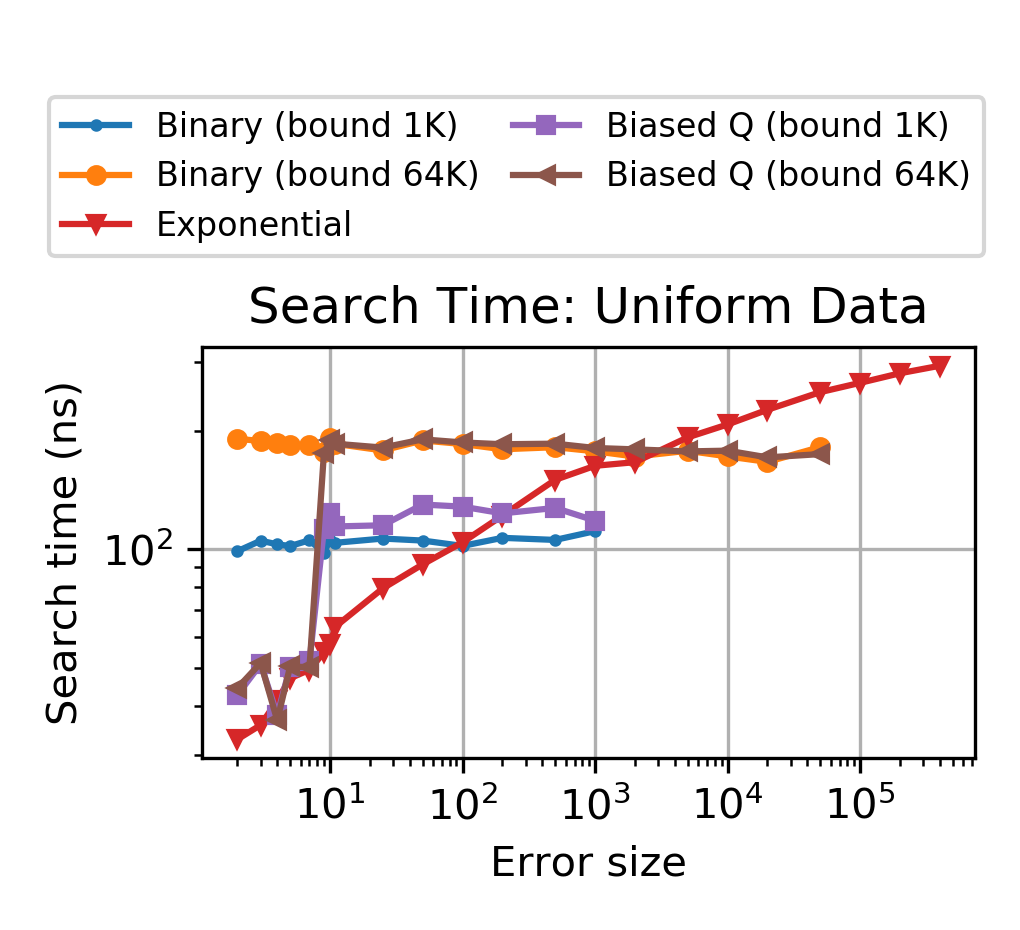}
        \vspace{-2em}
        \caption{\thirdrev{Exponential vs. other search methods.}}
        \label{fig:search_benchmark}      
    \end{minipage} 
    \vspace{-1em}
\end{figure}

Users can adjust the max node size to achieve target tail latencies, if desired.
In~\cref{fig:latency}, we run the write-heavy workload on the longitudes dataset, measuring the latency for every operation. As we increase the max node size, median and even p99 latency of \atlex decreases, because \atlex has more flexibility to build a better-performing RMI (e.g., ability to have higher internal node fanout). However, maximum latency increases, because an insert that triggers an expansion or split of a large node is slow.
If the user has strict latency requirements, they can decrease the max node size accordingly.
After increasing the max node size beyond 64MB, latencies do not change because \atlex never decides to use a node larger than 64MB.

\subsubsection{Search Method Comparison}
\label{subsec:search_comparison}
In order to show the trade-off between exponential search and other search methods, we perform a microbenchmark on synthetic data.
We create a dataset with 100 million perfectly uniformly distributed doubles.
We then perform searches for 10 million randomly selected values from this dataset.
\thirdrev{We use three search methods: binary search and biased quaternary search (proposed in~\cite{kraska2018case} to take advantage of accurate predictions), each evaluated with two different error bound sizes, as well as exponential search.}
For each lookup, the search method is given a predicted position that has some synthetic amount of error in the distance to the actual position value.
\cref{fig:search_benchmark} shows that the search time of exponential search increases proportionally with the logarithm of error size, whereas the binary search methods take a constant amount of time, regardless of error size. This is because binary search must always begin search within its error bounds, and cannot take advantage of cases when the error is small.
Therefore, exponential search should outperform binary search if the prediction error of the RMI models in \atlex is small.
As we showed in \cref{subsec:drilldown}, \atlex maintains low prediction errors through model-based inserts.
Therefore, \atlex is well suited to take advantage of exponential search.
\thirdrev{Biased quaternary search is competitive with exponential search when error is below $\sigma$ (we set $\sigma=8$ for this experiment; see~\cite{kraska2018case} for details) because search can be confined to a small range, but performs similarly to binary search when error exceeds $\sigma$ because the full error bound must be searched.
We prefer exponential search to biased quaternary search due to its smoother performance degradation and simplicity of implementation (e.g., no need to tune $\sigma$).}






\section{Related Work}\label{sec:related}


\textbf{Learned Index Structures:} The most relevant work is the
\kraskali~\cite{kraska2018case},
discussed in \cref{subsec:learned_indexes}. \kraskali has similarities to prior work that explored how to compute down a tree index.
Tries~\cite{knuth1997art} use key prefixes
instead of \bptree splitters. Masstree~\cite{mao2012cache} and Adaptive Radix
Tree~\cite{leis2013adaptive} combine the ideas of \bptree and trie to
reduce cache misses. Plop-hashing~\cite{kriegel1988plop} uses
piecewise linear order-preserving hashing to distribute keys more
evenly over pages. Digital B-tree~\cite{lomet1981digital} uses
bits of a key to compute down the tree more flexibly. ~\cite{Lomet:1987:PEF:12047.12049} proposes to partially expand the space instead of
always doubling when splitting in \bptree. ~\cite{graefe2006b} proposes the idea of
interpolation search within \bptree nodes;
this idea was
revisited in~\cite{neumann-blog}. The interpolation-based
search algorithms in~\cite{interpolsearch2019} can
complement ALEX's search strategy.
Hermit~\cite{wu2019hermit} creates a succinct tree structure for secondary indexes.



Other works propose replacing the leaf nodes of a \bptree with other data structures in order to compress the index, while
maintaining search and update
performance. FITing-tree~\cite{galakatos2018tree} uses linear models in its
leaf nodes, while BF-tree~\cite{athanassoulis2014bf} uses bloom filters
in its leaf nodes.

All these works share the idea that using extra computation or data structures can make search
faster by reducing the number of
binary search hops and corresponding cache misses, while allowing larger node sizes and hence a smaller index
size. However, \atlex is different in several ways:
(1) We use a model to split the key space, similar to a trie,
but no search is required until we reach the leaf level. (2) \atlex's
accurate linear models enable larger node sizes without
sacrificing search and update performance. (3) Model-based
insertion reduces the impact of model's misprediction. (4) \atlex's cost models automatically adjust the index structure to dynamic workloads.

\textbf{Memory Optimized Indexes:} There is a large body of work
on optimizing tree index structures for main memory by exploiting
hardware features such as CPU cache, multi-core, SIMD, and prefetching.
CSS-trees~\cite{RaoR99}
improve \bptree's cache behavior by matching index node size to CPU cache-line size and eliminating pointers in index nodes
by using arithmetic operations to find child nodes. CSB$^{+}$-tree~\cite{rao2000making} extends the static CSS-trees by supporting incremental updates
without sacrificing CPU cache performance. \cite{HankinsP03} evaluates the effect of node size on
the performance of CSB$^{+}$-tree analytically and empirically.
pB+-tree~\cite{ChenGM01} uses larger index nodes and relies on
prefetching instructions to bring index nodes into cache before nodes
are accessed. In addition to optimizing for cache performance,
FAST~\cite{kim2010fast} further optimizes searches within index nodes
by exploiting SIMD parallelism.

\textbf{ML in other DB components:}
Machine learning has been used to improve cardinality estimation~\cite{kipf2018learned,Dutt2019selectivity}, query optimization~\cite{Marcus2019Neo}, workload forecasting~\cite{ma2018query}, multi-dimensional indexing~\cite{nathan2020flood}, and data partitioning~\cite{qdtree-sigmod}. 
SageDB~\cite{kraska2019sagedb} envisions a database system in which every component is replaced by a learned component.
These studies show that the use of machine learning enables
workload-specific optimizations, which also inspired our work.

\section{Conclusion}
\label{sec:conclusion}

We build on the excitement of learned indexes by proposing \atlex, a
new updatable learned index that effectively
combines the core insights from the \kraskali with proven storage and indexing techniques.
Specifically, we propose a Gapped Array node layout that uses model-based inserts and exponential search, combined with an adaptive RMI structure driven by simple cost models, to achieve high performance and low memory footprint on dynamic workloads.
Our in-depth experimental results show that \atlex not only consistently beats \bptree across the read-write workload spectrum, it even beats the existing \kraskali, on all datasets,
by up to 2.2$\times$ with read-only workloads.

We believe this paper presents important learnings to our community
and opens avenues for future research in this area. We intend to
pursue \firstrev{open theoretical problems about \atlex performance}, supporting secondary storage for larger than memory datasets,
and new concurrency control techniques tailored to the \atlex design.

\vspace{4pt}\noindent{\bf Acknowledgements.}
This research is supported by Google, Intel, and Microsoft as part of the MIT Data Systems and AI Lab (DSAIL) at MIT, NSF IIS 1900933, DARPA Award 16-43-D3M-FP040, and the MIT Air Force Artificial Intelligence Innovation Accelerator (AIIA). 

\bibliographystyle{ACM-Reference-Format}
\balance
\bibliography{msr-alex-main-arxiv}


\begin{thebibliography}{35}


\ifx \showCODEN    \undefined \def \showCODEN     #1{\unskip}     \fi
\ifx \showDOI      \undefined \def \showDOI       #1{#1}\fi
\ifx \showISBNx    \undefined \def \showISBNx     #1{\unskip}     \fi
\ifx \showISBNxiii \undefined \def \showISBNxiii  #1{\unskip}     \fi
\ifx \showISSN     \undefined \def \showISSN      #1{\unskip}     \fi
\ifx \showLCCN     \undefined \def \showLCCN      #1{\unskip}     \fi
\ifx \shownote     \undefined \def \shownote      #1{#1}          \fi
\ifx \showarticletitle \undefined \def \showarticletitle #1{#1}   \fi
\ifx \showURL      \undefined \def \showURL       {\relax}        \fi
\providecommand\bibfield[2]{#2}
\providecommand\bibinfo[2]{#2}
\providecommand\natexlab[1]{#1}
\providecommand\showeprint[2][]{arXiv:#2}

\bibitem[\protect\citeauthoryear{Athanassoulis and Ailamaki}{Athanassoulis and
  Ailamaki}{2014}]%
        {athanassoulis2014bf}
\bibfield{author}{\bibinfo{person}{Manos Athanassoulis} {and}
  \bibinfo{person}{Anastasia Ailamaki}.} \bibinfo{year}{2014}\natexlab{}.
\newblock \showarticletitle{BF-Tree: Approximate Tree Indexing}.
\newblock \bibinfo{journal}{\emph{Proc. VLDB Endow.}} \bibinfo{volume}{7},
  \bibinfo{number}{14} (\bibinfo{date}{Oct.} \bibinfo{year}{2014}),
  \bibinfo{pages}{1881--1892}.
\newblock
\showISSN{2150-8097}
\urldef\tempurl%
\url{https://doi.org/10.14778/2733085.2733094}
\showDOI{\tempurl}


\bibitem[\protect\citeauthoryear{AWS}{AWS}{[n.d.]}]%
        {openstreetmap}
\bibfield{author}{\bibinfo{person}{Amazon AWS}.}
  \bibinfo{year}{[n.d.]}\natexlab{}.
\newblock \bibinfo{title}{OpenStreetMap on AWS}.
\newblock \bibinfo{howpublished}{\url{https://registry.opendata.aws/osm/}}.
\newblock


\bibitem[\protect\citeauthoryear{Bender, Farach-Colton, and Mosteiro}{Bender
  et~al\mbox{.}}{2006}]%
        {bender2006insertion}
\bibfield{author}{\bibinfo{person}{Michael Bender}, \bibinfo{person}{Martin
  Farach-Colton}, {and} \bibinfo{person}{Miguel Mosteiro}.}
  \bibinfo{year}{2006}\natexlab{}.
\newblock \showarticletitle{Insertion Sort is O(n log n)}.
\newblock \bibinfo{journal}{\emph{Theory of Computing Systems}}
  \bibinfo{volume}{39} (\bibinfo{date}{06} \bibinfo{year}{2006}).
\newblock
\urldef\tempurl%
\url{https://doi.org/10.1007/s00224-005-1237-z}
\showDOI{\tempurl}


\bibitem[\protect\citeauthoryear{Bender and Hu}{Bender and Hu}{2007}]%
        {bender2006adaptive}
\bibfield{author}{\bibinfo{person}{Michael~A Bender} {and}
  \bibinfo{person}{Haodong Hu}.} \bibinfo{year}{2007}\natexlab{}.
\newblock \showarticletitle{An adaptive packed-memory array}.
\newblock \bibinfo{journal}{\emph{ACM Transactions on Database Systems (TODS)}}
  \bibinfo{volume}{32}, \bibinfo{number}{4} (\bibinfo{year}{2007}),
  \bibinfo{pages}{26}.
\newblock


\bibitem[\protect\citeauthoryear{Bingmann}{Bingmann}{[n.d.]}]%
        {stx}
\bibfield{author}{\bibinfo{person}{Timo Bingmann}.}
  \bibinfo{year}{[n.d.]}\natexlab{}.
\newblock \bibinfo{title}{STX B+ Tree}.
\newblock \bibinfo{howpublished}{\url{https://panthema.net/2007/stx-btree/}}.
\newblock


\bibitem[\protect\citeauthoryear{Cao, Dong, Vemuri, and Du}{Cao
  et~al\mbox{.}}{2020}]%
        {rocksdb}
\bibfield{author}{\bibinfo{person}{Zhichao Cao}, \bibinfo{person}{Siying Dong},
  \bibinfo{person}{Sagar Vemuri}, {and} \bibinfo{person}{David~H.C. Du}.}
  \bibinfo{year}{2020}\natexlab{}.
\newblock \showarticletitle{Characterizing, Modeling, and Benchmarking RocksDB
  Key-Value Workloads at Facebook}. In \bibinfo{booktitle}{\emph{18th {USENIX}
  Conference on File and Storage Technologies ({FAST} 20)}}.
  \bibinfo{publisher}{{USENIX} Association}, \bibinfo{address}{Santa Clara,
  CA}, \bibinfo{pages}{209--223}.
\newblock
\showISBNx{978-1-939133-12-0}
\urldef\tempurl%
\url{https://www.usenix.org/conference/fast20/presentation/cao-zhichao}
\showURL{%
\tempurl}


\bibitem[\protect\citeauthoryear{Chen, Gibbons, and Mowry}{Chen
  et~al\mbox{.}}{2001}]%
        {ChenGM01}
\bibfield{author}{\bibinfo{person}{Shimin Chen}, \bibinfo{person}{Phillip~B.
  Gibbons}, {and} \bibinfo{person}{Todd~C. Mowry}.}
  \bibinfo{year}{2001}\natexlab{}.
\newblock \showarticletitle{Improving Index Performance through Prefetching}.
\newblock \bibinfo{journal}{\emph{SIGMOD Rec.}} \bibinfo{volume}{30},
  \bibinfo{number}{2} (\bibinfo{date}{May} \bibinfo{year}{2001}),
  \bibinfo{pages}{235--246}.
\newblock
\showISSN{0163-5808}
\urldef\tempurl%
\url{https://doi.org/10.1145/376284.375688}
\showDOI{\tempurl}


\bibitem[\protect\citeauthoryear{Cooper, Silberstein, Tam, Ramakrishnan, and
  Sears}{Cooper et~al\mbox{.}}{2010}]%
        {cooper2010benchmarking}
\bibfield{author}{\bibinfo{person}{Brian~F. Cooper}, \bibinfo{person}{Adam
  Silberstein}, \bibinfo{person}{Erwin Tam}, \bibinfo{person}{Raghu
  Ramakrishnan}, {and} \bibinfo{person}{Russell Sears}.}
  \bibinfo{year}{2010}\natexlab{}.
\newblock \showarticletitle{Benchmarking Cloud Serving Systems with YCSB}. In
  \bibinfo{booktitle}{\emph{Proceedings of the 1st ACM Symposium on Cloud
  Computing}} (Indianapolis, Indiana, USA) \emph{(\bibinfo{series}{SoCC '10})}.
  \bibinfo{publisher}{ACM}, \bibinfo{address}{New York, NY, USA},
  \bibinfo{pages}{143--154}.
\newblock
\showISBNx{978-1-4503-0036-0}
\urldef\tempurl%
\url{https://doi.org/10.1145/1807128.1807152}
\showDOI{\tempurl}


\bibitem[\protect\citeauthoryear{Dadgar}{Dadgar}{[n.d.]}]%
        {artimpl}
\bibfield{author}{\bibinfo{person}{Armon Dadgar}.}
  \bibinfo{year}{[n.d.]}\natexlab{}.
\newblock \bibinfo{title}{Adaptive Radix Trees implemented in C}.
\newblock
\newblock


\bibitem[\protect\citeauthoryear{Dutt, Wang, Nazi, Kandula, Narasayya, and
  Chaudhuri}{Dutt et~al\mbox{.}}{2019}]%
        {Dutt2019selectivity}
\bibfield{author}{\bibinfo{person}{Anshuman Dutt}, \bibinfo{person}{Chi Wang},
  \bibinfo{person}{Azade Nazi}, \bibinfo{person}{Srikanth Kandula},
  \bibinfo{person}{Vivek Narasayya}, {and} \bibinfo{person}{Surajit
  Chaudhuri}.} \bibinfo{year}{2019}\natexlab{}.
\newblock \showarticletitle{Selectivity Estimation for Range Predicates Using
  Lightweight Models}.
\newblock \bibinfo{journal}{\emph{Proc. VLDB Endow.}} \bibinfo{volume}{12},
  \bibinfo{number}{9} (\bibinfo{date}{May} \bibinfo{year}{2019}),
  \bibinfo{pages}{1044--1057}.
\newblock
\showISSN{2150-8097}
\urldef\tempurl%
\url{https://doi.org/10.14778/3329772.3329780}
\showDOI{\tempurl}


\bibitem[\protect\citeauthoryear{Galakatos, Markovitch, Binnig, Fonseca, and
  Kraska}{Galakatos et~al\mbox{.}}{2019}]%
        {galakatos2018tree}
\bibfield{author}{\bibinfo{person}{Alex Galakatos}, \bibinfo{person}{Michael
  Markovitch}, \bibinfo{person}{Carsten Binnig}, \bibinfo{person}{Rodrigo
  Fonseca}, {and} \bibinfo{person}{Tim Kraska}.}
  \bibinfo{year}{2019}\natexlab{}.
\newblock \showarticletitle{FITing-Tree: A Data-Aware Index Structure}. In
  \bibinfo{booktitle}{\emph{Proceedings of the 2019 International Conference on
  Management of Data}} (Amsterdam, Netherlands) \emph{(\bibinfo{series}{SIGMOD
  ’19})}. \bibinfo{publisher}{Association for Computing Machinery},
  \bibinfo{address}{New York, NY, USA}, \bibinfo{pages}{1189--1206}.
\newblock
\showISBNx{9781450356435}
\urldef\tempurl%
\url{https://doi.org/10.1145/3299869.3319860}
\showDOI{\tempurl}


\bibitem[\protect\citeauthoryear{Graefe}{Graefe}{2006}]%
        {graefe2006b}
\bibfield{author}{\bibinfo{person}{Goetz Graefe}.}
  \bibinfo{year}{2006}\natexlab{}.
\newblock \showarticletitle{B-Tree Indexes, Interpolation Search, and Skew}. In
  \bibinfo{booktitle}{\emph{Proceedings of the 2nd International Workshop on
  Data Management on New Hardware}} (Chicago, Illinois)
  \emph{(\bibinfo{series}{DaMoN ’06})}. \bibinfo{publisher}{Association for
  Computing Machinery}, \bibinfo{address}{New York, NY, USA},
  \bibinfo{pages}{5--es}.
\newblock
\showISBNx{1595934669}
\urldef\tempurl%
\url{https://doi.org/10.1145/1140402.1140409}
\showDOI{\tempurl}


\bibitem[\protect\citeauthoryear{Graefe}{Graefe}{2011}]%
        {graefe2011modern}
\bibfield{author}{\bibinfo{person}{Goetz Graefe}.}
  \bibinfo{year}{2011}\natexlab{}.
\newblock \showarticletitle{Modern B-Tree Techniques}.
\newblock \bibinfo{journal}{\emph{Found. Trends Databases}}
  \bibinfo{volume}{3}, \bibinfo{number}{4} (\bibinfo{date}{April}
  \bibinfo{year}{2011}), \bibinfo{pages}{203--402}.
\newblock
\showISSN{1931-7883}
\urldef\tempurl%
\url{https://doi.org/10.1561/1900000028}
\showDOI{\tempurl}


\bibitem[\protect\citeauthoryear{Hankins and Patel}{Hankins and Patel}{2003}]%
        {HankinsP03}
\bibfield{author}{\bibinfo{person}{Richard~A. Hankins} {and}
  \bibinfo{person}{Jignesh~M. Patel}.} \bibinfo{year}{2003}\natexlab{}.
\newblock \showarticletitle{Effect of Node Size on the Performance of
  Cache-Conscious B+-Trees}. In \bibinfo{booktitle}{\emph{Proceedings of the
  2003 ACM SIGMETRICS International Conference on Measurement and Modeling of
  Computer Systems}} (San Diego, CA, USA) \emph{(\bibinfo{series}{SIGMETRICS
  ’03})}. \bibinfo{publisher}{Association for Computing Machinery},
  \bibinfo{address}{New York, NY, USA}, \bibinfo{pages}{283--294}.
\newblock
\showISBNx{1581136641}
\urldef\tempurl%
\url{https://doi.org/10.1145/781027.781063}
\showDOI{\tempurl}


\bibitem[\protect\citeauthoryear{Kim, Chhugani, Satish, Sedlar, Nguyen,
  Kaldewey, Lee, Brandt, and Dubey}{Kim et~al\mbox{.}}{2010}]%
        {kim2010fast}
\bibfield{author}{\bibinfo{person}{Changkyu Kim}, \bibinfo{person}{Jatin
  Chhugani}, \bibinfo{person}{Nadathur Satish}, \bibinfo{person}{Eric Sedlar},
  \bibinfo{person}{Anthony~D. Nguyen}, \bibinfo{person}{Tim Kaldewey},
  \bibinfo{person}{Victor~W. Lee}, \bibinfo{person}{Scott~A. Brandt}, {and}
  \bibinfo{person}{Pradeep Dubey}.} \bibinfo{year}{2010}\natexlab{}.
\newblock \showarticletitle{FAST: Fast Architecture Sensitive Tree Search on
  Modern CPUs and GPUs}. In \bibinfo{booktitle}{\emph{Proceedings of the 2010
  ACM SIGMOD International Conference on Management of Data}} (Indianapolis,
  Indiana, USA) \emph{(\bibinfo{series}{SIGMOD ’10})}.
  \bibinfo{publisher}{Association for Computing Machinery},
  \bibinfo{address}{New York, NY, USA}, \bibinfo{pages}{339--350}.
\newblock
\showISBNx{9781450300322}
\urldef\tempurl%
\url{https://doi.org/10.1145/1807167.1807206}
\showDOI{\tempurl}


\bibitem[\protect\citeauthoryear{Kipf, Kipf, Radke, Leis, Boncz, and
  Kemper}{Kipf et~al\mbox{.}}{2018}]%
        {kipf2018learned}
\bibfield{author}{\bibinfo{person}{Andreas Kipf}, \bibinfo{person}{Thomas
  Kipf}, \bibinfo{person}{Bernhard Radke}, \bibinfo{person}{Viktor Leis},
  \bibinfo{person}{Peter Boncz}, {and} \bibinfo{person}{Alfons Kemper}.}
  \bibinfo{year}{2018}\natexlab{}.
\newblock \showarticletitle{Learned Cardinalities: Estimating Correlated Joins
  with Deep Learning}.
\newblock \bibinfo{journal}{\emph{arXiv preprint arXiv:1809.00677}}
  (\bibinfo{year}{2018}).
\newblock


\bibitem[\protect\citeauthoryear{Knuth}{Knuth}{1998}]%
        {knuth1997art}
\bibfield{author}{\bibinfo{person}{Donald~E. Knuth}.}
  \bibinfo{year}{1998}\natexlab{}.
\newblock \bibinfo{booktitle}{\emph{The Art of Computer Programming,  Volume
  3: (2nd Ed.) Sorting and Searching}}.
\newblock \bibinfo{publisher}{Addison Wesley Longman Publishing Co., Inc.},
  \bibinfo{address}{USA}.
\newblock
\showISBNx{0201896850}


\bibitem[\protect\citeauthoryear{Kraska, Alizadeh, Beutel, Chi, Ding, Kristo,
  Leclerc, Madden, Mao, and Nathan}{Kraska et~al\mbox{.}}{2019}]%
        {kraska2019sagedb}
\bibfield{author}{\bibinfo{person}{Tim Kraska}, \bibinfo{person}{Mohammad
  Alizadeh}, \bibinfo{person}{Alex Beutel}, \bibinfo{person}{Ed~H. Chi},
  \bibinfo{person}{Jialin Ding}, \bibinfo{person}{Ani Kristo},
  \bibinfo{person}{Guillaume Leclerc}, \bibinfo{person}{Samuel Madden},
  \bibinfo{person}{Hongzi Mao}, {and} \bibinfo{person}{Vikram Nathan}.}
  \bibinfo{year}{2019}\natexlab{}.
\newblock \showarticletitle{SageDB: {A} Learned Database System}. In
  \bibinfo{booktitle}{\emph{{CIDR} 2019, 9th Biennial Conference on Innovative
  Data Systems Research, Asilomar, CA, USA, January 13-16, 2019, Online
  Proceedings}}. \bibinfo{publisher}{www.cidrdb.org}.
\newblock
\urldef\tempurl%
\url{http://cidrdb.org/cidr2019/papers/p117-kraska-cidr19.pdf}
\showURL{%
\tempurl}


\bibitem[\protect\citeauthoryear{Kraska, Beutel, Chi, Dean, and
  Polyzotis}{Kraska et~al\mbox{.}}{2018}]%
        {kraska2018case}
\bibfield{author}{\bibinfo{person}{Tim Kraska}, \bibinfo{person}{Alex Beutel},
  \bibinfo{person}{Ed~H. Chi}, \bibinfo{person}{Jeffrey Dean}, {and}
  \bibinfo{person}{Neoklis Polyzotis}.} \bibinfo{year}{2018}\natexlab{}.
\newblock \showarticletitle{The Case for Learned Index Structures}. In
  \bibinfo{booktitle}{\emph{Proceedings of the 2018 International Conference on
  Management of Data}} (Houston, TX, USA) \emph{(\bibinfo{series}{SIGMOD
  ’18})}. \bibinfo{publisher}{Association for Computing Machinery},
  \bibinfo{address}{New York, NY, USA}, \bibinfo{pages}{489--504}.
\newblock
\showISBNx{9781450347037}
\urldef\tempurl%
\url{https://doi.org/10.1145/3183713.3196909}
\showDOI{\tempurl}


\bibitem[\protect\citeauthoryear{Kriegel and Seeger}{Kriegel and
  Seeger}{1988}]%
        {kriegel1988plop}
\bibfield{author}{\bibinfo{person}{Hans-Peter Kriegel} {and}
  \bibinfo{person}{Bernhard Seeger}.} \bibinfo{year}{1988}\natexlab{}.
\newblock \showarticletitle{PLOP-Hashing: A Grid File without Directory}. In
  \bibinfo{booktitle}{\emph{Proceedings of the Fourth International Conference
  on Data Engineering}}. \bibinfo{publisher}{IEEE Computer Society},
  \bibinfo{address}{USA}, \bibinfo{pages}{369--376}.
\newblock
\showISBNx{0818608277}


\bibitem[\protect\citeauthoryear{Leis, Kemper, and Neumann}{Leis
  et~al\mbox{.}}{2013}]%
        {leis2013adaptive}
\bibfield{author}{\bibinfo{person}{Viktor Leis}, \bibinfo{person}{Alfons
  Kemper}, {and} \bibinfo{person}{Thomas Neumann}.}
  \bibinfo{year}{2013}\natexlab{}.
\newblock \showarticletitle{The Adaptive Radix Tree: ARTful Indexing for
  Main-Memory Databases}. In \bibinfo{booktitle}{\emph{Proceedings of the 2013
  IEEE International Conference on Data Engineering (ICDE 2013)}}
  \emph{(\bibinfo{series}{ICDE ’13})}. \bibinfo{publisher}{IEEE Computer
  Society}, \bibinfo{address}{USA}, \bibinfo{pages}{38--49}.
\newblock
\showISBNx{9781467349093}
\urldef\tempurl%
\url{https://doi.org/10.1109/ICDE.2013.6544812}
\showDOI{\tempurl}


\bibitem[\protect\citeauthoryear{Lomet}{Lomet}{1981}]%
        {lomet1981digital}
\bibfield{author}{\bibinfo{person}{David~B Lomet}.}
  \bibinfo{year}{1981}\natexlab{}.
\newblock \showarticletitle{Digital B-trees}. In
  \bibinfo{booktitle}{\emph{Proceedings of the seventh international conference
  on Very Large Data Bases-Volume 7}}. VLDB Endowment,
  \bibinfo{pages}{333--344}.
\newblock


\bibitem[\protect\citeauthoryear{Lomet}{Lomet}{1987}]%
        {Lomet:1987:PEF:12047.12049}
\bibfield{author}{\bibinfo{person}{David~B. Lomet}.}
  \bibinfo{year}{1987}\natexlab{}.
\newblock \showarticletitle{Partial Expansions for File Organizations with an
  Index}.
\newblock \bibinfo{journal}{\emph{ACM Trans. Database Syst.}}
  \bibinfo{volume}{12}, \bibinfo{number}{1} (\bibinfo{date}{March}
  \bibinfo{year}{1987}), \bibinfo{pages}{65--84}.
\newblock
\showISSN{0362-5915}
\urldef\tempurl%
\url{https://doi.org/10.1145/12047.12049}
\showDOI{\tempurl}


\bibitem[\protect\citeauthoryear{Ma, Van~Aken, Hefny, Mezerhane, Pavlo, and
  Gordon}{Ma et~al\mbox{.}}{2018}]%
        {ma2018query}
\bibfield{author}{\bibinfo{person}{Lin Ma}, \bibinfo{person}{Dana Van~Aken},
  \bibinfo{person}{Ahmed Hefny}, \bibinfo{person}{Gustavo Mezerhane},
  \bibinfo{person}{Andrew Pavlo}, {and} \bibinfo{person}{Geoffrey~J. Gordon}.}
  \bibinfo{year}{2018}\natexlab{}.
\newblock \showarticletitle{Query-Based Workload Forecasting for Self-Driving
  Database Management Systems}. In \bibinfo{booktitle}{\emph{Proceedings of the
  2018 International Conference on Management of Data}} (Houston, TX, USA)
  \emph{(\bibinfo{series}{SIGMOD ’18})}. \bibinfo{publisher}{Association for
  Computing Machinery}, \bibinfo{address}{New York, NY, USA},
  \bibinfo{pages}{631--645}.
\newblock
\showISBNx{9781450347037}
\urldef\tempurl%
\url{https://doi.org/10.1145/3183713.3196908}
\showDOI{\tempurl}


\bibitem[\protect\citeauthoryear{Mao, Kohler, and Morris}{Mao
  et~al\mbox{.}}{2012}]%
        {mao2012cache}
\bibfield{author}{\bibinfo{person}{Yandong Mao}, \bibinfo{person}{Eddie
  Kohler}, {and} \bibinfo{person}{Robert~Tappan Morris}.}
  \bibinfo{year}{2012}\natexlab{}.
\newblock \showarticletitle{Cache Craftiness for Fast Multicore Key-Value
  Storage}. In \bibinfo{booktitle}{\emph{Proceedings of the 7th ACM European
  Conference on Computer Systems}} (Bern, Switzerland)
  \emph{(\bibinfo{series}{EuroSys ’12})}. \bibinfo{publisher}{Association for
  Computing Machinery}, \bibinfo{address}{New York, NY, USA},
  \bibinfo{pages}{183--196}.
\newblock
\showISBNx{9781450312233}
\urldef\tempurl%
\url{https://doi.org/10.1145/2168836.2168855}
\showDOI{\tempurl}


\bibitem[\protect\citeauthoryear{Marcus, Negi, Mao, Zhang, Alizadeh, Kraska,
  Papaemmanouil, and Tatbul}{Marcus et~al\mbox{.}}{2019}]%
        {Marcus2019Neo}
\bibfield{author}{\bibinfo{person}{Ryan Marcus}, \bibinfo{person}{Parimarjan
  Negi}, \bibinfo{person}{Hongzi Mao}, \bibinfo{person}{Chi Zhang},
  \bibinfo{person}{Mohammad Alizadeh}, \bibinfo{person}{Tim Kraska},
  \bibinfo{person}{Olga Papaemmanouil}, {and} \bibinfo{person}{Nesime Tatbul}.}
  \bibinfo{year}{2019}\natexlab{}.
\newblock \showarticletitle{Neo: A Learned Query Optimizer}.
\newblock \bibinfo{journal}{\emph{Proc. VLDB Endow.}} \bibinfo{volume}{12},
  \bibinfo{number}{11} (\bibinfo{date}{July} \bibinfo{year}{2019}),
  \bibinfo{pages}{1705--1718}.
\newblock
\showISSN{2150-8097}
\urldef\tempurl%
\url{https://doi.org/10.14778/3342263.3342644}
\showDOI{\tempurl}


\bibitem[\protect\citeauthoryear{Nathan, Ding, Alizadeh, and Kraska}{Nathan
  et~al\mbox{.}}{2020}]%
        {nathan2020flood}
\bibfield{author}{\bibinfo{person}{Vikram Nathan}, \bibinfo{person}{Jialin
  Ding}, \bibinfo{person}{Mohammad Alizadeh}, {and} \bibinfo{person}{Tim
  Kraska}.} \bibinfo{year}{2020}\natexlab{}.
\newblock \showarticletitle{Learning Multi-dimensional Indexes}. In
  \bibinfo{booktitle}{\emph{Proceedings of the 2020 International Conference on
  Management of Data}} (Portland, OR, USA) \emph{(\bibinfo{series}{SIGMOD
  ’20})}. \bibinfo{publisher}{Association for Computing Machinery},
  \bibinfo{address}{New York, NY, USA}.
\newblock
\urldef\tempurl%
\url{https://doi.org/10.1145/3318464.3380579}
\showDOI{\tempurl}


\bibitem[\protect\citeauthoryear{Neumann}{Neumann}{2017}]%
        {neumann-blog}
\bibfield{author}{\bibinfo{person}{Thomas Neumann}.}
  \bibinfo{year}{2017}\natexlab{}.
\newblock \bibinfo{title}{The Case for B-Tree Index Structures}.
\newblock
  \bibinfo{howpublished}{\url{http://databasearchitects.blogspot.com/2017/12/the-case-for-b-tree-index-structures.html}}.
\newblock


\bibitem[\protect\citeauthoryear{Ramakrishnan and Gehrke}{Ramakrishnan and
  Gehrke}{2002}]%
        {dbms}
\bibfield{author}{\bibinfo{person}{Raghu Ramakrishnan} {and}
  \bibinfo{person}{Johannes Gehrke}.} \bibinfo{year}{2002}\natexlab{}.
\newblock \bibinfo{booktitle}{\emph{Database Management Systems}
  (\bibinfo{edition}{3} ed.)}.
\newblock \bibinfo{publisher}{McGraw-Hill, Inc.}, \bibinfo{address}{USA}.
\newblock
\showISBNx{0072465638}


\bibitem[\protect\citeauthoryear{Rao and Ross}{Rao and Ross}{1999}]%
        {RaoR99}
\bibfield{author}{\bibinfo{person}{Jun Rao} {and} \bibinfo{person}{Kenneth~A.
  Ross}.} \bibinfo{year}{1999}\natexlab{}.
\newblock \showarticletitle{Cache Conscious Indexing for Decision-Support in
  Main Memory}. In \bibinfo{booktitle}{\emph{Proceedings of the 25th
  International Conference on Very Large Data Bases}}
  \emph{(\bibinfo{series}{VLDB ’99})}. \bibinfo{publisher}{Morgan Kaufmann
  Publishers Inc.}, \bibinfo{address}{San Francisco, CA, USA},
  \bibinfo{pages}{78--89}.
\newblock
\showISBNx{1558606157}


\bibitem[\protect\citeauthoryear{Rao and Ross}{Rao and Ross}{2000}]%
        {rao2000making}
\bibfield{author}{\bibinfo{person}{Jun Rao} {and} \bibinfo{person}{Kenneth~A.
  Ross}.} \bibinfo{year}{2000}\natexlab{}.
\newblock \showarticletitle{Making B+- Trees Cache Conscious in Main Memory}.
  In \bibinfo{booktitle}{\emph{Proceedings of the 2000 ACM SIGMOD International
  Conference on Management of Data}} (Dallas, Texas, USA)
  \emph{(\bibinfo{series}{SIGMOD ’00})}. \bibinfo{publisher}{Association for
  Computing Machinery}, \bibinfo{address}{New York, NY, USA},
  \bibinfo{pages}{475--486}.
\newblock
\showISBNx{1581132174}
\urldef\tempurl%
\url{https://doi.org/10.1145/342009.335449}
\showDOI{\tempurl}


\bibitem[\protect\citeauthoryear{TPC}{TPC}{[n.d.]}]%
        {tpcc}
\bibfield{author}{\bibinfo{person}{TPC}.} \bibinfo{year}{[n.d.]}\natexlab{}.
\newblock \bibinfo{title}{TPC-C}.
\newblock \bibinfo{howpublished}{\url{http://www.tpc.org/tpcc/}}.
\newblock


\bibitem[\protect\citeauthoryear{Van~Sandt, Chronis, and Patel}{Van~Sandt
  et~al\mbox{.}}{2019}]%
        {interpolsearch2019}
\bibfield{author}{\bibinfo{person}{Peter Van~Sandt}, \bibinfo{person}{Yannis
  Chronis}, {and} \bibinfo{person}{Jignesh~M. Patel}.}
  \bibinfo{year}{2019}\natexlab{}.
\newblock \showarticletitle{Efficiently Searching In-Memory Sorted Arrays:
  Revenge of the Interpolation Search?}. In
  \bibinfo{booktitle}{\emph{Proceedings of the 2019 International Conference on
  Management of Data}} (Amsterdam, Netherlands) \emph{(\bibinfo{series}{SIGMOD
  ’19})}. \bibinfo{publisher}{Association for Computing Machinery},
  \bibinfo{address}{New York, NY, USA}, \bibinfo{pages}{36--53}.
\newblock
\showISBNx{9781450356435}
\urldef\tempurl%
\url{https://doi.org/10.1145/3299869.3300075}
\showDOI{\tempurl}


\bibitem[\protect\citeauthoryear{Wu, Yu, Tian, Sidle, and Barber}{Wu
  et~al\mbox{.}}{2019}]%
        {wu2019hermit}
\bibfield{author}{\bibinfo{person}{Yingjun Wu}, \bibinfo{person}{Jia Yu},
  \bibinfo{person}{Yuanyuan Tian}, \bibinfo{person}{Richard Sidle}, {and}
  \bibinfo{person}{Ronald Barber}.} \bibinfo{year}{2019}\natexlab{}.
\newblock \showarticletitle{Designing Succinct Secondary Indexing Mechanism by
  Exploiting Column Correlations}. In \bibinfo{booktitle}{\emph{Proceedings of
  the 2019 International Conference on Management of Data}} (Amsterdam,
  Netherlands) \emph{(\bibinfo{series}{SIGMOD ’19})}.
  \bibinfo{publisher}{Association for Computing Machinery},
  \bibinfo{address}{New York, NY, USA}, \bibinfo{pages}{1223--1240}.
\newblock
\showISBNx{9781450356435}
\urldef\tempurl%
\url{https://doi.org/10.1145/3299869.3319861}
\showDOI{\tempurl}


\bibitem[\protect\citeauthoryear{Yang, Chandramouli, Wang, Gehrke, Li, Minhas,
  Larson, Kossmann, and Acharya}{Yang et~al\mbox{.}}{2020}]%
        {qdtree-sigmod}
\bibfield{author}{\bibinfo{person}{Zongheng Yang}, \bibinfo{person}{Badrish
  Chandramouli}, \bibinfo{person}{Chi Wang}, \bibinfo{person}{Johannes Gehrke},
  \bibinfo{person}{Yinan Li}, \bibinfo{person}{Umar~F. Minhas},
  \bibinfo{person}{Per-\AA{ke} Larson}, \bibinfo{person}{Donald Kossmann},
  {and} \bibinfo{person}{Rajeev Acharya}.} \bibinfo{year}{2020}\natexlab{}.
\newblock \showarticletitle{{Qd-tree: Learning Data Layouts for Big Data
  Analytics}}. In \bibinfo{booktitle}{\emph{Proceedings of the 2020
  International Conference on Management of Data}}.
\newblock


\end{thebibliography}

\clearpage
\nobalance
\iftoggle{sigmod}{
}{
\renewcommand{\thesubsection}{\Alph{subsection}}

\appendix
\subsection{Extended Bulk Loading Evaluation}
\label{sec:bulk_loading_eval}
In this section, we provide an extended version of \cref{subsubsec:main_bulk_loading_eval}, which evaluates the speed of \atlex's bulk loading mechanism against other indexes.
\atlex uses two optimizations for bulk loading---approximate model computation (AMC) and approximate cost computation (ACC)---which we explain in more detail below.

\begin{figure}
    \includegraphics[width=\columnwidth,trim={8 8 7 8},clip]{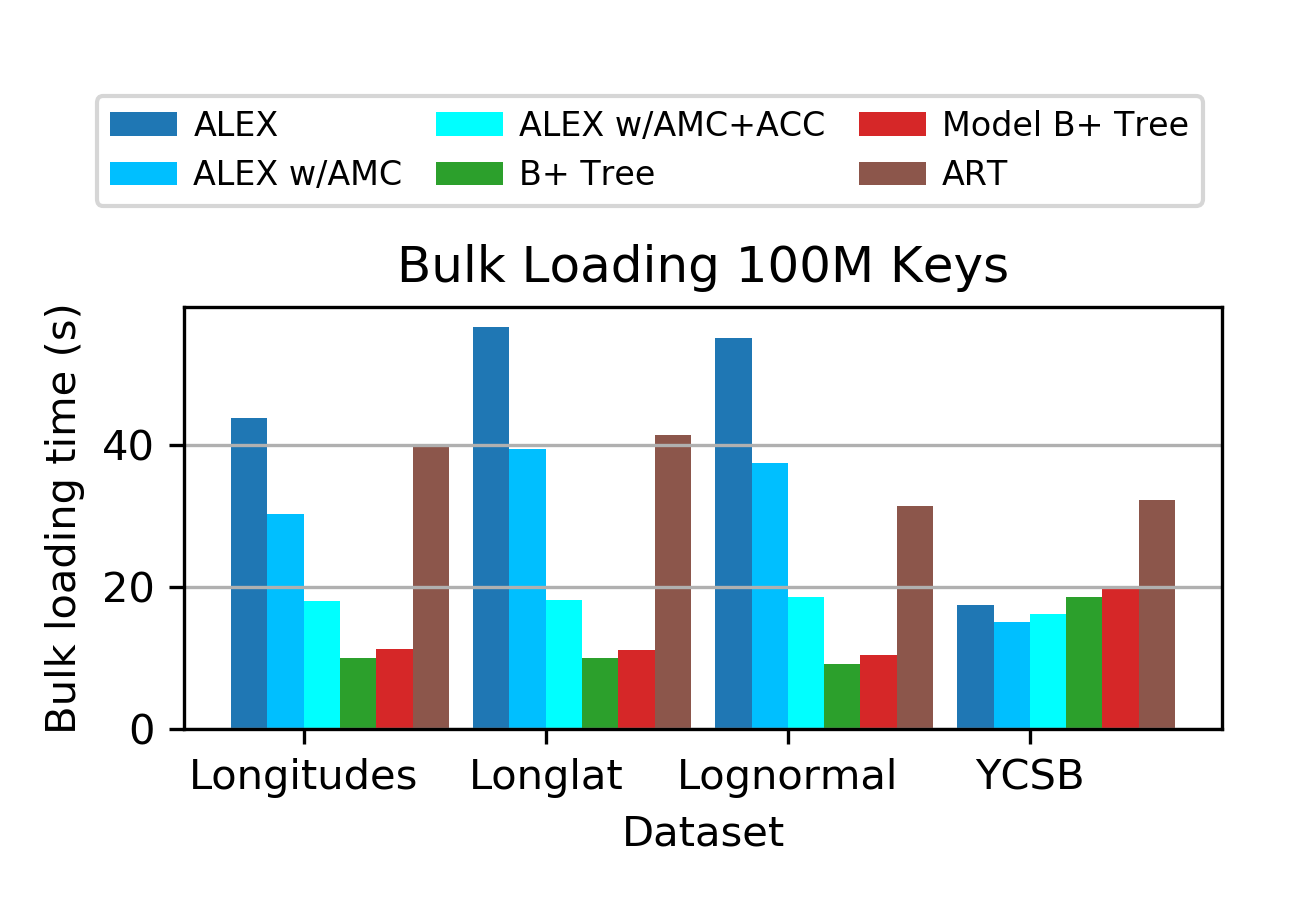}
    \caption{
        With both optimizations (AMC and ACC), \atlex only takes 50\% more than time than \bptree to bulk load when averaged across four datasets.
    }
    \label{fig:bulk_loading_time}
\end{figure}

For each index, we bulk load 100 million keys from each of the four datasets from \cref{sec:exp}.
This includes the time used to sort the 100 million keys.
\cref{fig:bulk_loading_time}, which is a more detailed version of \cref{fig:simple_bulk_loading_time}, shows that on average, \atlex with no optimizations takes 3.6$\times$ more time to bulk load than \bptree, which is the fastest index to bulk load.
However, with the AMC optimization, \atlex takes 2.6$\times$ more time on average than \bptree.
With both optimizations, \atlex only takes 50\% more than time on average than \bptree, and in the worst case is only 2$\times$ slower than \bptree.
The results for \atlex in \cref{subsubsec:main_bulk_loading_eval} use both optimizations.
On the YCSB dataset, \atlex's structure is very simple (\cref{tab:node_size}), and therefore is very efficient to bulk load even when unoptimized; in fact, \atlex's large node sizes allow \atlex to bulk load faster than other indexes due to the benefits of locality.


\begin{figure}
	\centering
	\includegraphics[width=\columnwidth]
	{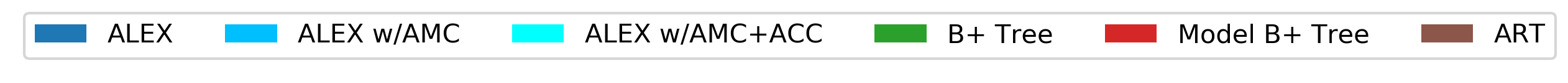}
	\vspace{-2em}
	\\
    \subfloat{
        \includegraphics[width=0.48\columnwidth,trim={5 8 7 8},clip]{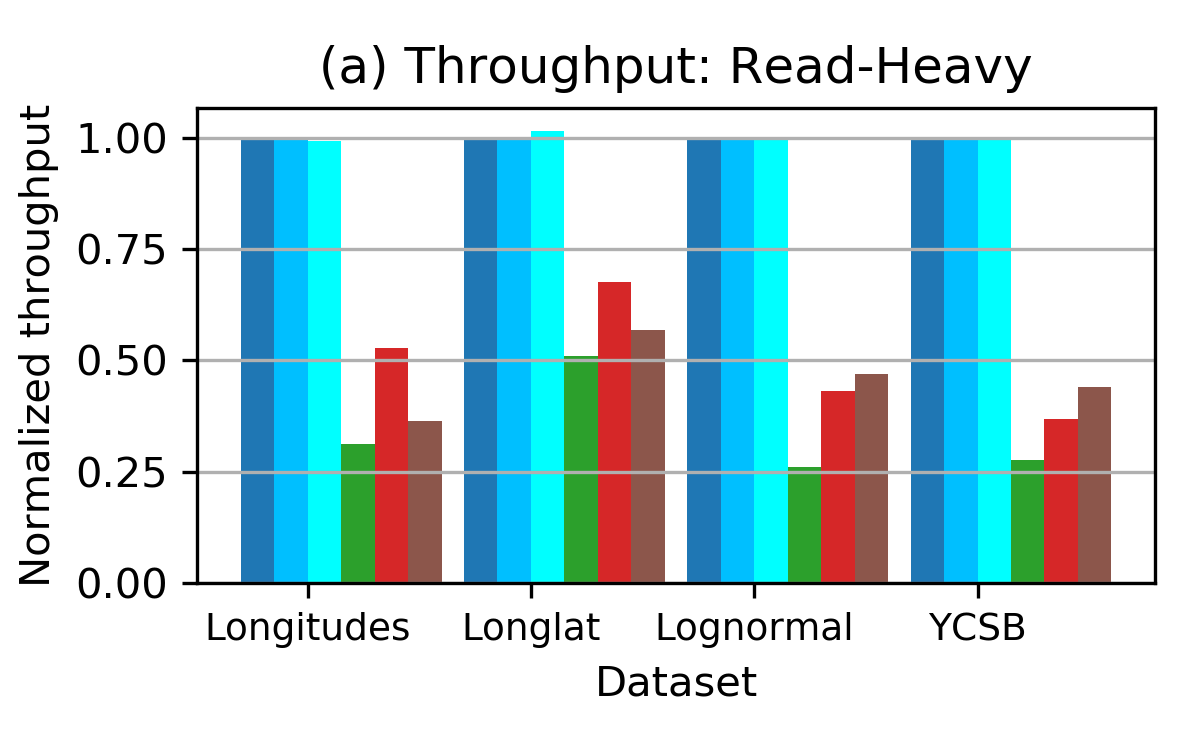}
        }
    ~
    \subfloat{
        \includegraphics[width=0.48\columnwidth,trim={5 8 7 8},clip]{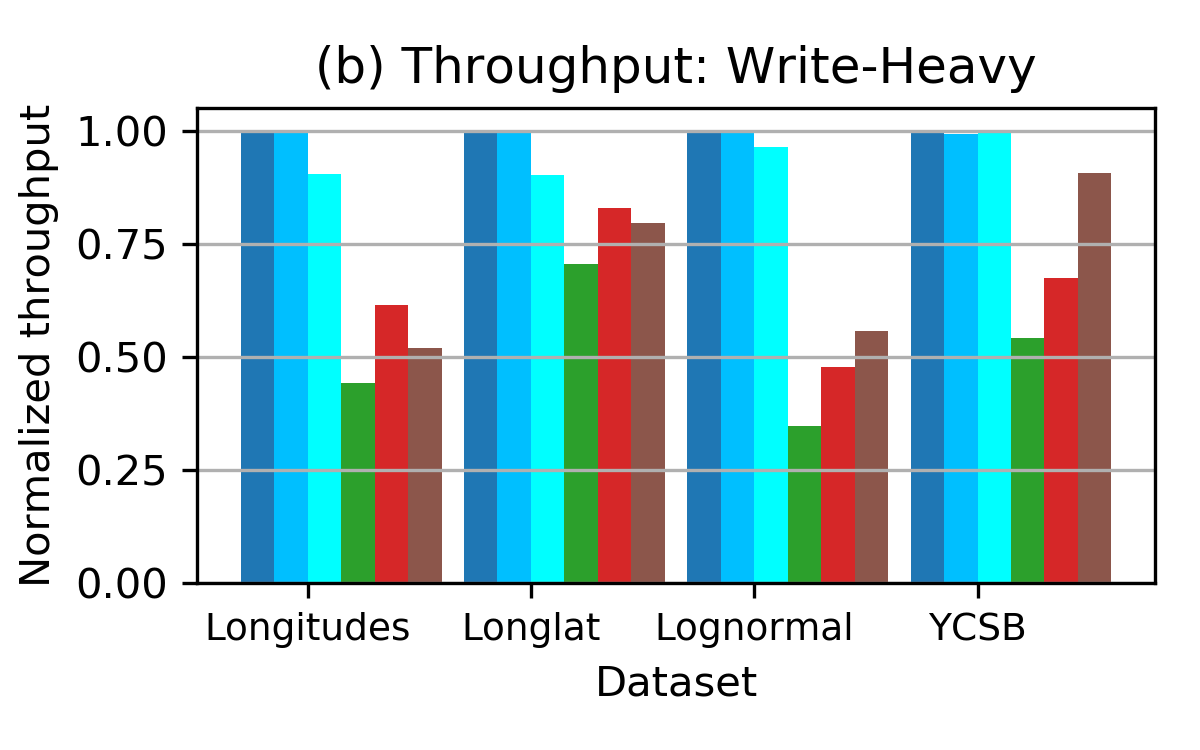}
        }
    \caption{
        Using the AMC optimization when bulk loading does not cause any noticeable change in \atlex performance, but ACC can cause a slight decrease in throughput for write-heavy workloads.
    }
    \label{fig:bulk_load_throughput}
\end{figure}

\cref{fig:bulk_load_throughput} shows the impact of the two optimizations on the throughput of running a read-heavy or write-heavy workload on \atlex after bulk loading.
The AMC optimization has negligible impact on throughput for all datasets and for both workloads.
Adding the ACC optimization has negligible impact on the read-heavy workload, but decreases throughput by up to 9.6\% on the write-heavy workload; we provide explanation for this behavior below.

Based on these results, we conclude that the AMC optimization should always be used to improve bulk loading performance, whereas the ACC optimization might cause a slight decrease in throughput performance and therefore should be used only if faster bulk loading is required.
We now explain the two optimizations in more detail.

\subsubsection{Approximate Model Computation}
\label{sec:amc}
We perform approximate model computation (AMC) efficiently while achieving accuracy through \emph{progressive systematic sampling}.
Given a data node of sorted keys, we perform \emph{systematic sampling} (i.e., sampling every $n$th key) to obtain a small sample of keys, and compute a linear regression model using that sample.
We then repeatedly double the sample size and recompute the linear model using the larger sample.
When the relative change in the model parameters (i.e., the slope and intercept) both change by less than 1\% from one sample to the next, we terminate the process.
In our experience, this 1\% threshold strikes a balance between achieving accuracy and using small sample sizes, and did not need to be tuned.

Note that by using systematic sampling, all keys in the sample used to compute the current model will also appear in all subsequent samples.
Therefore, each linear model can be computed \emph{progressively} starting from the existing model computed from the previous sample, instead of from scratch.
No redundant work is done, and even in the worst case, AMC will take no more time than computing one linear model from all keys (if we ignore minor overheads and the effects of locality).

\subsubsection{Approximate Cost Computation}
\label{sec:acc}
We also perform approximate intra-node cost computation (ACC) for a data node of sorted keys through progressive systematic sampling.
However, ACC differs from AMC in two ways.
First, the cost for a data node must be computed from scratch for each sample; the cost depends on where keys are placed within a Gapped Array, which itself depends on which keys are present in the sample.
Second, and more importantly, the cost of a data node naturally increases with the number of keys in the node.
This makes ACC an extrapolation problem (i.e., use the cost of a small sample to predict the cost of the entire data node), whereas AMC is an estimation problem (i.e., use a small sample to directly estimate the model parameters).

ACC repeatedly doubles its sample size.
Let the latest three samples be $s_1$, which is half the size of $s_2$, which is half the size of $s_3$.
Let the costs computed from these samples be $c_1$, $c_2$, and $c_3$, respectively.
We use the $c_1$ and $c_2$ to perform a linear extrapolation to predict $c_3$.
If this prediction is accurate (i.e., if relative error with the true $c_3$ is within 20\%), then we use $c_2$ and $c_3$ to perform a linear extrapolation to predict the cost of the entire data node, and we terminate the process.\footnote{In reality, we do not predict the cost directly, but rather each component of the cost (search iterations per lookup and shifts per insert) independently. This is because the expected extrapolation behavior differs: iterations per lookup grows logarithmically with sample size, whereas shifts per insert grows linearly with sample size.}
Otherwise, we continue doubling the sample size.
The intuition behind this process is that we want to verify the accuracy of extrapolation using small samples before extrapolating to the entire data node.
We allow a higher relative error than for AMC because the extrapolation process is inherently imprecise, since it is impossible to accurately predict the cost using a sample without a priori knowledge of the data distribution.

We can now explain why \cref{fig:bulk_load_throughput} shows that adding the ACC optimization decreases throughput on the write-heavy workload by up to 9.6\%.
It is because the average number of shifts per insert, which is one component of the intra-node cost, is difficult to estimate accurately.
Therefore, if ACC underestimates the component of cost related to shifts, the bulk loaded \atlex structure may be inefficient for inserts (e.g., an insert that requires more shifts than expected can be very slow).
The intra-node cost is more difficult to approximate accurately for the longitudes and longlat datasets, which is why the decrease in throughput is most noticeable for those two datasets.
However, note that over time, the dynamic nature of \atlex will eventually correct for incorrectly estimated costs, so throughput performance in the long run will be independent of the bulk loading mechanism.

\subsection{Extreme Distribution Shift Evaluation}
\label{sec:extreme_distshift_eval}
In order to evaluate the performance of \atlex under a radically changing key distribution, we combine the four datasets from \cref{sec:exp} into one dataset by randomly selecting 50 million keys from each of the four datasets in order to create one combined dataset with 200 million keys.
We scaled keys from each dataset to fit in the same domain.
Note that we would not typically expect a single table to contain keys from four independent distributions.
Therefore, this complex combined dataset is an extreme stress test for the adaptibility of \atlex.

We run a write-heavy workload (50\% point lookups and 50\% inserts) over the combined dataset, but we vary the order in which keys are bulk loaded and inserted.
For all variants, we bulk load using 50 million keys and run the write-heavy workload until the remaining 150 million keys are all inserted.
We create four variants that represent distribution shift; each variant bulk loads using the 50 million keys selected from one of the four original datasets, then gradually inserts keys from the other three original datasets, in order.
For example, the ``L-LN-LL-Y'' variant bulk loads using the 50 million keys selected from the longitudes (L) dataset, then runs the write-heavy workload by inserting the 50 million keys from the lognormal (LN) dataset, then the longlat (LL) dataset, and finally the YCSB (Y) dataset.
For reference, we also include a variant in which all 200 million keys are shuffled, so that no key distribution shift is observed.

\begin{figure}
    \includegraphics[width=\columnwidth,trim={8 8 7 8},clip]{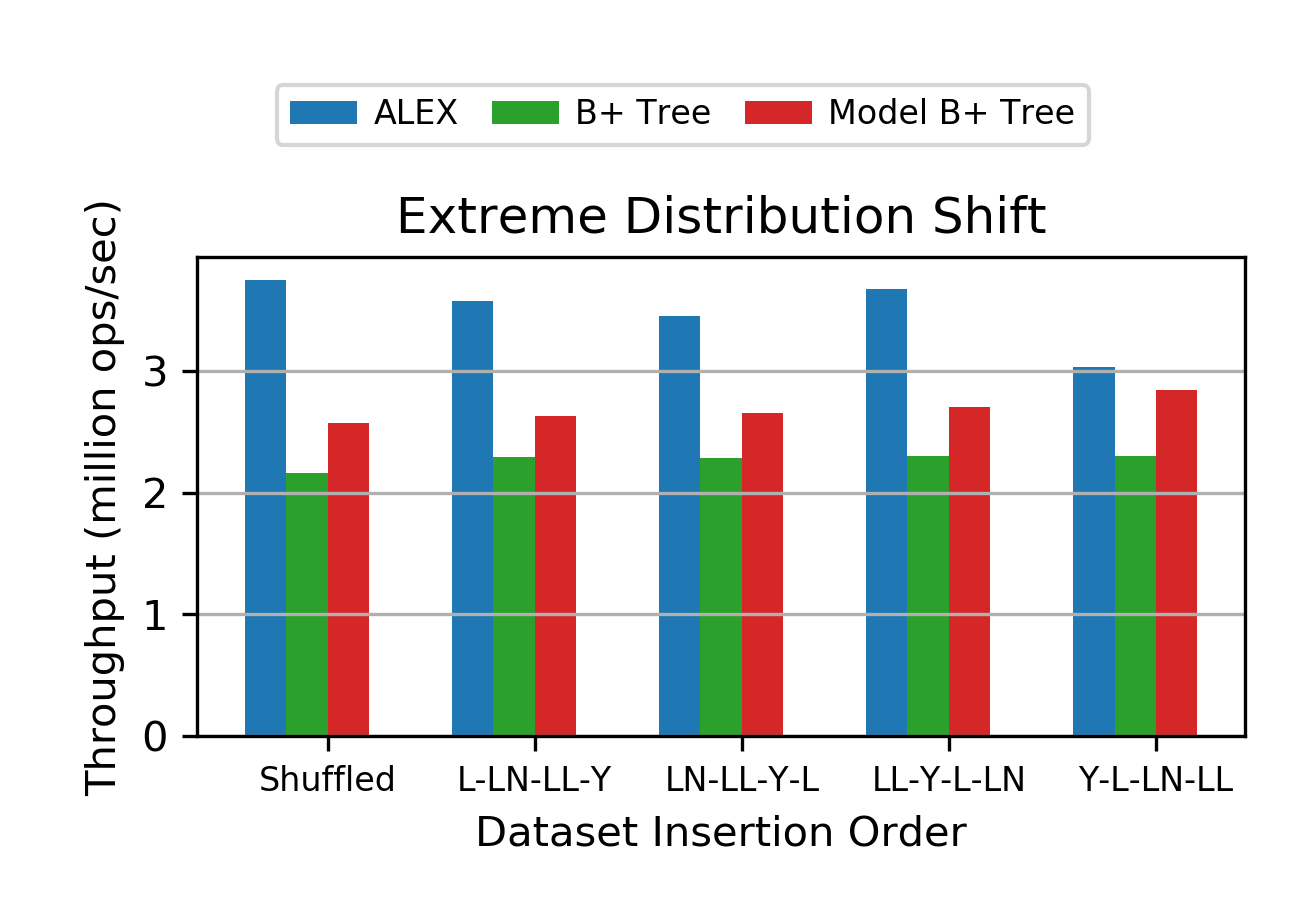}
    \caption{
        \atlex maintains high performance under radically changing key distribution, although performance does differ slightly depending on the distribution used for bulk loading.
    }
    \label{fig:combined_distshift}
\end{figure}

\cref{fig:combined_distshift} provides three insights.
First, on the workload that represents no distribution shift (``Shuffled''), \atlex continues to outperform other indexes.
It is interesting to note that the throughput of \atlex on the combined dataset is between the throughputs achieved on each dataset individually (\cref{fig:punchline_write}): higher than for longlat, and lower than for the other three datasets.
Second, \atlex achieves lower throughput in the four variants that represent distribution shift than without distribution shift, but still outperforms other indexes.
This result aligns with the intuition that \atlex must spend extra time restructuring itself to adapt to the changing key distribution.
Third, the throughput differs based on which dataset's keys are used to bulk load \atlex.
When bulk loading using keys from a complex key distribution, such as longlat, \atlex achieves throughput similar to the variant with no distribution shift; on the other hand, when bulk loading using keys from a simple key distribution, such as YCSB, \atlex throughput suffers.
This is because when bulk loading with a simple key distribution, the bulk loaded structure of \atlex will be shallow, with few nodes (\cref{tab:node_size}). When the subsequently inserted keys come from a much more complex key distribution, \atlex must quickly adapt its structure to be deeper and have more nodes, which can incur significant overhead.
On the other hand, when bulk loading with a complex key distribution, the bulk loaded structure is already deep, with many nodes, and so can more readily adapt to changes in the key distribution without too much overhead.

To allow \atlex to more quickly adapt the RMI structure to radically changing key distributions: (1) we check data nodes periodically for cost deviation instead of only when the data node is full, and (2) if the number of shifts per insert in a data node is extremely high, we force the data node to split (as opposed to expanding and retraining the model). When no distribution shift occurs, these two checks  have negligible impact on performance, because checking for cost deviation has minimal overhead and cost deviation occurs infrequently (\cref{tab:node_splits}). By default, we check for cost deviation for every 64 inserts into that data node, and over 100 shifts per insert is considered extremely high.

\subsection{Extended Range Query Evaluation}
\label{sec:range_scan_appendix}
\begin{figure*}[th!]
	\centering
	\includegraphics[width=\columnwidth]
	{figures/legend_selectivity.png}
	\vspace{-1em}
	\\
	\subfloat{
		\includegraphics[width=0.24\textwidth,trim={10 10 10 8},clip]{figures/selectivity_throughput_longitudes.png}
		\label{fig:selectivity_throughput_longitudes}
	}
	~
	\subfloat{
		\includegraphics[width=0.24\textwidth,trim={9 10 10 8},clip]{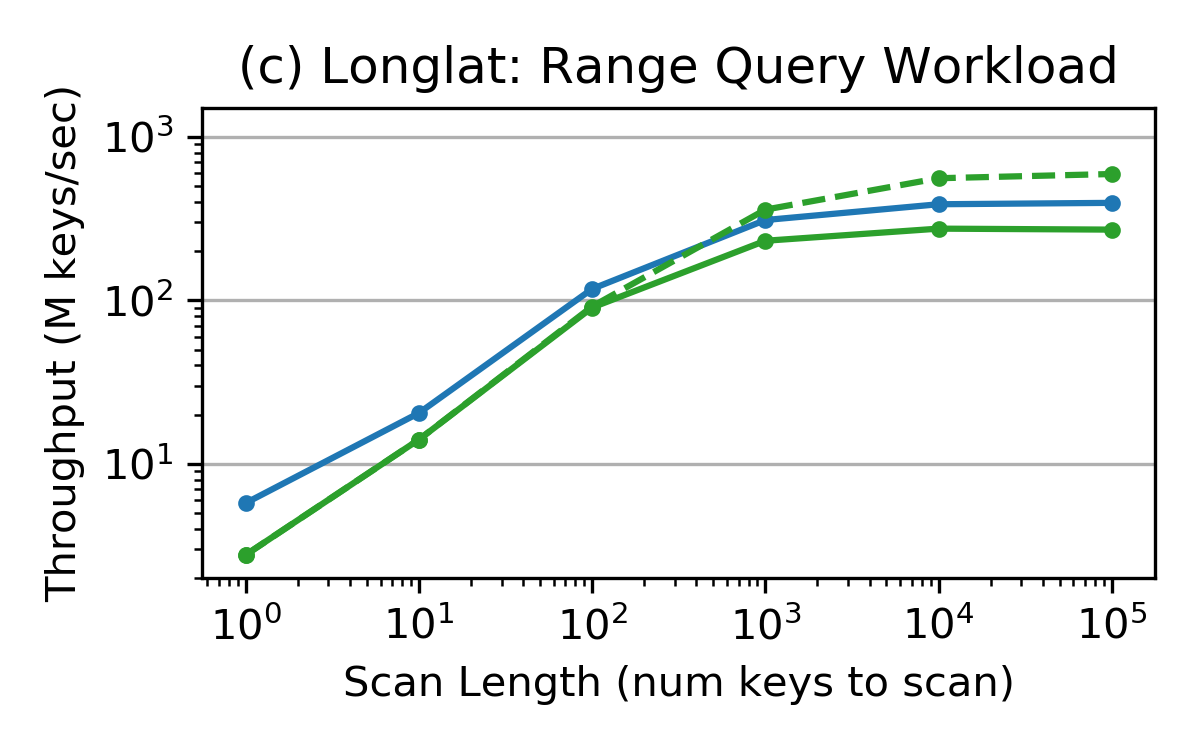}
		\label{fig:selectivity_throughput_longlat}
	}
	~
	\subfloat{
		\includegraphics[width=0.24\textwidth,trim={9 10 10 8},clip]{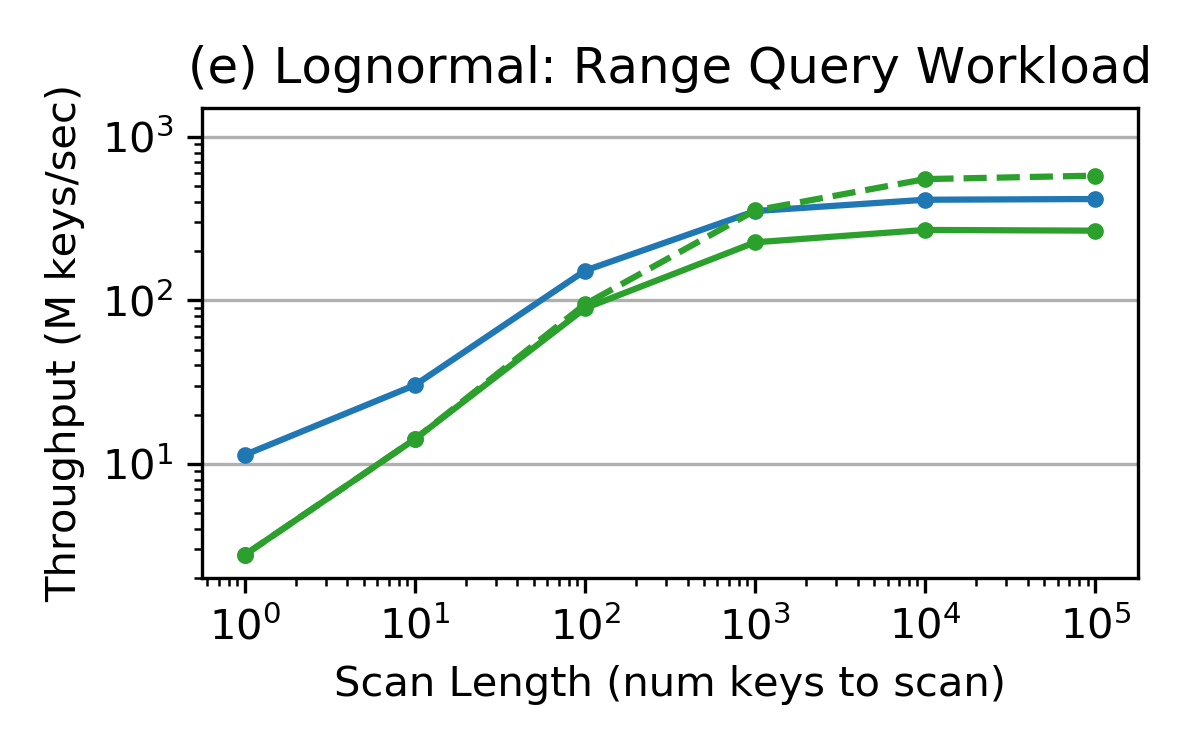}
		\label{fig:selectivity_throughput_lognormal}
	}
	~
	\subfloat{
		\includegraphics[width=0.24\textwidth,trim={9 10 10 8},clip]{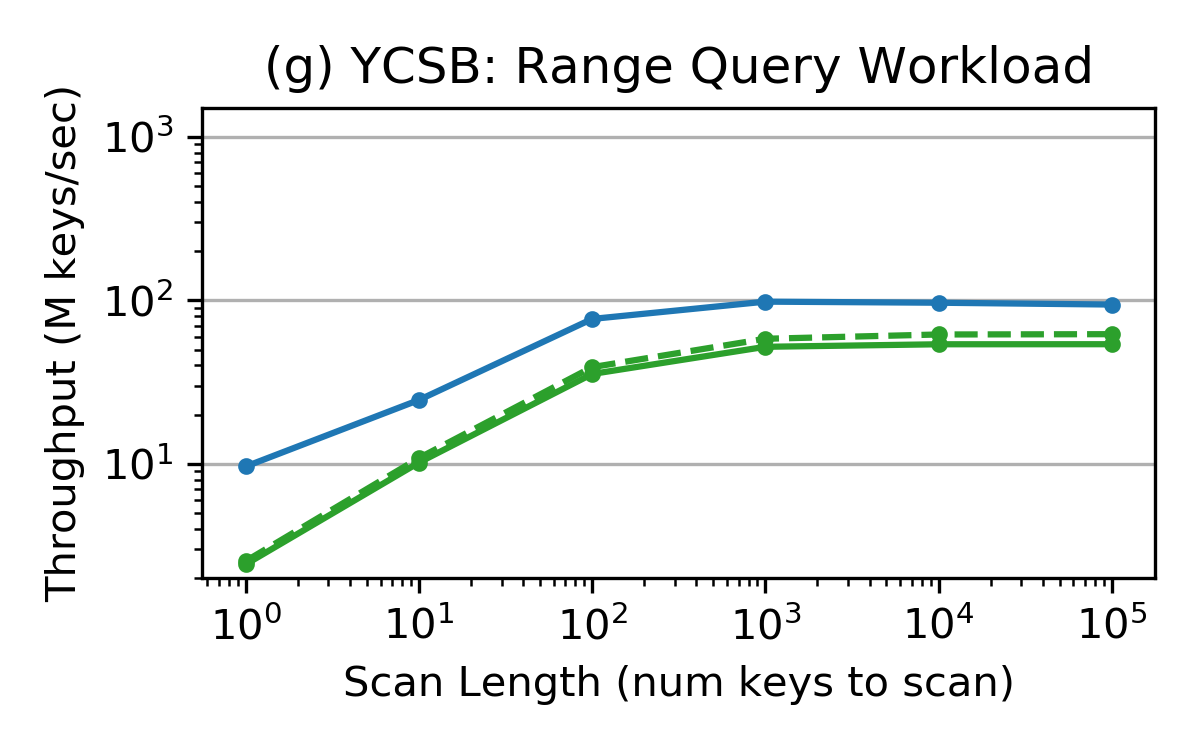}
		\label{fig:selectivity_throughput_ycsb}
	}
	\vspace{-0.5em}
	\\
	\subfloat{
		\includegraphics[width=0.24\textwidth,trim={10 10 10 8},clip]{figures/selectivity_counterpoint_longitudes.png}
		\label{fig:counterpoint_longitudes}
	}
	~
	\subfloat{
		\includegraphics[width=0.24\textwidth,trim={9 10 10 8},clip]{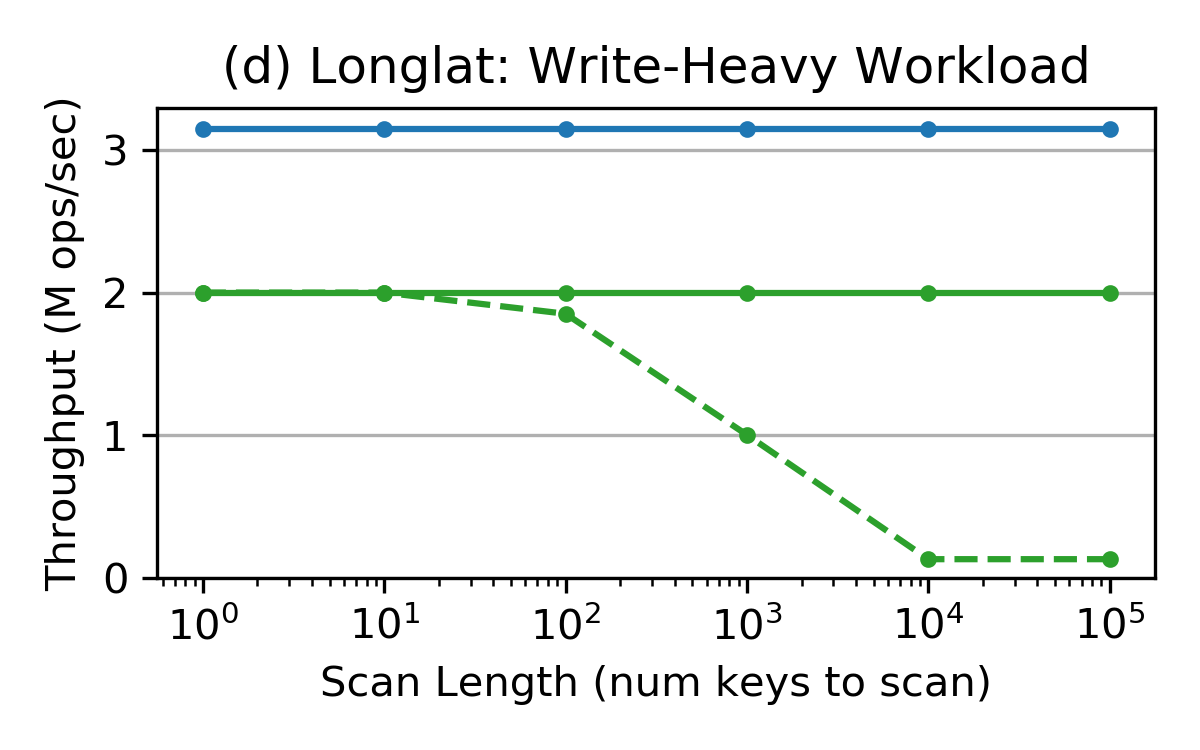}
		\label{fig:counterpoint_longlat}
	}
	~
	\subfloat{
		\includegraphics[width=0.24\textwidth,trim={9 10 10 8},clip]{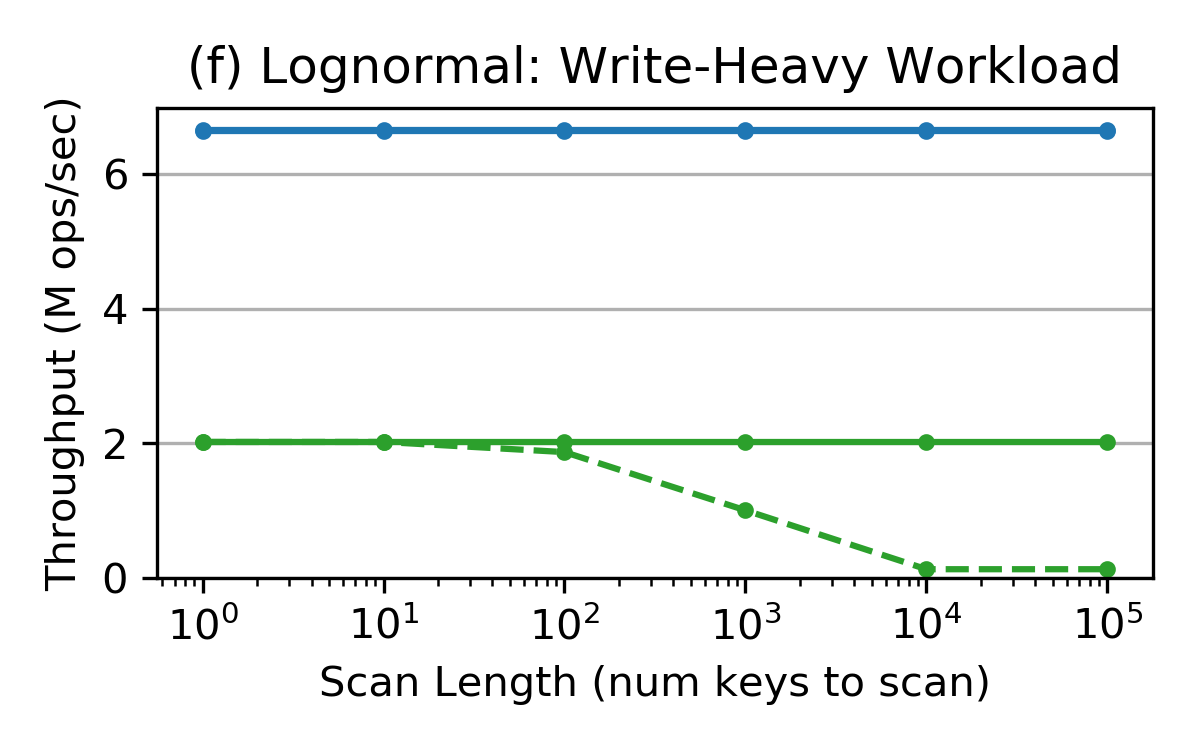}
		\label{fig:counterpoint_lognormal}
	}
	~
	\subfloat{
		\includegraphics[width=0.24\textwidth,trim={9 10 10 8},clip]{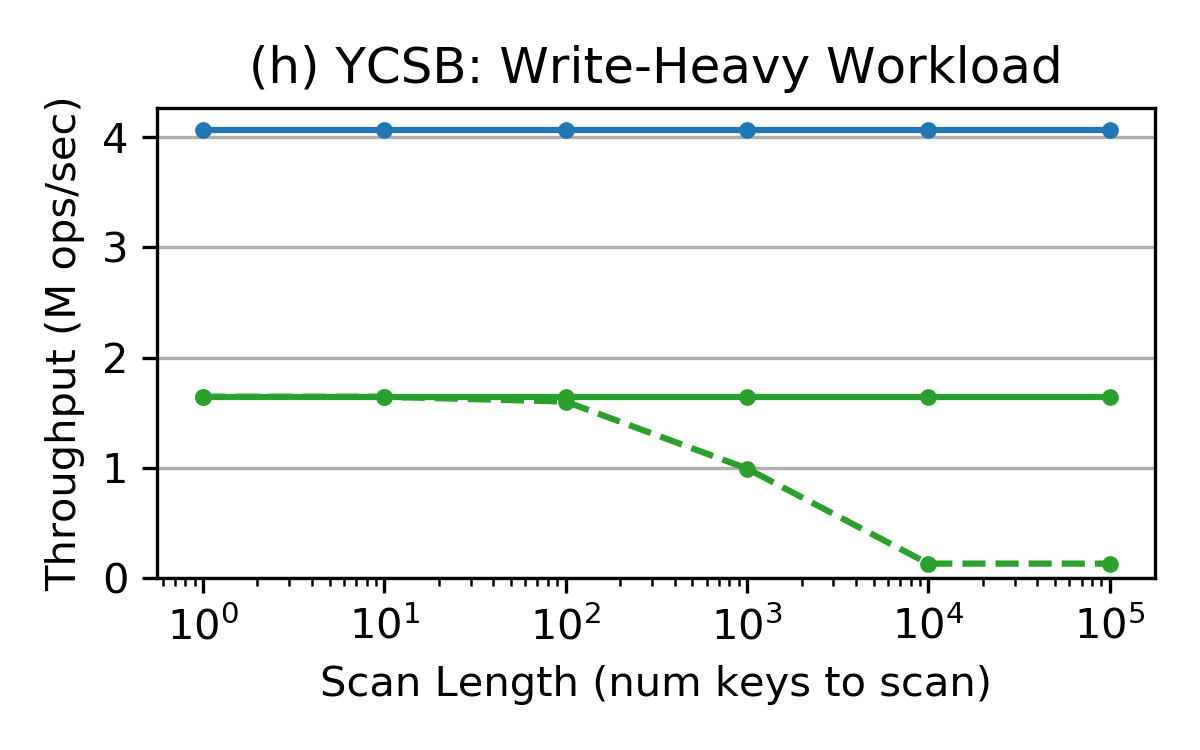}
		\label{fig:counterpoint_ycsb}
	}
	\caption{
		Across all datasets, \atlex maintains an advantage over fixed-page-size \bptree even for longer range scans.
	}
	\label{fig:selectivity_extended}
	\vspace{-1em}
\end{figure*}
\subsubsection{Varying Range Query Scan Length}
\label{subsec:range_scan_throughput}
We extend the experiment from \cref{fig:selectivity} to all four datasets.
\cref{fig:selectivity_extended} shows that across all datasets, \atlex maintains its advantage over fixed-page-size \bptree, and re-tuning the \bptree page size can lead to better range query performance but will decrease performance on point lookups and inserts.
For both \atlex and \bptree, performance on YCSB is slower than for the other three datasets because YCSB has a larger payload size, which worsens scan locality.

\subsubsection{Mixed Workload Evaluation}
\label{subsec:mixed_workload_eval}
\begin{figure}
    \centering
    \includegraphics[width=0.6\columnwidth]
    {figures/legend_scalability.png}
    \vspace{-1em}
    \\
    \subfloat{
    \includegraphics[width=\columnwidth,trim={8 8 7 8},clip]{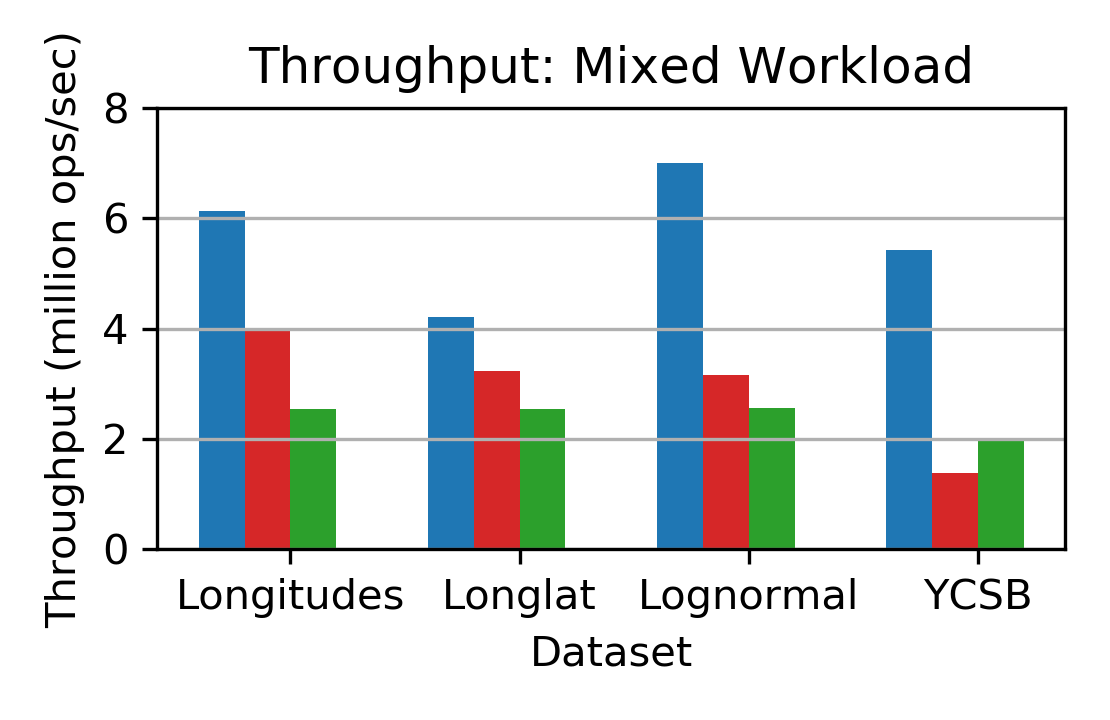}
    }
    \caption{
        \atlex maintains high performance under a mixed workload with 5\% inserts, 85\% point lookups, and 10\% short range queries.
    }
    \label{fig:mixed_workload}
\end{figure}
We evaluate a mixed workload with 5\% inserts, 85\% point lookups, and 10\% range queries with a maximum scan length of 100.
The remainder of the experimental setup is the same as in \cref{sec:exp_setup}.
\cref{fig:mixed_workload} shows that \atlex maintains its performance advantage over other indexes.
The \art implementation from~\cite{artimpl} does not support range queries.

\subsection{Drilldown into Cost Computation}
\label{sec:cost_drilldown}
In this section, we first provide more details about the cost model introduced in \cref{subsubsec:cost_models}.
We then evaluate the performance of computing costs using cost models.

\begin{table}[]
\centering
\caption{Terms used to describe the cost model}
\vspace{-1em}
\small
\label{tab:cost_model}
\begin{tabular}{@{}lll@{}}
\toprule
    \textbf{Term}     & \textbf{Description}   \\
\midrule
    $\mathcal{A}$ & An instantiation of \atlex \\
    $\mathcal{N}$ & Set of all nodes in $\mathcal{A}$ \\
    $\mathcal{D}$ & Set of data nodes in $\mathcal{A}$. This means that $\mathcal{D}\subseteq \mathcal{N}$ \\
    $S(N)$ & \makecell[l]{Average number of exponential search iterations \\ for a lookup in $N\in \mathcal{D}$} \\
    $I(N)$ & Average number of shifts for an insert into $N\in \mathcal{D}$ \\
    $K(N)$ & Number of keys in $N\in \mathcal{D}$ \\
    $F(N)$ & \makecell[l]{Fraction of operations that are inserts (as opposed \\ to lookups) in $N\in \mathcal{D}$} \\
    $C_I(N)$ & Intra-node cost of $N\in \mathcal{D}$ \\
    $C_T(N)$ & TraverseToLeaf cost of $N\in \mathcal{D}$ \\
    $D(N)$ & Depth of $N\in \mathcal{N}$ (root node has depth 0) \\
    $B(\mathcal{A})$ & Total size in bytes of all nodes in $\mathcal{A}$ \\
    $w_s,w_i,w_d,w_b$ & Fixed pre-defined weight parameters \\
\bottomrule
\end{tabular}
\end{table}

\subsubsection{Cost Model Details}
\label{sec:cost_model_details}
We formally define the cost model using the terms in \cref{tab:cost_model}.
At a high level, the intra-node cost of a data node represents the average time to perform an operation (i.e., a point lookup or insert) on that data node, and the TraverseToLeaf cost of a data node represents the time for traversing from the root node to the data node.

For a given data node $N\in \mathcal{D}$, the intra-node cost $C_I(N)$ is defined as
\begin{equation}
C_I(N) = w_sS(N) + w_iI(N)F(N)
\label{eq:intranode_cost}
\end{equation}
Both lookups and inserts must perform an exponential search, whereas only inserts must perform shifts.
This is why $I(N)$ is weighted by $F(N)$.

For a given data node $N\in \mathcal{D}$, the TraverseToLeaf cost $C_T(N)$ of traversing from the root node to $N$ is defined as
\begin{equation}
C_T(N) = w_dD(N) + w_bB(\mathcal{A})
\label{eq:traversetoleaf_cost}
\end{equation}
The depth of $N$ is the number of pointer chases needed to reach the data node.
In our cost model, every traverse to leaf has a fixed cost that is caused by the total size of the \atlex RMI, because larger RMI causes worse cache locality.

For an instantiation of \atlex $\mathcal{A}$, the cost of $\mathcal{A}$ represents the average time to perform a query (i.e., a point lookup or insert) starting from the root node, and is defined as
\begin{equation}
C(\mathcal{A}) = \frac{\sum_{N\in \mathcal{D}}(C_I(N) + C_T(N))K(N)}{\sum_{N\in \mathcal{D}}K(N)}
\label{eq:alex_cost}
\end{equation}
In other words, the cost of $\mathcal{A}$ is the sum of the intra-node cost and TraverseToLeaf cost of each data node, normalized by how many keys are contained in the data node.
We normalize because each data node does not contribute equally to average query time.
For example, a data node that has high intra-node cost but is rarely queried might not have as much impact on average query time as a data node with lower intra-node cost that is frequently queried.
We use the number of keys in each data node as a proxy for its impact on the average query time.
An alternative is to normalize using the true query access frequency of each data node.

The weight parameters $w_s,w_i,w_d,w_b$ do not need to be tuned for each dataset or workload, because they represent fixed quantities.
For our evaluation, we set $w_s=10$,$w_i=1$,$w_d=10$, and $w_b=10^{-6}$.
In terms of impact on throughput performance, these weights intuitively mean that each exponential search iteration takes 10 ns, each shift takes 1 ns, each pointer chase to traverse down one level of the RMI takes 10 ns, and each MB of total size contributes a slowdown of 1 ns due to worse cache locality.
As a side effect, $w_b$ acts as a regularizer to prevent the RMI from growing unnecessarily large.
We found that our simple cost model performed well throughout our evaluation.
However, it may still be beneficial to formulate a more complex cost model that more accurately reflects true runtime; this is left as future work.

\subsubsection{Cost Computation Performance}
\label{sec:cost_performance}
The cost of the entire RMI, $C(\mathcal{A})$, is never explicitly computed.
Instead, all decisions based on cost are made locally.
This is possible due to the linearity of the cost model.
For example, when deciding between expanding a data node and splitting the data node in two, we compare the \emph{incremental} impact on $C(\mathcal{A})$ between the two options.
This only involves computing the intra-node cost of the expanded data node and each of the two split data nodes; the intra-node costs of all other data nodes in the RMI remain the same.

Cost computation occurs at two points during \atlex operation (\cref{subsubsec:insertion}): (1) when a data node becomes full, the expected intra-node cost is compared to the empirical intra-node cost to check for cost deviation. This comparison has very low performance overhead because the empirical values of $S(N)$ and $I(N)$ are maintained by the data node, so computing the empirical intra-node cost merely involves three multiplications and an addition. (2) If cost deviation is detected, \atlex must make a cost-based decision about how to adjust the RMI structure.
This involves computing the expected intra-node cost of candidate data nodes which may be created as a result of adjusting the RMI structure.
Since the candidate data nodes do not yet exist, we must compute the expected $S(N)$ and $I(N)$, which involves implicitly building the candidate data node.
The majority of time spent on cost-based decision making is spent on computing expected $S(N)$ and $S(I)$.

\begin{table}[]
\centering
\caption{Fraction of time spent on cost computation}
\vspace{-1em}
\small
\label{tab:cost_computation}
\begin{tabular}{@{}lllll@{}}
\toprule
       & \textbf{longitudes}     & \textbf{longlat}  & \textbf{lognormal} & \textbf{YCSB} \\
\midrule
\textbf{Read-Only} & 0 & 0 & 0 & 0 \\
\textbf{Read-Heavy} & 0.000271 & 0.000214 & 0.000617 & 0 \\
\textbf{Write-Heavy} & 0.00142 & 0.00901 & 0.00452 & 0.116 \\
\textbf{Write-Only} & 0.0270 & 0.0732 & 0.0237 & 0.149 \\
\bottomrule
\end{tabular}
\end{table}

\cref{tab:cost_computation} shows the fraction of overall workload time spent on computing costs and making cost-based decisions.
On the read-only workload, no time is spent on cost-based decision because nodes never become full.
As the fraction of writes increases, an increasing fraction of time is spent on cost computation because nodes become full more frequently.
However, even on the write-only workload, cost computation takes up a small fraction of overall time spent on the workload.
YCSB sees the highest fraction of time spent on cost computation, due to two factors: data nodes are larger, so computing $S(N)$ and $I(N)$ for larger candidate nodes takes more time, and lookups and inserts on YCSB are efficient, so data nodes become full more quickly.
Longlat sees the next highest fraction of time spent on cost computation, which is due to the high frequency with which data nodes become full (\cref{tab:node_splits}).

\subsection{Comparison of Gapped Array and PMA}
\label{sec:pma}
The Gapped Array structure introduced in \cref{subsec:node_layout} has some similarities to an existing data structure known as the Packed Memory Array (PMA)~\cite{bender2006adaptive}.
In this section, we first describe the PMA, and then we describe why we choose to not use the PMA within \atlex.

Like the Gapped Array, PMA is an array with gaps.
Unlike the Gapped Array, PMA is
designed to uniformly space its gaps between elements and
to maintain this property as new elements are inserted. The
PMA achieves this goal by rebalancing local portions of the
array when the gaps are no longer uniformly spaced. Under
random inserts from a static distribution, the PMA can insert
elements in $O(\log{n})$ time, which is the same as the Gapped
Array. However, when inserts do not come from a static
distribution, the PMA can guarantee worst-case insertion in
$O(\log^2{n})$ time, which is better than the worst case of the
Gappd Array, which is $O(n)$ time.

We now describe the PMA more concretely; more details
can be found in~\cite{bender2006adaptive}. The PMA is an array whose size is
always a power of 2. The PMA divides itself into equally spaced segments, and the number of segments is also a
power of 2. The PMA builds an implicit binary tree on
top of the array, where each segment is a leaf node, each
inner node represents the region of the array covered by
its two children, and the root node represents the entire
array. The PMA places density bounds on each node of this
implicit binary tree, where the density bound determines
the maximum ratio of elements to positions in the region
of the array represented by the node. The nodes nearer
the leaves will have higher density bounds, and the nodes
nearer the root will have lower density bounds. The density
bounds guarantee that no region of the array will become
too packed. If an insertion into a segment will violate the
segment's density bounds, then we can find some local region
of the array and uniformly redistribute all elements within
this region, such that after the redistribution, none of the
density bounds are violated. As the array becomes more full,
ultimately no local redistribution can avoid violating density
bounds. At this point, the PMA expands by doubling in size
and inserting all elements uniformly spaced in the expanded
array.

We do not use the PMA as the underlying storage structure for \atlex data nodes because the PMA negates the benefits of model-based inserts, which is critical for search performance.
For example, when rebalancing a local portion of the array, the PMA spreads
the keys in the local region over more space, which worsens search performance because the keys are moved
further away from their predicted location.
Furthermore, the main benefit of PMA---efficient inserts for non-static or complex key distributions---is already achieved by \atlex through the adaptive RMI structure.
In our evaluation, we found that \atlex using data nodes built on Gapped Arrays consistently outperformed data nodes built on PMA.

\subsection{Analysis of Model-based Search}
\label{sec:model_based_search_analysis}
Model-based inserts try to place keys in Gapped Array in their predicted positions. We analyze the trade-off between Gapped Array space usage and search performance in terms of $\expfactor$, the ratio of Gapped Array slots to number of actual keys.
Assume the keys in the data node are $x_1<x_2<\cdots<x_n$, and the linear model before rounding is $y=ax+b$ when $\expfactor=1$, i.e., when no extra space is allocated. Define $\delta_i=x_{i+1}-x_i,\Delta_i=x_{i+2}-x_i$. We first present a condition under which all the keys in that data node are placed in the predicted location, i.e., search for all keys are direct hits.

\begin{theorem}\label{th:perfect}
    When $\expfactor\ge\frac{1}{a\min_{i=1}^{n-1}\delta_i}$, every key in the data node is placed in the predicted location exactly.
\end{theorem}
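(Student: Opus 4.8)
The plan is to show that the rescaled model assigns a \emph{strictly increasing} predicted position to each key, so that model-based insertion in sorted order places every key exactly where the model predicts; this is precisely the condition that every search is a direct hit.

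First I would make the predictions explicit. When the Gapped Array holds $\expfactor$ times as many slots as keys, expansion rescales the linear model by $\expfactor$ (\cref{alg:ga}), so the predicted position of key $x_i$ is $p_i=\lfloor \expfactor(ax_i+b)\rfloor$. It then suffices to prove two things: that the $p_i$ are pairwise distinct, and that model-based insertion actually realizes these positions.

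Next I would establish the key inequality. The un-rounded predictions of two consecutive keys differ by $\expfactor(ax_{i+1}+b)-\expfactor(ax_i+b)=\expfactor a\,\delta_i$. The hypothesis $\expfactor\ge\frac{1}{a\min_i\delta_i}$ yields $\expfactor a\,\delta_i\ge 1$ for every $i$. The one technical point---and the main, though mild, obstacle---is passing from a real-valued gap of at least $1$ to a gap of at least $1$ between the floors: since $\lfloor t+1\rfloor=\lfloor t\rfloor+1$ and the floor is monotone, $\expfactor(ax_{i+1}+b)\ge \expfactor(ax_i+b)+1$ forces $p_{i+1}\ge p_i+1$. Hence $p_1<p_2<\cdots<p_n$, so all predicted positions are distinct.

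Finally I would close with a short induction on the insertion order. Inserting the keys in increasing order, when $x_i$ is placed its predicted slot $p_i$ strictly exceeds $p_1,\dots,p_{i-1}$---the only occupied slots so far---so $p_i$ is empty and $x_i$ lands exactly at $p_i$ with no collision or shift. Thus every key occupies its predicted position, and any subsequent lookup predicts $p_i$ and finds $x_i$ there, a direct hit for all keys. (Note that $a>0$ here, since sorted keys map to increasing positions, so every quantity above is positive and $\expfactor\ge 1$.)
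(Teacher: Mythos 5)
Your proof is correct and follows essentially the same route as the paper's: both reduce the claim to showing that the un-rounded predictions of distinct keys differ by at least $1$ (which the hypothesis $\expfactor\ge\frac{1}{a\min_i\delta_i}$ guarantees for consecutive keys and hence for all pairs), so the rounded predicted positions are pairwise distinct. Your closing induction on insertion order merely makes explicit a step the paper leaves implicit, namely that distinct predicted slots imply collision-free model-based placement.
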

\begin{proof}
	Consider two keys in the leaf node $x_i$ and $x_j,i\neq j$. The predicted locations before rounding are $y_i$ and $y_j$, respectively. 
	When $|y_i-y_j|\ge 1$, we know that the rounded locations $\lfloor y_i\rfloor$ and $\lfloor y_j \rfloor$ cannot be equal. Under the linear model $y=\expfactor(ax+b)$, we can write the condition as:
	\begin{align}\label{eq:cond_perfect}
		|y_i-y_j|=|\expfactor a(x_i-x_j)|\ge 1
	\end{align}
	If this condition is true for all the pairs $(i,j),i\neq j$, then all the keys will have a unique predicted location. For the condition Eq.~\eqref{eq:cond_perfect} to be true for all $i\neq j$, it suffices to have:
	\begin{align}
		\min_{i=1}^{n-1} \expfactor a(x_{i+1}-x_i)\ge 1
	\end{align}   
	which is equivalent to 
	$\expfactor\ge\frac{1}{a\min_{i=1}^{n-1}\delta_i}$.
\end{proof}
We now understand that $c=1$ corresponds to the optimal space, and $c\ge \frac{1}{a\min_{i=1}^{n-1}\delta_i} = c_{max}$ corresponds to the optimal search time (ignoring the effect of cache misses). 
We now bound the number of keys with direct hits when $\expfactor<c_{max}$.
\begin{theorem}\label{th:upper}
    The number of keys placed in the predicted location is no larger than $2+\left|\{1\le i\le n-2| \Delta_i> \frac{1}{ca} \}\right|  $, where $\left|\{1\le i\le n-2| \Delta_i> \frac{1}{ca} \}\right|$ is the number of $\Delta_i$'s larger than $\frac{1}{ca}$.
\end{theorem}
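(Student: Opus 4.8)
The plan is to track, for each key inserted in sorted order via model-based insertion, the gap between where the key actually lands and where the scaled model predicts it should go, and then to charge every key that \emph{fails} to be a direct hit against a large two-apart gap $\Delta_i$. Write $y_i = c(ax_i+b)$ for the unrounded prediction of $x_i$; since the keys are sorted and $a>0$, the rounded predictions $\lfloor y_i\rfloor$ are non-decreasing, and $x_i$ is a direct hit exactly when it is placed at $\lfloor y_i\rfloor$.

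First I would establish a clean recurrence for the landing slot. Let $p_i$ denote the slot occupied by $x_i$. Because keys are inserted in sorted order and each is sent to the first empty slot at or to the right of its prediction, once $x_i$ is placed every slot in $[\lfloor y_i\rfloor,\,p_i]$ is full; combined with $\lfloor y_i\rfloor \le \lfloor y_{i+1}\rfloor$ this forces
\begin{equation}
p_{i+1} = \max\bigl(\lfloor y_{i+1}\rfloor,\; p_i+1\bigr).
\end{equation}
Introducing the slack $s_i = p_i - \lfloor y_i\rfloor \ge 0$ (with $s_1=0$) and the rounded increments $d_i = \lfloor y_{i+1}\rfloor - \lfloor y_i\rfloor \ge 0$, this becomes $s_{i+1} = \max(0,\, s_i + 1 - d_i)$, and $x_i$ is a direct hit precisely when $s_i = 0$. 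The next step connects the $d_i$ to the hypothesis: a short computation shows $d_i + d_{i+1} \ge 2$ implies $y_{i+2}-y_i > 1$, i.e. $\Delta_i > \frac{1}{ca}$, so the count $g' = \left|\{i : d_i+d_{i+1}\ge 2\}\right|$ is at most $\left|\{i : \Delta_i > \frac{1}{ca}\}\right|$. It therefore suffices to bound the number of direct hits by $2 + g'$.

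The heart of the argument is a charging scheme on the maximal runs of consecutive direct hits. Inside a run the slack stays $0$, which forces $d\ge 1$ at each internal step, so a run of length $L$ already supplies $L-2$ disjoint indices $i$ with $d_i+d_{i+1}\ge 2$. Crucially, re-entering a run after at least one non-hit requires jumping from $s>0$ back to $s=0$, which forces the entering increment to satisfy $d\ge 2$; this single large increment contributes two further indices to $g'$ (the adjacent pairs on either side of it), and because $x_1$ is always a hit, these entering indices are interior and disjoint from the indices charged to neighboring runs. Summing, every run except the first contributes at least its full length $L_r$ to $g'$, while the first run contributes at least $L_1-2$, giving $g' \ge \bigl(\sum_r L_r\bigr) - 2$; rearranged, the number of direct hits is at most $2 + g' \le 2 + \left|\{i : \Delta_i > \frac{1}{ca}\}\right|$.

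The main obstacle is the charging step, not the algebra. A naive ``at most two of any all-hit triple survive'' count is provably too weak (a $\ldots 11011011\ldots$ hit pattern would violate it combinatorially), so the proof must exploit the quantitative fact that each return to a hit run costs a prediction increment of at least $2$ and hence two separate large-$\Delta$ indices. The care then lies in verifying that the two indices charged to each run-entry lie in $[1,n-2]$ and are disjoint from the internal indices of adjacent runs, with the $+2$ slack absorbing exactly the first run's missing entry charge together with the right-boundary edge case of a length-one final run.
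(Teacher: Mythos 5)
Your proof is correct, and it takes a genuinely different route from the paper's. The paper argues by explicitly constructing a map $f:[n-2]\to[n]$ via a three-way case analysis and an induction showing that whenever $y_{i+2}-y_i\le 1$, the key $x_{f(i)}$ fails to land at its predicted slot and that these witnesses are pairwise distinct; i.e., it injects the small-$\Delta$ indices into the set of missed keys. You instead work forward from the insertion dynamics: the landing-slot recurrence $p_{i+1}=\max(\lfloor y_{i+1}\rfloor,\,p_i+1)$ (which is justified by exactly the interval-fullness invariant you state), the slack $s_i=p_i-\lfloor y_i\rfloor$ with $s_{i+1}=\max(0,\,s_i+1-d_i)$, and a charging argument over maximal runs of direct hits. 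The two bookkeeping schemes are dual --- the paper counts a miss per small $\Delta_i$, you count a large $\Delta_i$ per hit beyond the first two --- and your key quantitative observations (each internal step of a hit run forces $d\ge 1$, each re-entry forces $d\ge 2$ and hence donates two disjoint large-pair indices, with the boundary cases at $i=1$ and a length-one final run absorbed by the additive $2$) all check out, as does the implication $d_i+d_{i+1}\ge 2\Rightarrow y_{i+2}-y_i>1\Rightarrow\Delta_i>\frac{1}{ca}$. What each approach buys: the paper's injection pinpoints \emph{which} keys miss, which is occasionally useful, but the case analysis is intricate and hard to audit; your slack recurrence makes the placement process fully mechanical, is easier to verify, and yields the companion lower-bound theorem (the $l+1$ consecutive direct hits) essentially for free, since $s_i$ stays at $0$ as long as every $\delta_i\ge\frac{1}{ca}$. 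One cosmetic note: state the internal-run contribution as $\max(0,L-2)$ so that length-one and length-two runs are handled uniformly; the final inequality $g'\ge\sum_r L_r-2$ is unaffected.
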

\begin{proof}
	We define a mapping $f:[n-2]\rightarrow [n]$, where $f(i)$ is defined recursively according to the following cases:
	
	Case (1): $y_{i+2}-y_{i}>1$. Let $f(i)=1$. 
	Case (2): $y_{i+2}-y_{i}\le 1,\lfloor y_{i+1} \rfloor = \lfloor y_{i}\rfloor, f(i-1)\le i$ or $i=1$. Let $f(i)=i+1$.
	Case (3): Neither case (1) or (2) is true. Let $f(i)=i+2$.
	
	We prove that $\forall 1\le i<j\le n-2$, if $f(i)>1,f(j)>1$, then $i+1\le f(i)\le i+2, j+1\le f(j)\le j+2$, and $f(i)<f(j)$. 
	
	First, when $f(i)>1, f(j)>1$, we know that case (1) is false for both $i$ and $j$. So $f(i)$ is either $i+1$ or $i+2$, and $f(j)$ is either $j+1$ or $j+2$.
	
	Second, if $i+1<j$, then $f(i)\le i+2<j+1\le f(j)$. So we only need to prove $f(i)<f(j)$ when $i+1=j$. 
	Now consider the only two possible values for $f(j)$, $j+1$ and $j+2$, when $i+1=j$.     
	If $f(j)=j+1=i+2$, by definition we know that case (2) is true for $f(j)$. That means $f(j-1)=j$ or 1. But we already know $f(j-1)=f(i)>1$. So $f(i)=f(j-1)=j=i+1<i+2=f(j)$. If $f(j)=j+2$, then $f(i)\le i+2<j+2=f(j)$.
	
	So far, we have proved that $f(i)$ is unique when $f(i)>1$. Now we prove that the key $x_{f(i)}$ is not placed in $\lfloor y_{f(i)}\rfloor$ when $f(i)>1$, i.e., either case (2) or case (3) is true for $f(i)$. In both cases, $y_{i+2}-y_{i}\le 1$, and the rounded integers $\lfloor y_{i+2}\rfloor$ and $\lfloor y_{i}\rfloor$ must be either equal or adjacent: 
	$\lfloor y_{i+2}\rfloor - \lfloor y_{i}\rfloor\le 1 $. 
	That means $\lfloor y_{i+1} \rfloor$ must be equal to either $\lfloor y_{i+2}\rfloor $ or $\lfloor y_{i}\rfloor$. 
	
	We prove by mathematical induction. For the minimal $i$ s.t. $f(i)>1$, if case (2) is true, $\lfloor y_{i+1} \rfloor = \lfloor y_{i}\rfloor$. That means $x_{i+1}$ cannot be placed at $\lfloor y_{i+1} \rfloor$ because that location is already occupied before $x_{i+1}$ is inserted. And $f(i)=i+1$ by definition. If case (2) is false, since we already know $y_{i+2}-y_{i}\le 1$, $f(i-1)=1$ or $i=1$, it follows that $\lfloor y_{i+1} \rfloor > \lfloor y_{i}\rfloor$. That implies $\lfloor y_{i+1} \rfloor = \lfloor y_{i+2}\rfloor$. So $x_{i+2}$ cannot be placed at $\lfloor y_{i+2} \rfloor$. And $f(i)=i+2$ because case (3) happens.
	
	Given that the key $x_{f(i-1)}$ is not placed at $\lfloor y_{f(i-1)}\rfloor$ when $f(i-1)>1$, we now prove it is also true for $i$. The proof for case (2) is the same as above. If case (2) is false, and $\lfloor y_{i+1} \rfloor > \lfloor y_{i}\rfloor$, the proof is also the same as above. The remaining possibility of case (3) is that $\lfloor y_{i+1} \rfloor = \lfloor y_{i}\rfloor$, and $f(i-1)=i+1$. The inductive hypothesis states that $x_{i+1}$ is not placed at $\lfloor y_{i+1} \rfloor$. That means $x_{i+1}$ is placed at a location equal or larger than $\lfloor y_{i+1} \rfloor+1 = \lfloor y_{i}\rfloor+1$. But we also know that $\lfloor y_{i+2}\rfloor \le \lfloor y_{i}\rfloor+ 1 $. So $x_{i+2}$ cannot be placed at $\lfloor y_{i+2}\rfloor$ which is not on the right of $x_{i+1}$'s location. Since case (3) is false, $f(i)=i+2$.
	
	By induction, we show that when $f(i)>1$, the key $x_{f(i)}$ cannot be placed at $\lfloor y_{f(i)}\rfloor$. That means when we look up $x_{f(i)}$, we cannot directly hit it from the model prediction. Since we also proved that $f(i)$ has a unique value when $f(i)>1$, the number of misses from the model prediction is at least the size of $S=\{i\in [n-2]|f(i)>1\}$. By the definition of $f(i)$,  $S=\{i\in [n-2]|y_{i+2}-y_{i}\le 1\}$. Therefore, the number of direct hits by the model is at most $n-|S|=2+\left|\{i\in [n-2]|y_{i+2}-y_{i}> 1 \} \right|=2+\left|\{1\le i\le n-2| \Delta_i\ge \frac{1}{\expfactor a} \}\right| $.

\end{proof}

This result presents an upper bound on the number of direct hits from the model, which is positively correlated with $\expfactor$.
This upper bound also applies to the \kraskali, which has $\expfactor=1$. This explains why the Gapped Array has the potential to dramatically decrease the search time. Similarly, we can lower bound the number of direct hits.
\begin{theorem}
    The number of keys placed in the predicted location is no smaller than $l+1$, where $l$ is the largest integer such that $\forall 1\le i\le l, \delta_i\ge \frac{1}{\expfactor a} $, \i.e., the number of consecutive $\delta_i$'s from the beginning equal or larger than $\frac{1}{\expfactor a}$.
\end{theorem}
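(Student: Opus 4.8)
The plan is to prove something slightly stronger than the stated bound, namely that the first $l+1$ keys $x_1,\dots,x_{l+1}$ are each placed in their predicted location; since these already witness the claimed lower bound, this suffices. The whole argument rests on the same observation that drove \cref{th:perfect}: under the scaled model $y=\expfactor(ax+b)$ the pre-rounding prediction of $x_i$ is $y_i=\expfactor(ax_i+b)$, so consecutive predictions differ by $y_{i+1}-y_i=\expfactor a\,\delta_i$. Hence whenever $\delta_i\ge\frac{1}{\expfactor a}$ we get $y_{i+1}-y_i\ge 1$, which forces $\lfloor y_{i+1}\rfloor\ge\lfloor y_i\rfloor+1>\lfloor y_i\rfloor$. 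Applying this for $i=1,\dots,l$ shows that the rounded predictions satisfy $\lfloor y_1\rfloor<\lfloor y_2\rfloor<\cdots<\lfloor y_{l+1}\rfloor$, i.e. they are strictly increasing.

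First I would record that model-based insertion (the ModelBasedInsert routine in \cref{alg:ga}) processes the keys in sorted order $x_1,x_2,\dots$, placing each key at its predicted slot $\lfloor y_i\rfloor$ unless that slot is occupied, in which case it slides right to the first gap. The central step is then a short induction on $k$ establishing that for $0\le k\le l$, once $x_1,\dots,x_k$ have been inserted, the key $x_{k+1}$ lands exactly at $\lfloor y_{k+1}\rfloor$. The base case $k=0$ is immediate: the array is initially all gaps, so $x_1$ occupies $\lfloor y_1\rfloor$. For the inductive step, the hypothesis guarantees that the only occupied slots are $\lfloor y_1\rfloor,\dots,\lfloor y_k\rfloor$, all of which are at most $\lfloor y_k\rfloor$; by the strict monotonicity above, $\lfloor y_{k+1}\rfloor>\lfloor y_k\rfloor$ exceeds every occupied slot, so the predicted slot is a gap and $x_{k+1}$ is a direct hit.

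I do not expect a genuine obstacle here — this is a clean induction rather than the delicate case analysis of \cref{th:upper}. The one point that needs care is the interaction between insertion order and the ``slide to the first gap on the right'' rule: I must use that keys are inserted left to right, so that at the moment $x_{k+1}$ is placed only the strictly smaller predicted positions of $x_1,\dots,x_k$ are filled, and no later key can have pre-emptively occupied $\lfloor y_{k+1}\rfloor$. The strict monotonicity of $\lfloor y_i\rfloor$ for $i\le l+1$ is precisely what rules out collisions among these first $l+1$ keys, so the slide-right branch is never triggered for any of them. Concluding, $x_1,\dots,x_{l+1}$ are all direct hits, giving at least $l+1$ keys placed in their predicted location, as claimed.
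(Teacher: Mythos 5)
Your proof is correct, and it is exactly the argument the paper gestures at: the paper omits this proof entirely (stating only that it ``is not hard based on the ideas from the previous two proofs''), and your induction instantiates precisely the key fact from \cref{th:perfect}, namely that $\delta_i\ge\frac{1}{\expfactor a}$ forces $\lfloor y_{i+1}\rfloor>\lfloor y_i\rfloor$, combined with the left-to-right order of model-based insertion to guarantee that the first $l+1$ predicted slots are still gaps when needed. No gaps to report; your explicit handling of the insertion-order issue is the one point that genuinely needed care, and you addressed it.
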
 
The proof is not hard based on the ideas from the previous two proofs. 

}

\end{document}